\setlist[itemize,1]{leftmargin=2cm,labelsep=1cm,itemsep=20pt,topsep=10pt}
\setlist[enumerate,1]{topsep=5pt,itemsep=0pt,label=(\alph*)}
\newtheorem{theorem}{Theorem}
\newtheorem{lemma}[theorem]{Lemma}
\newtheorem{corollary}[theorem]{Corollary}
\theoremstyle{remark}
\newtheorem{example}{Example}
\newcommand{\thsp}{\hspace*{0.5pt}}
\newcommand{\hm}{\hspace{-1ex}}          
\newcommand{\hq}{\hspace{0.5ex}}         
\newcommand{\ol}{\overline}              
\newcommand{\sm}{\setminus}              
\newcommand{\tw}{\operatorname{tw}}      
\newcommand{\dist}{\operatorname{dist}}
\newcommand{\diam}[1]{\operatorname{diam}(#1)}
\newcommand{\IN}{\mathbb{N}}  
\newcommand{\Ptime}{\textsf{P}\xspace}
\newcommand{\NP}{\textsf{NP}\xspace}
\newcommand{\coNP}{\textsf{co-NP}\xspace}
\newcommand{\FP}{\textsf{FP}\xspace}
\newcommand{\numP}{\textsf{\#P}\xspace}
\newcommand{\Pspace}{\textsf{PSPACE}\xspace}
\newcommand{\Lspace}{\textsf{L}\xspace}
\newcommand{\tmix}{\ensuremath{\tau_{\textrm{mix}}}\xspace}
\newcommand{\qua}{\mbox{quasi-}}
\tikzset{
  vertex/.style     =
  { circle,
    inner sep       = -.1mm,
    fill            = none,
    minimum size    = 1.0mm,
    draw
  },
  class/.style      =
  { rectangle,
    minimum size    = 6mm,
    rounded corners = 1mm,
    inner sep       = 3pt,
    fill            = white,
    draw
  }
}
\tikzset{every picture/.style={line width=1pt}}
\tikzset{empty/.style={rectangle,draw=none,fill=none}}
\tikzset{b/.style = {circle,draw,inner sep=1pt,fill=black},
           w/.style = {circle,draw,inner sep=1pt,fill=none},
     }
\newcommand{\LR}{\ensuremath{L{:}R}}
\renewcommand{\emptyset}{\varnothing}
\newcommand{\cB}{\mathcal{B}}
\newcommand{\cC}{\mathcal{C}}
\newcommand{\cF}{\mathcal{F}}
\newcommand{\cG}{\mathcal{G}}
\newcommand{\cM}{\mathcal{M}}
\newcommand{\cS}{\mathcal{S}}
\newcommand{\eg}{\textsl{e.g.}\xspace}
\newcommand{\Adj}{\mathcal{N}}
\renewcommand{\deg}[1]{\ensuremath{\textrm{deg}(#1)}}
\newcommand{\nbh}[1]{\ensuremath{\mathcal{N}(#1)}}
\newcommand{\G}[1]{\ensuremath{\mathcal{G}(#1)}}
\newcommand{\jj}{\langle j\rangle}
\newcommand{\EUW}{E_{\mathrm{UW}}}
\newcommand{\EYZ}{E_{\mathrm{YZ}}}
\newcommand{\EW}{E_{\mathrm{W}}}
\newcommand{\EY}{E_{\mathrm{Y}}}
\newcommand{\EWX}{E_{\mathrm{WX}}}
\newcommand{\EXY}{E_{\mathrm{XY}}}
\newcommand{\EX}{E_{\mathrm{X}}}
\newcommand{\IE}{\mathbb{E}}  
\newcommand{\IH}{\mathbb{H}}  
\newcommand{\IO}{\mathbb{O}}  
\newcommand{\IS}{\mathbb{S}}  
\newcommand{\ohf}{\textsc{OddHoleFree}}
\newcommand{\ehf}{\textsc{EvenHoleFree}}
\newcommand{\perf}{\textsc{Perfect}}
\newcommand{\switch}{\textsc{Switchable}}
\newcommand{\wkch}{\textsc{WeakChordal}}
\newcommand{\bip}{\textsc{Bipartite}}
\newcommand{\och}{\textsc{OddChordal}}
\newcommand{\chord}{\textsc{Chordal}}
\newcommand{\cbg}{\textsc{ChordalBipartite}}
\newcommand{\strch}{\textsc{StrongChordal}}
\newcommand{\scs}{\textsc{StrongChordalSplit}}
\newcommand{\splt}{\textsc{Split}}
\newcommand{\conv}{\textsc{Convex}}
\newcommand{\tree}{\textsc{Forest}}
\newcommand{\bic}{\textsc{Biconvex}}
\newcommand{\perm}{\textsc{Permutation}}
\newcommand{\intl}{\textsc{Interval}}
\newcommand{\mono}{\textsc{Monotone}}
\newcommand{\qmon}{\textsc{QMonotone}}
\newcommand{\chains}{\textsc{Chains}}
\newcommand{\chp}{\textsc{ChordalPermutation}}
\newcommand{\Efcb}{\textsc{E-Free}}
\newcommand{\qchs}{\textsc{QChains}}
\newcommand{\cogr}{\textsc{Cograph}}
\newcommand{\unii}{\textsc{UnitInterval}}
\newcommand{\chain}{\textsc{Chain}}
\newcommand{\thre}{\textsc{Threshold}}
\newcommand{\qcbs}{\textsc{QCompletes}}
\newcommand{\cchn}{\textsc{Cochain}}
\newcommand{\cb}{\textsc{CompleteBipartite}}
\newcommand{\compl}{\textsc{Complete}}
\newcommand{\tww}{\textsc{TreeWidth\,$w$}}
\newcommand{\noflaw}{\textsc{Flawless}}
\newcommand{\upharpoons}{\ensuremath{\upharpoonleft\hspace{-0.63ex}\upharpoonright}}
\newcommand{\downharpoons}{\ensuremath{\downharpoonleft\hspace{-0.63ex}\downharpoonright}}
\newcommand{\ergo}{$\Downarrow$}      
\newcommand{\nerg}{$\Uparrow$}        
\newcommand{\hard}{$\upuparrows$}     
\newcommand{\easy}{$\downdownarrows$} 
\newcommand{\rmix}{$\downarrow$}      
\newcommand{\smix}{$\uparrow$}        
\newcommand{\stab}{$\downharpoons$}   
\newcommand{\nstb}{$\upharpoons$}     
\begin{document}

\title{Counting perfect matchings and the switch chain\thanks{The work reported in this paper was partially supported by EPSRC research grant EP/M004953/1 ``Randomized algorithms for computer networks''.}}
\author{Martin Dyer\thanks{School of Computing, University of Leeds, Leeds LS2~9JT, UK. Email: \texttt{m.e.dyer@leeds.ac.uk}. }
\and Haiko M\"{u}ller\thanks{School of Computing, University of Leeds, Leeds LS2~9JT, UK. Email: \texttt{h.muller@leeds.ac.uk}. }}

\date{January 22, 2018}

\maketitle

\begin{abstract}
We examine the problem of exactly or approximately counting all perfect matchings in hereditary classes of nonbipartite graphs. In particular, we consider the switch Markov chain of Diaconis, Graham and Holmes.
We determine the largest hereditary class for which the chain is ergodic, and define a large new hereditary
class of graphs for which it is rapidly mixing. We go on to show that the chain has exponential mixing time for a slightly larger class. We also examine the question of ergodicity of the switch chain in a arbitrary graph. Finally, we give exact counting algorithms for three classes.
\end{abstract}

\renewcommand{\arraystretch}{1.1}

\section{Introduction}\label{sec:intro}

In~\cite{DyJeMu17}, we examined (with Jerrum) the problem of counting all perfect matchings in some particular classes of bipartite graphs, inspired by a paper of Diaconis, Graham and Holmes~\cite{DiGrHo01} which gave applications to Statistics. That is, we considered the problem of evaluating the \emph{permanent} of the biadjacency matrix. This problem is well understood for general graphs, at least from a computational complexity viewpoint. Exactly counting perfect matchings has long been known to be \numP-complete~\cite{Valian79a}, and this remains true even for graphs of maximum degree~3~\cite{DagLub92}. The problem is well known to be in \FP for planar graphs~\cite{Kastel63}. For other graph classes, less is known, but \numP-completeness is known for chordal and chordal bipartite graphs~\cite{OkUeUn10}. In section~\ref{sec:exact}, we give positive results for three graph classes. Definitions and relationships between the classes we study are given in the Appendix. See also~\cite{BrLeSp99} and~\cite{Ridder}. The Appendix also gives a convenient summary of results.

Approximate counting of perfect matchings is known to be in randomized polynomial time for bipartite graphs~\cite{JeSiVi04}, but the complexity remains open for nonbipartite graphs. The algorithm of~\cite{JeSiVi04} is remarkable, but complex. It involves repeatedly running a rapidly mixing Markov chain on a sequence of edge-weighted graphs, starting from the complete bipartite graph, and gradually adapting the edge weights until they approximate the target graph. Simpler methods have been proposed, but do not lead to polynomial time approximation algorithms in general.

In~\cite{DyJeMu17}, we studied a particular Markov chain on perfect matchings in a graph, the \emph{switch chain}, on some \emph{hereditary} graph classes. That is, classes of graphs for which any vertex-induced subgraph of a graph in the class is also in the class. For reasons given in~\cite{DyJeMu17}, we believe that hereditary classes are the appropriate objects of study in this context.  For the switch chain, we asked: for which hereditary classes is the Markov chain ergodic and for which is it rapidly mixing? We provided a precise answer to the ergodicity question and close bounds on the mixing question. In particular, we showed that the mixing time of the switch chain is polynomial for the class of \emph{monotone graphs}~\cite{DiGrHo01} (also known as \emph{bipartite permutation graphs}~\cite{SpBrSt87} and \emph{proper interval bigraphs}~\cite{HelHua04}).

In this paper, we extend the analysis of~\cite{DyJeMu17} to hereditary classes of nonbipartite graphs. In section~\ref{sec:ergodicity} we consider the question of ergodicity, and in section~\ref{sec:quasimon} we consider rapid mixing of the switch chain. In both cases, we introduce corresponding new graph classes, and examine their relationship to known classes. In section~\ref{sec:slowmixing}, we show that the switch chain can have exponential mixing time in some well known hereditary graph classes.

Some reasons for restricting attention to hereditary classes are given in~\cite{DyJeMu17}. However, we might consider the class of \emph{all} graphs on which the switch chain is ergodic. In section~\ref{sec:generalswitch}, we discuss the question of deciding ergodicity of the switch chain for an arbitrary graph. We give no definitive answer, but give some evidence that polynomial time recognition is unlikely.

Finally, in section~\ref{sec:exact}, we give positive results for \emph{exactly} counting perfect matchings in some ``small'' graph classes.

\subsection{Notation and definitions}\label{sec:notation}
Let $\IN=\{1,2,\ldots\}$ denote the natural numbers, and $\IN_0=\IN\cup\{0\}$. If $n\in\IN$, let $[n]=\{1,2,\ldots,n\}$.
For a set $S$, $S^{(2)}$ will denote the set of subsets of $V$ of size exactly 2. For a singleton set, we will generally omit the braces, where there is no ambiguity. Thus, for example $S\cup x$ will mean $S\cup \{x\}$.

Let $G=(V,E)$ be a (simple, undirected) graph with $|V|=n$. More generally, if $H$ is any graph, we denote its vertex set by $V(H)$, and its edge set by $E(H)$. We write an $e\in E$ between $v$ and $w$ in $G$ as $e=vw$, or $e=\{v,w\}$ if the $vw$ notation is ambiguous. The degree of a vertex $v\in V$ will be denoted by \deg{v}, and its neighbourhood by \nbh{v}.

The \emph{empty graph} $G=(\emptyset,\emptyset)$ is the unique graph with $n=0$. We include the empty graph in the class of connected graphs.  Also, $G=(V,V^{(2)})$, is the complete graph on $n$ vertices. The \emph{complement} of any graph $G=(V,E)$ is $\ol{G}=(V,V^{(2)} \sm E)$.

If $U\subseteq V$, we will write $G[U]$ for the subgraph of $G$ induced by $U$.
Then a class $\cC$ of graphs is called \emph{hereditary} if $G[U]\in\cC$ for all $G\in\cC$ and $U\subseteq V$. For a cycle $C$ in $G$, we will write $G[C]$ as shorthand for $G[V(C)]$.
Definitions of the hereditary graph classes we consider, and relationships between them, are given in the Appendix.

Let $L,R$ be a \emph{bipartition} of $V$, i.e. $V=L\cup R$, $L\cap R=\emptyset$. Then we will denote the $L,R$ cut-set by $\LR=\{vw\in E: v\in L, w\in R\}$. The associated bipartite graph $(L\cup R,\LR)$ will be denoted by $G[\LR]$. If $C$ is an even cycle in $G$, then an \emph{alternating bipartition} of $C$ assigns the vertices of $C$ alternately to $L$ and $R$ as the cycle is traversed.

A \emph{matching} $M$ is an independent set of  edges in $G$. That is $M\subseteq E$, and $e\cap e'=\emptyset$ for all $\{e,e'\}\in M^{(2)}$. A \emph{perfect} matching $M$ is such that, for every $v\in V$, $v\in e$ for some $e\in M$. For a perfect matching $M$ to exist, it is clearly necessary, but not sufficient, that $n$ is even. Then $|M|=n/2$. A \emph{near-perfect} matching $M'$ is one with $|M'|=n/2-1$.  The empty graph has the unique perfect matching $\emptyset$.

A \emph{hole} in a graph $G$ will mean a chordless cycle of length greater than 4, as in e.g.~\cite[Definition~1.1.4]{BrLeSp99}. Note that the term has been also used to mean a chordless
cycle of length at least 4, as in e.g.~\cite{Vuskov10}.

\subsection{Approximate counting and the switch chain}\label{sec:switch}

Sampling a perfect matching almost uniformly at random from a graph $G=(V,E)$ is known to be computationally equivalent to approximately counting all perfect matchings~\cite{JeVaVa86}. The approximate counting problem was considered by Jerrum and Sinclair~\cite{JerSin89}, using a Markov chain similar to that suggested by Broder~\cite{Broder}. They showed that their chain has polynomial time convergence if ratio of the number of near-perfect matchings to the number of perfect matchings in the graph is polynomially bounded as a function of $n$. They called graphs with this property \emph{P-stable}, and it was investigated in~\cite{JeSiMc92}. However, many simple classes of graphs fail to have this property (e.g.~chain graphs; in the Appendix, we indicate which of the classes we consider are P-stable.) A further difficulty with this algorithm is that the chain will usually produce only a near-perfect matching, and may require many repetitions before it produces a perfect matching.

For any \emph{bipartite} graph, the Jerrum, Sinclair and Vigoda algorithm~\cite{JeSiVi04} referred to above gives polynomial time approximate counting of all perfect matchings. This is a major theoretical achievement, though the algorithm seems too complicated to be used in practice. Moreover, the approach does not appear to extend to nonbipartite graphs, since odd cycles are problematic.

For this reason, a simpler Markov chain was proposed in~\cite{DiGrHo01}, which was called the \emph{switch chain} in~\cite{DyJeMu17}. This mixes rapidly in cases that the Jerrum-Sinclair chain does not, and vice versa, so the two cannot be compared. For a graph $G$ possessing some perfect matching $M_0$, the switch chain maintains a perfect matching $M_t$ for each $t \in [t_{\max}]$, whereas the Jerrum-Sinclair chain does not. It may be described as follows.

\begin{enumerate}[label={(\arabic*)}]
  \item[ ] \textbf{Switch chain}
  \setcounter{enumi}{0}
  \item Set $t\gets0$, and find any perfect matching $M_0$ in $G$.
  \item \label{chain:step2}
  Choose $v,v'\in V$, uniformly at random. Let $u,u'\in V$ be such that $uv,\,u'v'\in M_t$.
  \item \label{chain:step3}
  If $u'v,\,uv'\in E$, set $M_{t+1} \gets \{u'v,uv'\} \cup M_t \sm \{uv,u'v'\}$.
  \item Otherwise, set $M_{t+1}\gets M_t$.
  \item Set $t\gets t+1$. If $t<t_{\max}$, repeat from step~\ref{chain:step2}. Otherwise, stop.
\end{enumerate}

A transition of the chain is called a \emph{switch}. This chain is clearly symmetric on the set of perfect matchings, and hence will converge to the uniform distribution on perfect matchings, provided the chain is ergodic. It is clearly aperiodic, since there is delay probability of at least $1/n$ at each step, from choosing $v=v'$ in step~\ref{chain:step2}. For any $v\neq v'$ the transition probability is at most $2/n^2$, since the choice $v,v'$ can also appear as $v',v$, but the transition may fail in step~\ref{chain:step3}.

\section{Ergodicity of the switch chain}\label{sec:ergodicity}

For a graph $G$, we define the \emph{transition graph} \G{G} of the switch chain on $G$ as having a vertex for each perfect matching $M$ in $G$, and an edge between every two perfect matchings $M$, $M'$ which differ by a single switch. Then we will say $G$ is \emph{ergodic} if $\G{G}$ is connected. Since the switch chain is aperiodic, this corresponds to the usual definition of ergodicity when \G{G} is non-empty.  A class $\cC$ of graphs will be called ergodic if every $G\in\cC$ is ergodic.

As in~\cite{DyJeMu17}, we say that a graph $G=(V,E)$ is \emph{hereditarily ergodic} if, for every $U\subseteq V$, the induced subgraph $G[U]$ is ergodic.  As discussed in \cite{DyJeMu17}, this notion is closely related to that of \emph{self-reducibility} (see, for example~\cite{JerSin89}).
A class  of graphs $\cC$  will be called hereditarily ergodic if every $G\in\cC$ is hereditarily ergodic. We characterise the class of all hereditarily ergodic graphs below. This is the largest hereditary subclass of the (non-hereditary) class of ergodic~graphs.

If $\G{G}$ is the empty graph, then $G$ is ergodic. If $G$ has a unique perfect matching, $G$ is ergodic, since $\G{G}$ has a single vertex, and so is connected. Otherwise, let $X$, $Y$ be any two distinct perfect matchings in $G=(V,E)$. Then $X$ is connected to $Y$ in \G{G} if there is a sequence of switches in $G$ which \emph{transforms} $X$ to $Y$.  Since, $X\oplus Y$ is a set of vertex-disjoint alternating even cycles, it suffices to transform $X$ to $Y$ independently in each of these cycles. Thus, since we are dealing with a hereditary class, we must be able to transform $X$ to $Y$ in the graph induced in $G$ by every even cycle. Therefore, it is sufficient to decide whether or not  we can transform $X$ to $Y$ when $X\cup Y$ is an alternating Hamilton cycle in $G$.

Thus, let $H$: $v_1\to v_2\to\ldots\to v_{2r}\to v_1$ be a Hamilton cycle in the graph $G=(V,E)$, where $V=\{v_1,v_2,\ldots ,v_{2r}\}$. Let $X$, $Y$ be the two perfect matchings which form $H$ and suppose, without loss of generality, that $X=\{v_{2i-1}v_{2i}:i\in[r]\}$. Now the first switch in a sequence from $X$ to $Y$ must use a 4-cycle in $G$ with two edges  $v_{2i-1}v_{2i},\, v_{2j-1}v_{2j}\in X$, with $1\leq i< j\leq r$. The other two edges of the cycle must be either $v_{2i-1}v_{2j}$, $v_{2i}v_{2j-1}$ or $v_{2i-1}v_{2j-1}$, $v_{2i}v_{2j}$. We call the first an \emph{odd} switch, and the second an \emph{even} switch, see Fig.~\ref{fig060}.

The only switch that can change an edge in $X$ to an edge in $Y$ must have $v_{2i}v_{2j-1}\in Y$, and hence $j=i+1$. We will call this a \emph{boundary} switch. Clearly a boundary switch is an odd switch, see Fig.~\ref{fig060}.
\tikzset{vertex/.style={circle,draw,inner sep=1pt}}
\begin{figure}[ht]
\centering{%
\scalebox{0.8}{\begin{tikzpicture}[yscale=1.5,xscale=3]
\draw (0,0) node[vertex,label=left:$v_{2i-1}$] (2i-1){} (0,0.5) node[vertex,label=left:$v_{2i}$] (2i){}
(1,0.5) node[vertex,label=right:$v_{2j-1}$] (2j-1){} (1,0) node[vertex,label=right:$v_{2j}$] (2j){} ;
\draw (2i-1) .. controls (0,-1.5) and (1,-1.5).. (2j);
\draw (2i) .. controls (0,2) and (1,2).. (2j-1);
\draw (2i-1)--(2i) (2j-1)--(2j) ;
\draw[densely dotted,line width=1pt] (2i-1)--(2j) (2j-1)--(2i) ;
\draw (0.9,1.5) node[empty] {$P_1$}  (0.95,-0.85) node[empty] {$P_2$} ;
\end{tikzpicture}}\hspace*{1cm}
\scalebox{0.8}{\begin{tikzpicture}[yscale=1.5,xscale=3]
\draw (0,0) node[vertex,label=left:$v_{2i-1}$] (2i-1){} (0,0.5) node[vertex,label=left:$v_{2i}$] (2i){}
(1,0.5) node[vertex,label=right:$v_{2j-1}$] (2j-1){} (1,0) node[vertex,label=right:$v_{2j}$] (2j){} ;
\draw (2i-1) .. controls (0,-1.5) and (1,-1.5).. (2j);
\draw (2i) .. controls (0,2) and (1,2).. (2j-1);
\draw (2i-1)--(2i) (2j-1)--(2j) ;
\draw[densely dotted,line width=1pt]  (2i-1)--(2j-1) (2i)--(2j) ;
\draw (0.9,1.5) node[empty] {$P_1$}  (0.95,-0.85) node[empty] {$P_2$} ;
\end{tikzpicture}}\hspace*{1cm}
\raisebox{-1.5mm}{\scalebox{0.8}{\begin{tikzpicture}[yscale=2.2,xscale=3.1]
\draw (0,0) node[vertex,label=left:$v_{2i-1}$] (2i-1){} (0.2,0.5) node[vertex,label=left:$v_{2i}$] (2i){}
(0.8,0.5) node[vertex,label=right:$v_{2i+1}$] (2i+1){} (1,0) node[vertex,label=right:$v_{2i+2}$] (2i+2){} ;
\draw (2i-1) .. controls (0,-1.5) and (1,-1.5).. (2i+2);
\draw (2i-1)--(2i)--(2i+1)--(2i+2) ;
\draw[densely dotted,line width=1pt]  (2i-1)--(2i+2) ;
\draw  (0.5,0.625) node[empty] {$P_1$} (0.95,-0.85) node[empty] {$P_2$} ;
\end{tikzpicture}}}}\vspace{1ex}
\caption{An odd switch, an even switch and a boundary switch}\label{fig060}
\end{figure}
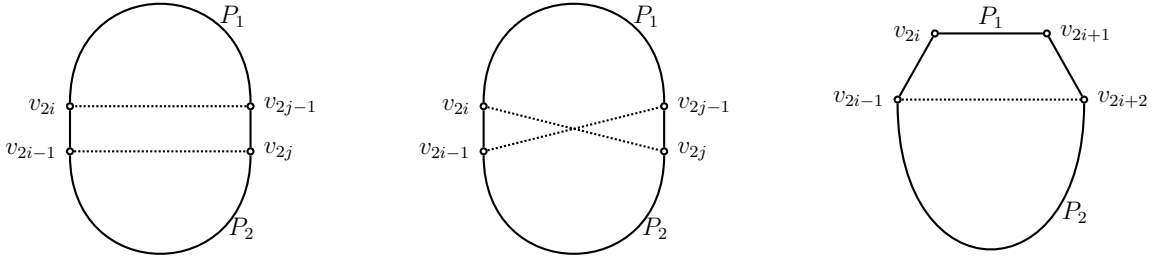

The edges $v_{2i-1}v_{2i},\,v_{2j-1}v_{2j}$ divide $H$ into two vertex-disjoint paths $P_1:v_{2i}\to v_{2j-1}$ and $P_2:v_{2j}\to v_{2i-1}$, see Fig.~\ref{fig060}.  Thus performing an odd switch on a 4-cycle $(v_{2i-1},v_{2i},v_{2j-1},v_{2j})$ results in two smaller alternating cycles $P_1\cup v_{2i-1}v_{2j}$ and $P_2\cup v_{2i}v_{2j-1}$, on which we can use induction, since we are in a hereditary class. However, performing an even switch on a 4-cycle $(v_{2i-1},v_{2j-1},v_{2i},v_{2j})$ simply produces a new Hamilton cycle
$H'=X'\cup Y$, where $X'= X\sm\{v_{2i-1}v_{2i},\, v_{2j-1}v_{2j}\}\cup\{v_{2i-1}v_{2j-1},\, v_{2i}v_{2j}\}$, see Fig.~\ref{fig060}.

An edge $v_iv_j\in E\,\sm\, C $ is a \emph{chord} of a cycle $C$. If $C$ is an even cycle, it is an \emph{odd} chord if $j-i=1\pmod 2$ and \emph{even} if $j-i=0\pmod 2$. Note that $j-i=i-j\pmod 2$, so the definition is independent of the order of $i$ and $j$ on $C$.   Note that even and odd chords are not defined for odd cycles.

An odd chord divides an even cycle $C$ into two even cycles, sharing an edge. Thus an odd switch involves two odd chords, and an even switch involves two even chords. However, a 4-cycle with two odd chords may not be an odd switch and a 4-cycle with two even chords may not be an even switch, if the cycle edges involved are not both in $X$ or $Y$. We call these \emph{illegal} switches, see Fig.\ref{fig070}.
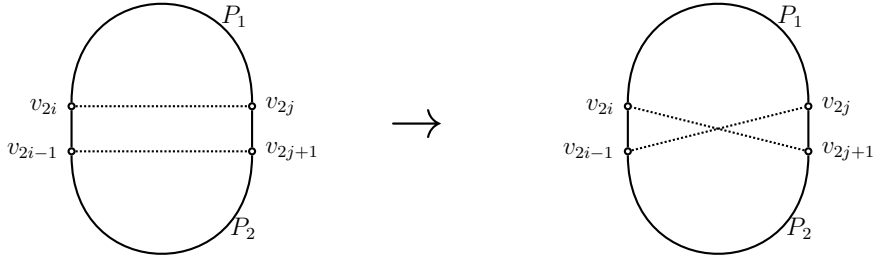
\begin{figure}[H]
\centering{%
\scalebox{0.8}{\begin{tikzpicture}[yscale=1.5,xscale=3]
\draw (0,0) node[vertex,label=left:$v_{2i-1}$] (2i-1){} (0,0.5) node[vertex,label=left:$v_{2i}$] (2i){}
(1,0.5) node[vertex,label=right:$v_{2j}$] (2j){} (1,0) node[vertex,label=right:$v_{2j+1}$] (2j+1){} ;
\draw (2i-1) .. controls (0,-1.5) and (1,-1.5).. (2j+1);
\draw (2i) .. controls (0,2) and (1,2).. (2j);
\draw (2i-1)--(2i) (2j)--(2j+1) ;
\draw[densely dotted,line width=1pt] (2i)--(2j) (2i-1)--(2j+1) ;
\draw (0.9,1.5) node[empty] {$P_1$}  (0.95,-0.85) node[empty] {$P_2$} ;
\end{tikzpicture}}\raisebox{2cm}{\qquad\LARGE$\rightarrow$\qquad}
\scalebox{0.8}{\begin{tikzpicture}[yscale=1.5,xscale=3]
\draw (0,0) node[vertex,label=left:$v_{2i-1}$] (2i-1){} (0,0.5) node[vertex,label=left:$v_{2i}$] (2i){}
(1,0.5) node[vertex,label=right:$v_{2j}$] (2j){} (1,0) node[vertex,label=right:$v_{2j+1}$] (2j+1){} ;
\draw (2i-1) .. controls (0,-1.5) and (1,-1.5).. (2j+1);
\draw (2i) .. controls (0,2) and (1,2).. (2j);
\draw (2i-1)--(2i) (2j+1)--(2j) ;
\draw[densely dotted,line width=1pt]  (2i-1)--(2j) (2j+1)--(2i) ;
\draw (0.9,1.5) node[empty] {$P_1$}  (0.95,-0.85) node[empty] {$P_2$} ;
\end{tikzpicture}}}\vspace{1ex}
\caption{An illegal switch}\label{fig070}
\end{figure}

We  define the graph class \och\ as follows. A  graph $G=(V,E)$ is odd chordal if and only if every even cycle $C$ in $G$ of length six or more has an odd chord. Note that this is  a hereditary graph property.  The switch chain is hereditarily ergodic on the class \och, but it is not the largest class with this property.

Let  $(v_{2i-1},v_{2i}, v_{2j},v_{2j-1})$ be an even switch for the even cycle $C$, with cycle segments $P_1$, $P_2$ as above. Then a \emph{crossing chord} is an edge $(v_k,v_l)$ such that $v_k\in P_1$, $v_l\in P_2$, see Fig.~\ref{fig040}.

We  can now define our target graph class \switch.  A graph $G=(V,E)$ is switchable if and only if every even cycle $C$ in $G$ of length 6 or more has an odd chord, or has an even switch with a crossing chord. Clearly we may assume that the crossing chord is an even chord. This class is also hereditary, and the definition implies $\och\subseteq\switch$.

Our choice of names for the classes \och\ and \switch\ is obvious from the above, and Theorem~\ref{thm:ergodic} below.
\begin{figure}[ht]
\begin{center}
\scalebox{0.8}{\begin{tikzpicture}[yscale=1.5,xscale=3]
\draw (0,0) node[vertex,label=left:$v_{2i-1}$] (2i-1){} (0,0.5) node[vertex,label=left:$v_{2i}$] (2i){}
(1,0.5) node[vertex,label=right:$v_{2j-1}$] (2j-1){} (1,0) node[vertex,label=right:$v_{2j}$] (2j){} ;
\draw (2i-1) .. controls (0,-1.5) and (1,-1.5).. (2j);
\draw (2i) .. controls (0,2) and (1,2).. (2j-1);
\draw (2i-1)--(2i) (2j-1)--(2j) ;
\draw[densely dotted,line width=1pt] (2i)--(2j) (2i-1)--(2j-1) ;
\draw (0.25,1.5) node[vertex,fill=white,label=above:$v_k$] (k) {} (0.25,-1) node[vertex,fill=white,label=below:$v_{l}$] (l) {};
\draw[densely dotted,line width=1pt]  (k)--(l) ;
\draw (0.9,1.5) node[empty] {$C_1$}  (0.95,-0.9) node[empty] {$C_2$} ;
\end{tikzpicture}}\raisebox{2cm}{\qquad\LARGE$\rightarrow$\qquad}
\scalebox{0.8}{\begin{tikzpicture}[yscale=1.5,xscale=3]
\draw (0,0) node[vertex,label=left:$v_{2i-1}$] (2i-1){} (0,0.5) node[vertex,label=left:$v_{2j-1}$] (2j-1){}
(1,0.5) node[vertex,label=right:$v_{2i}$] (2i){} (1,0) node[vertex,label=right:$v_{2j}$] (2j){} ;
\draw (2i-1) .. controls (0,-1.5) and (1,-1.5).. (2j);
\draw (2j-1) .. controls (0,2) and (1,2).. (2i);
\draw (2j)--(2i) (2i-1)--(2j-1) ;
\draw[densely dotted,line width=1pt] (2i)--(2i-1) (2j-1)--(2j) ;
\draw (0.85,1.35) node[vertex,fill=white,label=above:$v_k$] (k) {} (0.3,-1.05) node[vertex,fill=white,label=below:$v_{l}$] (l) {};
\draw[densely dotted,line width=1pt]  (k)--(l) ;
\draw (0.1,1.55) node[empty] {$C_1$}  (0.95,-0.9) node[empty] {$C_2$} ;
\end{tikzpicture}}
\end{center}
\caption{An even switch with a crossing chord}\label{fig040}
\end{figure}
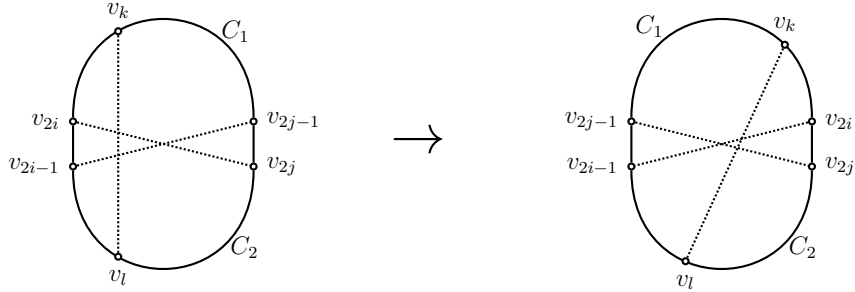

\begin{theorem}\label{thm:ergodic}
 A graph  $G=(V,E)$ is hereditarily ergodic if and only if $G\in\switch$.
\end{theorem}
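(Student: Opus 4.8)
The plan is to establish the two implications separately, relying on the reduction already set up in the text. Since both $\switch$ and the class of hereditarily ergodic graphs are hereditary, and since $\G{G}$ is connected exactly when each alternating even cycle $X\oplus Y$ can be resolved inside the subgraph it induces, everything comes down to the alternating Hamilton cycle $H\colon v_1\to\cdots\to v_{2r}\to v_1$ with $X=\{v_{2i-1}v_{2i}:i\in[r]\}$ and $Y=E(H)\setminus X$: I have to show that $G\in\switch$ forces $X$ and $Y$ into one component of $\G{G}$, and that a ``bad'' even cycle (one of length $\ge6$ with no odd chord and no even switch with a crossing chord) certifies non-ergodicity of the graph it induces.

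For the direction $\switch\Rightarrow$ hereditarily ergodic it suffices, by heredity, to show that every $G\in\switch$ is ergodic, and I would induct on $|V|$, the case $|V|\le4$ being immediate. Let $|V|=2r\ge6$ and take $H$ as above. If $H$ has an odd chord $c=v_av_b$ then, as noted before the statement, $c$ splits $H$ into two strictly smaller even cycles $C_1,C_2$ sharing the edge $c$; tracking the alternation of $X$ and $Y$ along the two arcs shows that $X\cap E(C_q)$ and $Y\cap E(C_q)$, with $c$ adjoined to whichever of the two does not already cover the ends of $c$, are perfect matchings $X_q,Y_q$ of $C_q$ forming $C_q$. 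Since $G[V(C_1)],G[V(C_2)]\in\switch$ are strictly smaller, the induction hypothesis transforms $X_1$ into $Y_1$ inside $G[V(C_1)]$ and then $X_2$ into $Y_2$ inside $G[V(C_2)]$; as $V(C_1)\cap V(C_2)=\{v_a,v_b\}$ and the part of the matching held fixed lies on the complementary vertices, each such move lifts to a legal switch of $G$, and the two stages compose to carry $X$ to $Y$. If $H$ has no odd chord then, by $G\in\switch$, it has an even switch $\sigma$ with a crossing chord $c$, which we may take to be an even chord; performing $\sigma$ is a single legal switch producing an alternating Hamilton cycle $H'=X'\cup Y$ on the same vertex set, and because $\sigma$ leaves one of its two arcs intact while reversing the other, every chord crossing $\sigma$ changes parity. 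Hence $c$ is now an odd chord of $H'$, and we are back in the previous case at the cost of one extra switch.

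For the converse I argue the contrapositive. If $G\notin\switch$, fix a bad even cycle $C$ of length $2r\ge6$ and let $X,Y$ be the perfect matchings with $X\oplus Y=C$; I will show $G[V(C)]$ is not ergodic. Call a Hamilton cycle $H$ of $G[V(C)]$ \emph{$Y$-rigid} if $Y\subseteq E(H)$, $H$ has no odd chord, and $H$ has no even switch with a crossing chord, so that $C$ is itself $Y$-rigid. The key point is that $Y$-rigidity propagates: starting from the matching $M=E(H)\setminus Y$ of a $Y$-rigid $H$, any legal switch is necessarily an even switch of $H$ --- the absence of an odd chord excludes odd and boundary switches of $H$, and a $4$-cycle cannot be carved out of a cycle of length $\ge6$ --- such a switch produces $M'$ with $M'\cup Y$ a Hamilton cycle $H'$ still containing $Y$, and $H'$ is again $Y$-rigid. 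Granting this, every perfect matching reachable from $X$ in $\G{G[V(C)]}$ equals $E(H')\setminus Y$ for some $Y$-rigid $H'$ and so differs from $Y$ (a $Y$-rigid $H'$ is a genuine cycle, so the corresponding matching is not $Y$). Hence $X$ and $Y$ lie in different components, $G[V(C)]$ is not ergodic, and $G$ is not hereditarily ergodic.

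It remains to justify the propagation claim, and I expect this to be the main obstacle. That $H'$ has no odd chord is the easy half: the only chords of $H'$ that are not also chords of $H$ are the two $H$-edges deleted by $\sigma$, and, being edges, they cross $\sigma$ and hence become even chords of $H'$; every other chord of $H'$ is a chord of $H$ whose parity is unchanged, because $H$ --- having neither an odd chord nor an even switch with a crossing chord --- has no chord crossing $\sigma$ at all. The delicate half is that $H'$ inherits ``no even switch with a crossing chord''. For this one analyses how a hypothetical even switch $\sigma'$ of $H'$ with a crossing chord $c'$ interacts with $\sigma$: if $\sigma'$ avoids the two edges $\sigma$ adds and the two it removes, then $\sigma'$ is confined to the arc of $H$ that $\sigma$ leaves intact and, together with $c'$, transfers unchanged to a forbidden even switch with a crossing chord in $H$; the remaining cases, where $\sigma'$ reuses one of those four edges, are handled by the same parity bookkeeping applied to the two reconnecting edges of $\sigma$. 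Once this lemma is in place, both directions of the theorem follow.
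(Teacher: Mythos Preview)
Your proposal is correct and follows essentially the same route as the paper: both directions are argued by induction on the alternating Hamilton cycle, using an odd chord to split the cycle into two smaller ones, or a single even switch to manufacture an odd chord from the crossing chord; and for the converse both you and the paper show that the ``bad cycle'' property (no odd chord, no even switch with a crossing chord) is preserved under every available switch, so that $Y$ is never reached. You are in fact more explicit than the paper about where the work lies---the paper dispatches the propagation of ``no even switch with a crossing chord'' in a single sentence, whereas you correctly flag it as the delicate step and set up the right case split (the key observation, which you use, is that since $\sigma$ itself has no crossing chord, every chord of $H$ lies entirely inside $P_1$ or entirely inside $P_2$, so a $\sigma'$ avoiding the four distinguished edges is confined to one $P_i$ and transfers verbatim to $H$). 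One minor wording slip: $\sigma$ does not ``leave an arc intact''---it keeps both arcs $P_1,P_2$ and reconnects them---so the correct statement is that $\sigma'$ is confined to one of $P_1,P_2$; and when you assert that $c'$ transfers to $H$, you should note that the only $H'$-chords not lying inside a single $P_i$ are the two removed edges $v_{2i-1}v_{2i},\,v_{2j-1}v_{2j}$, and a short check shows these have both endpoints in the long arc $Q_2$, hence are never crossing for such a $\sigma'$.
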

\begin{proof}

Suppose $G\in\switch$\ and $G$ has a Hamilton cycle $H$ which is the union of two perfect matchings $X$ and $Y$. We wish to show that $X$ can be transformed to $Y$ using switches in $G$.
We will argue inductively on the size of $G$. If $H$ is a 4-cycle, we can transform $X$ to $Y$ with a single switch. Suppose then that we can transform $X'$ to $Y'$ for every two perfect matchings $X', Y'$ in any graph $G'\in\switch$\ that has fewer vertices than $G$.
First suppose $C$ has an odd chord $(v_i,v_j)$. Then $H\cup\thsp v_iv_j$ forms two even cycles $C_1$, $C_2$, with $v_iv_j$ as a common edge, so that $v_{i+1}\in C_1$ and $v_{i-1}\in C_2$, see Fig.~\ref{fig050}. If $i$ is odd and $j$ is even, then $v_{i}v_{i+1},\,v_{j-1}v_j\in C_1\cap X$, and if $i$ is even and $j$ is odd, then $v_{i-1}v_i,\,v_jv_{j+1}\in C_2\cap X$. In the first case $C_1$ is an alternating cycle for $X'=X\cap C_1$, $Y'=C_1\sm X'$, and in the second $C_2$ is an alternating cycle for $X'=X\cap C_2$, $Y'=C_2\sm X'$. Consider the first case, the second being symmetrical. Then, since $C_1$ is shorter than $H$, we can transform $X'$ to $Y'$ by induction. After this, $C_2$ is an alternating cycle shorter than $C$, with $X''=(X\cap C_2)\cup\thsp v_iv_j$, $Y''=Y\cap C_2$, so we can transform $X''$ to $Y''$ by induction. This transforms $X$ to $Y$ for the whole cycle $H$, and we are done.

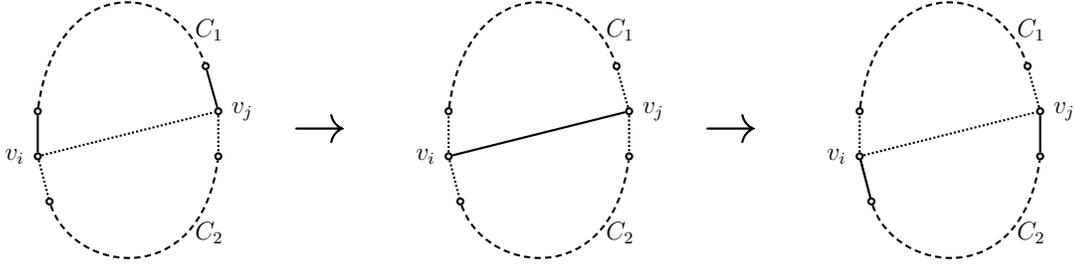
\begin{figure}[H]
\begin{center}
\scalebox{0.8}{\begin{tikzpicture}[yscale=1.5,xscale=3]
\draw (0,0) node[vertex,label=left:$v_i$] (vi){} (0,0.5) node[vertex] (vi'){}
(1,0.5) node[vertex,label=right:$v_j$] (vj){} (1,0) node[vertex] (vj'){}
(0.93,1) node[vertex] (vj''){} (0.065,-0.5) node[vertex] (vi''){};
\draw[densely dashed] (vi') .. controls (0.1,2) and (0.75,2) .. (vj'');
\draw[densely dashed] (vi'') .. controls (0.2,-1.3) and (0.9,-1.5) .. (vj');
\draw[densely dotted] (vi)--(vi'') (vj)--(vj') (vi)--(vj) ;
\draw (vi)--(vi') (vj)--(vj'') ;
\draw (0.95,1.4) node[empty] {$C_1$}  (0.95,-0.85) node[empty] {$C_2$} ;
\end{tikzpicture}}\raisebox{2cm}{\quad\LARGE$\rightarrow$\quad}
\scalebox{0.8}{\begin{tikzpicture}[yscale=1.5,xscale=3]
\draw (0,0) node[vertex,label=left:$v_i$] (vi){} (0,0.5) node[vertex] (vi'){}
(1,0.5) node[vertex,label=right:$v_j$] (vj){} (1,0) node[vertex] (vj'){}
(0.93,1) node[vertex] (vj''){} (0.065,-0.5) node[vertex] (vi''){};
\draw[densely dashed] (vi') .. controls (0.1,2) and (0.75,2) .. (vj'');
\draw[densely dashed] (vi'') .. controls (0.2,-1.3) and (0.9,-1.5) .. (vj');
\draw[densely dotted] (vi)--(vi'') (vj)--(vj') (vi)--(vi') (vj)--(vj'') ;
\draw (vi)--(vj) ;
\draw (0.95,1.4) node[empty] {$C_1$}  (0.95,-0.85) node[empty] {$C_2$} ;
\end{tikzpicture}}\raisebox{2cm}{\quad\LARGE$\rightarrow$\quad}
\scalebox{0.8}{\begin{tikzpicture}[yscale=1.5,xscale=3]
\draw (0,0) node[vertex,label=left:$v_i$] (vi){} (0,0.5) node[vertex] (vi'){}
(1,0.5) node[vertex,label=right:$v_j$] (vj){} (1,0) node[vertex] (vj'){}
(0.93,1) node[vertex] (vj''){} (0.065,-0.5) node[vertex] (vi''){};
\draw[densely dashed] (vi') .. controls (0.1,2) and (0.75,2) .. (vj'');
\draw[densely dashed] (vi'') .. controls (0.2,-1.3) and (0.9,-1.5) .. (vj');
\draw[densely dotted] (vi)--(vi'') (vi)--(vj) (vi)--(vi') (vj)--(vj'') ;
\draw  (vi)--(vi'') (vj)--(vj');
\draw (0.95,1.4) node[empty] {$C_1$}  (0.95,-0.85) node[empty] {$C_2$} ;
\end{tikzpicture}}
\end{center}
\caption{Switching a cycle using an odd chord}\label{fig050}
\end{figure}

Now suppose $H$ has no odd chord, so it has an even switch with an even crossing chord. The even switch
gives another Hamilton cycle $H'=P_1\cup v_{2j-1}v_{2i-1}\cup P_2\cup v_{2j}v_{2i}$, and suppose its vertices are numbered in the implied order. Now, in this numbering, the parity of vertices in $P_1$ remains as in $C$, but the edge $v_{2j-1}v_{2i-1}$ changes the parity of all vertices in $P_2$. Finally, the edge $v_{2j}v_{2i}$ restores the parity in $P_1$. Thus, in particular the crossing chord $v_kv_l$ changes from being an even chord in $H$ to an odd chord in $H'$. Now, since  $H'$ has an odd chord, we can use the argument above to show that its matching $X'=X\sm\{v_{2i-1}v_{2i},\,v_{2i-1}v_{2j}\}\cup\{v_{2i-1}v_{2j-1},\,v_{2i}v_{2j}\}$ can be transformed to $Y$.

Suppose $G\notin\switch$. Then we may assume that there is a Hamilton cycle $H_0$ in $G$, of length $2r\geq 6$, which has only even chords. Let $X_0$, $Y$ be the two perfect matchings such that $H_0=X_0\cup Y$. Then $H_0$ has only even switches, and no even switch can have a crossing chord.  Therefore, any switch from $X_0$ to $X_1$ in $G$ produces a new Hamilton cycle $H_1=X_1\cup Y$ in $G$. Since there are no crossing chords, the switch does not change the parity of any chord from $H_0$ to $H_1$, so $H_1$ also has only even chords, and hence only even switches. The switch cannot produce a crossing chord for any even switch in $H_1$, since this chord would also have been crossing in $H_0$.
Thus $H_1$ is a Hamilton cycle with no odd chord and no even switch with a crossing chord.

Therefore, suppose there is a sequence of switches, $X_0,X_1,\ldots,X_i,\ldots,X_l=Y$. Let $r$ be the smallest $i$ such that $X_i\cap Y\neq\emptyset$. The switch from $X_{r-1}$ to $X_r$ introduces an edge of $Y$, and so requires a boundary switch in $H_{r-1}$, which is odd switch. However, by induction, no Hamilton cycle in the sequence $H_0,H_1,\ldots,H_{r-1}$ has an odd chord, and so there can be no odd switch in $H_{r-1}$.
Hence $X_r\cap Y\neq \emptyset$, a contradiction. So the switch chain is not ergodic on $G$, as required.
\end{proof}
\begin{lemma}\label{lem:diamswitch}
If $G\in\switch$, the diameter of \G{G} is at most $(n-3)$.
\end{lemma}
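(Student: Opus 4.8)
The plan is to bound $\dist_{\G{G}}(X,Y)$ for an arbitrary pair of distinct perfect matchings $X,Y$ of $G$ and then take the maximum over such pairs (the case $X=Y$ being trivial, and forcing $n\ge 4$ whenever $\G{G}$ is non-trivial). The first move is to reduce to a single cycle. As in the proof of Theorem~\ref{thm:ergodic}, $X\oplus Y$ decomposes into vertex-disjoint alternating even cycles $D_1,\dots,D_k$ with $k\ge 1$ and $\sum_i|V(D_i)|\le n$, and $X$ can be transformed to $Y$ by working inside each $D_i$ separately; switches carried out inside $D_i$ alter only edges among $V(D_i)$, so the switch counts simply add. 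Since $\switch$ is hereditary, each $G[V(D_i)]$ lies in $\switch$ and $D_i$ is an alternating Hamilton cycle of $G[V(D_i)]$. So everything comes down to the claim: an alternating Hamilton cycle on $2r$ vertices in a graph in $\switch$ can be resolved using at most $2r-3$ switches. Granting this, $\dist_{\G{G}}(X,Y)\le\sum_i(2r_i-3)=\big(\sum_i 2r_i\big)-3k\le n-3k\le n-3$.

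I would prove the claim by strong induction on $r$, reusing the two-case structure from the proof of Theorem~\ref{thm:ergodic}. The base case $r=2$ is a single $4$-cycle, resolved by one switch, and $1=2\cdot 2-3$. For $r\ge 3$: if the cycle has an odd chord, that chord splits it into two even cycles sharing the chord edge, with vertex counts $2a$ and $2b$ where $a+b=r+1$ and $a,b\ge 2$ (so $2a,2b<2r$); these sub-cycles are again alternating Hamilton cycles in induced subgraphs, which stay in $\switch$ by heredity, and resolving one and then the other costs at most $(2a-3)+(2b-3)=2r-4$ switches by the induction hypothesis. If the cycle has no odd chord, the definition of $\switch$ furnishes an even switch with a crossing chord; performing that single switch turns the crossing chord into an odd chord of the resulting Hamilton cycle (which still has $2r$ vertices), and applying the odd-chord analysis to that cycle yields at most $1+(2r-4)=2r-3$ switches in total. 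Either way $2r-3$ switches suffice.

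The only step that really requires attention is the length arithmetic of the chord split: an odd chord of a $2r$-cycle yields two even cycles that share an edge, hence two vertices, so their vertex counts sum to $2r+2$ rather than $2r$. That extra $2$ of slack is precisely what allows the $-4$ obtained in the odd-chord case to absorb the additional $+1$ incurred by the even switch in the other case and still land at $2r-3$, so the constant in the lemma is exactly what this recursion produces. The remaining ingredients (each sub-cycle inheriting membership in $\switch$, the induction bottoming out at $4$-cycles, and the summation $\sum_i(2r_i-3)\le n-3$ using $k\ge 1$) are routine, and the reduction to a single cycle together with the two-case split are already established in the proof of Theorem~\ref{thm:ergodic}.
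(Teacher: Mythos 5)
Your proposal is correct and follows essentially the same route as the paper: the paper likewise induces on the cycle length using the construction from the proof of Theorem~\ref{thm:ergodic}, splitting via an odd chord into two cycles sharing an edge (so vertex counts sum to $n+2$) and charging one extra switch when an even switch must first be performed, which gives the recursion $D_n\leq D_{m+1}+D_{n-m+1}+1\leq n-3$. Your explicit reduction to the alternating cycles of $X\oplus Y$ and the summation $\sum_i(2r_i-3)\le n-3$ merely spells out what the paper leaves implicit, so there is no substantive difference.
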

\begin{proof}
Remember $n=|V(G)|$. Let $D_n$ be the diameter of \G{G}. Clearly $D_4=1=4-3$, and this will be the basis for an induction. Using the construction in the proof of Theorem~\ref{thm:ergodic},  the graph $G$ is decomposable into
  two smaller graphs $G_1$ and $G_2$, which have a common edge, after one switch. Let $G_1$ and $G_2$  have $m+1$ and $n-m+1$ vertices, for some $m$. Thus by induction, $D_n\leq D_{m+1}+D_{n-m+1}+1\leq (m-2)+(n-m-2)+1=n-3$.
\end{proof}
For the class \och, we can prove a stronger bound, which also gives a characterisation of the class in terms of the switch chain,
\begin{lemma}\label{lem:oddchordwitch}
$G\in\och$ if and only if $\diam{\G{C}}=|C|/2-1$ for every even cycle $C$ in $G$.
\end{lemma}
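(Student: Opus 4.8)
The plan is to establish both implications through a single potential on perfect matchings. Fix an even cycle $C$ of length $|C|=2r$ in $G$, and let $X,Y$ be the two perfect matchings of $G[C]$ whose union is the Hamilton cycle $C$, so that $X\oplus Y=C$. For any perfect matching $M$ of a graph carrying a fixed perfect matching $Y$, I would work with
\[
  \Phi(M)\;=\;\tfrac12\,|M\oplus Y|\;-\;c(M\oplus Y),
\]
where $c(M\oplus Y)$ is the number of the pairwise vertex-disjoint even cycles into which $M\oplus Y$ decomposes; equivalently $\Phi(M)=\tfrac12|V(G[C])|-|M\cap Y|-c(M\oplus Y)$, and $\Phi(M)\ge 0$ always. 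Then $\Phi(Y)=0$, and since $X\oplus Y$ is a single $2r$-cycle, $\Phi(X)=r-1$.

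The first step, and the one I expect to cost the most work, is the claim that \emph{a single switch changes $\Phi$ by at most $1$}. If $M'$ arises from $M$ by one switch then $M\oplus M'$ is a $4$-cycle $Q$ and $M'\oplus Y=(M\oplus Y)\oplus Q$, so I would prove this by an elementary but slightly tedious case analysis according to how the four vertices of $Q$ are distributed among the cycles of $M\oplus Y$ (and which edges of $Q$ lie in $Y$), verifying in each configuration that $|M'\cap Y|+c(M'\oplus Y)\le|M\cap Y|+c(M\oplus Y)+1$. Granting this, along any path $M_0,\dots,M_t$ from $M$ to $Y$ the quantity $\Phi$ drops from $\Phi(M)$ to $0$ in steps of at most $1$, so $\dist_{\G{H}}(M,Y)\ge\Phi(M)$ for every graph $H$ and perfect matchings $M,Y$ of $H$. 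In particular $\dist_{\G{C}}(X,Y)\ge r-1$, whence $\diam{\G{C}}\ge|C|/2-1$ whenever $\G{C}$ is connected, and $\diam{\G{C}}=\infty$ otherwise.

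For the forward implication, assume $G\in\och$. Since $\och\subseteq\switch$, Theorem~\ref{thm:ergodic} gives that $G$ is hereditarily ergodic, so $\G{C}$ is connected and the previous paragraph yields $\diam{\G{C}}\ge|C|/2-1$. For the matching upper bound I would strengthen Lemma~\ref{lem:diamswitch} on $\och$ by strong induction on $n:=|V(G[C])|$, showing $\diam{\G{C}}\le n/2-1$. The cases $n\le4$ are immediate. For $n\ge6$ and perfect matchings $X',Y'$ of $G[C]$, split $X'\oplus Y'$ into vertex-disjoint even cycles; if there is more than one, or a single one not spanning $V(G[C])$, transform $X'$ to $Y'$ one cycle at a time inside the corresponding strictly smaller induced subgraphs (each lying in $\och$) and add the inductive bounds, which totals at most $n/2-1$. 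If $X'\oplus Y'$ is a single spanning $2r$-cycle, it has an odd chord because $G[C]\in\och$; exactly as in the proof of Theorem~\ref{thm:ergodic} (cf.\ Fig.~\ref{fig050}) this chord splits $G[C]$ into two strictly smaller graphs $G_1,G_2\in\och$ sharing an edge with $|V(G_1)|+|V(G_2)|=n+2$, and transforming inside $G_1$ and then inside $G_2$ costs at most $(|V(G_1)|/2-1)+(|V(G_2)|/2-1)=n/2-1$ switches. Hence $\diam{\G{C}}=|C|/2-1$.

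For the converse I would use the contrapositive. Suppose $G\notin\och$; then some even cycle $C$ of length $2r\ge6$ has only even chords. If $\G{C}$ is disconnected then $\diam{\G{C}}=\infty\ne|C|/2-1$ and we are done, so assume $\G{C}$ is connected, and write $X=\{v_{2i-1}v_{2i}:i\in[r]\}$ and $Y$ for the two matchings of $C$. A switch applied to $X$ replaces two edges $v_{2i-1}v_{2i},v_{2j-1}v_{2j}\in X$ (say $i<j$) either by the pair $\{v_{2i-1}v_{2j},v_{2i}v_{2j-1}\}$ or by the pair $\{v_{2i-1}v_{2j-1},v_{2i}v_{2j}\}$; the former requires the edge $v_{2i-1}v_{2j}$, which would be an odd chord of $C$ and is therefore absent. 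So every switch from $X$ is an even switch, and such a switch produces a new Hamilton cycle $X_1\cup Y$ of $G[C]$ (the two new edges are neither edges of $C$ nor edges of $Y$), so that $X_1\oplus Y$ is again a single $2r$-cycle and $\Phi(X_1)=r-1=\Phi(X)$. By the crux claim $\dist_{\G{C}}(X_1,Y)\ge r-1$, hence $\dist_{\G{C}}(X,Y)\ge r$ and $\diam{\G{C}}\ge r>|C|/2-1$. In either case $\diam{\G{C}}\ne|C|/2-1$, which finishes the contrapositive. Everything outside the crux claim is bookkeeping together with direct reuse of the decompositions already set up in the proof of Theorem~\ref{thm:ergodic}.
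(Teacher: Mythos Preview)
Your proof is correct and follows the same overall architecture as the paper's: for the forward direction you use the odd-chord decomposition of Theorem~\ref{thm:ergodic} to get the upper bound $\diam{\G{C}}\le |C|/2-1$, and for the converse you observe that when $C$ has only even chords the first switch from $X$ must be an even switch, yielding a new Hamilton cycle $X_1\cup Y$, and then invoke the lower bound.

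The genuine addition in your write-up is the potential $\Phi(M)=\tfrac12|M\oplus Y|-c(M\oplus Y)$ and the claim $|\Phi(M')-\Phi(M)|\le 1$ per switch. The paper simply asserts the lower bound $\dist(X',Y')\ge |C'|/2-1$ ``from above'' in the converse, and in the forward direction writes $\dist(X,Y)=|C|/2-1$ without isolating why the $\ge$ direction holds; your potential makes both of these rigorous and works uniformly regardless of whether $G[C]\in\och$. Your case analysis for the crux claim is genuinely needed (it is not entirely trivial, since one must track how the $4$-cycle $Q$ meets the cycles of $M\oplus Y$ and which of its edges lie in $Y$), but it goes through: the six essentially distinct cases (none/one/both of $ab,cd$ in $Y$, crossed with none/one/both of $bc,da$ in $Y$, minus the impossible combinations) each give $|\Delta\Phi|\le 1$. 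Your forward upper bound also sidesteps the paper's informal assertion that ``$C$ has a boundary switch'', which is not literally true for every choice of $X$ (e.g.\ $C_8$ with only the odd chords $\{2,5\}$ and $\{1,6\}$ is odd-chordal, yet $X=\{12,34,56,78\}$ admits no boundary switch), by working instead with an arbitrary odd chord exactly as in Theorem~\ref{thm:ergodic}. So your version is essentially the paper's argument, made self-contained.
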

\begin{proof}
  Let $C$ be any even cycle in $G\in\och$. Then $C$ has a boundary switch. First consider the matchings $X,Y$ for which $C$ is an alternating cycle. To switch $X$ to $Y$, we first perform the boundary switch, leaving an alternating cycle $C'$ with $|C'|=|C|-2$, Assume by induction that $\dist(X,Y)=|C|/2-1$, the base case being for a quadrangle, here $\dist(X,Y)=1=4/2-1$.   Then $\dist(X,Y)=1+(|C'|/2-1)=(|C|-2)/2=|C|/2-1$, continuing the induction. Thus $\diam{\G{C}}\geq|C|/2-1$. Now, if $X,Y$ are any two matching in $\G{C}$, $X\oplus Y$ can be divided into alternating cycles $C_1,C_2,\ldots,C_k$, say.   Then $\dist(X,Y)\leq\sum_{i=1}^{k}(|C_i|/2-1)=|C|/2-k\leq |C|/2-1$, with equality if and only if $X\cup Y=C$. Thus $\diam{\G{C}}=|C|/2-1$.

  Conversely, suppose $C$ is an even cycle with no odd chord, but $\G{C}\in\switch$, so $\diam{\G{C}}$ is well defined.  First consider matchings $X,Y$ such that $C$ is an alternating cycle. Then the first switch on the path from $X$ to $Y$ must be an even switch, giving matchings $X'$, $Y'$ which form an alternating cycle $C'$ with $|C'|=|C|$. Now, from above, $\dist(X',Y')\geq |C'|/2-1=|C|/2-1$. and thus $\dist(X,Y)\geq |C|/2$. Hence $\diam{\G{C}}\neq|C|/2-1$.
\end{proof}
\begin{corollary}\label{lem:diamoddchord}
If $G=(V,E)\in\och$, with $n=|V|$, then $\diam{\G{G}}\leq n/2-1$.
\end{corollary}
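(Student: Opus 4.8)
The plan is to reduce immediately to the single-cycle analysis already carried out for Lemma~\ref{lem:oddchordwitch}, applied globally rather than inside one even cycle. Fix any two perfect matchings $X,Y$ of $G$. If $X=Y$ there is nothing to prove (and then $\G{G}$ is connected by Theorem~\ref{thm:ergodic}, since $\och\subseteq\switch$), so assume $X\neq Y$; in particular $n\geq 4$. Then $X\oplus Y$ is a vertex-disjoint union of alternating even cycles $C_1,\ldots,C_k$ with $k\geq 1$ and $\sum_{i=1}^{k}|C_i|\leq n$, while every vertex of $G$ not lying on any $C_i$ is matched identically by $X$ and $Y$.

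Next I would invoke heredity of \och: each induced subgraph $G[V(C_i)]$ lies in \och, and $C_i$ is a Hamilton cycle of $G[V(C_i)]$ equal to the union of the two perfect matchings $X\cap E(C_i)$ and $Y\cap E(C_i)$. Applying Lemma~\ref{lem:oddchordwitch} to $G[V(C_i)]$ gives $\diam{\G{C_i}}=|C_i|/2-1$, so $X\cap E(C_i)$ can be transformed into $Y\cap E(C_i)$ by a sequence of at most $|C_i|/2-1$ switches carried out entirely inside $G[V(C_i)]$. Because the vertex sets $V(C_i)$ are pairwise disjoint, such a switch touches only $V(C_i)$ and leaves the current matching on $V\setminus V(C_i)$ untouched; hence these per-cycle transformations do not interfere, and performing the switches for $C_1$, then those for $C_2$, and so on, yields a walk in $\G{G}$ from $X$ to $Y$ of length at most
\[
\sum_{i=1}^{k}\Bigl(\tfrac{|C_i|}{2}-1\Bigr)=\tfrac12\sum_{i=1}^{k}|C_i|-k\ \leq\ \tfrac{n}{2}-k\ \leq\ \tfrac{n}{2}-1 .
\]
Since $X$ and $Y$ were arbitrary, $\diam{\G{G}}\leq n/2-1$.

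I do not expect a real obstacle here: the corollary is essentially the last displayed inequality in the proof of Lemma~\ref{lem:oddchordwitch} with ``$C$'' replaced by ``$G$''. The only point needing a word of care is the non-interference of the per-cycle switch sequences, and that is immediate from the vertex-disjointness of the $C_i$, as noted above.
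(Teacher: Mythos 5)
Your proposal is correct and follows essentially the same route as the paper: decompose $X\oplus Y$ into vertex-disjoint alternating even cycles, apply the per-cycle bound $|C_i|/2-1$ from Lemma~\ref{lem:oddchordwitch} (via heredity of \och), and sum to get $n/2-k\leq n/2-1$. Your explicit remarks on the trivial case $X=Y$ and on the non-interference of the per-cycle switch sequences are just spelled-out versions of what the paper's terse proof leaves implicit.
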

\begin{proof}
 If $X,Y$ are any two matching in $\G{G}$, $X\oplus Y$ can be divided into alternating cycles $C_1,C_2,\ldots,C_k$, say.   Then $\dist(X,Y)\leq\sum_{i=1}^{k}(|C_i|/2-1)\leq n/2-k\leq n/2-1$.
\end{proof}
Finally, we note that, even if the switch chain is not ergodic on a graph $G$, it may still be able to access an exponential number of perfect matchings from any given perfect matching. Thus the graphs which are not ergodic for the switch chain do not necessarily have ``frozen'' perfect matchings. Since the existence of frozen states is the most usual criterion for non-ergodicity of large Markov chains, deciding non-hereditary ergodicity seems problematic.
\begin{example}\label{example20}
The graph $G$ in Fig.~\ref{fig080} has $n=4k$ vertices, and a Hamilton cycle $H=X\cup Y$, where $X$, $Y$ are the following two perfect matchings:
\[X=\big\{\{1,2\},\{3,4\},\ldots,\{4k-1,4k\}\big\},\quad Y=\big\{\{2,3\},\{4,5\},\ldots,\{4k-2,4k-1\},\{4k,1\}\big\}.\]
From either $X$ or $Y$, there are $k$ even switches, each without a crossing chord.
Each of these switches can be made independently, leading to $2^k=2^{n/4}$ different matchings.
However $Y$ cannot be reached from $X$, or vice versa. Note that there are $4(k-1)$ illegal switches for $H$,
for example $(1,2,4k-2,4k-1)$ and $(2,3,4k-1,4k)$.
\tikzset{vertex/.style={circle,draw,fill=none,inner sep=1pt}}
\begin{figure}[H]
\begin{center}
\scalebox{0.9}{\begin{tikzpicture}[xscale=2,yscale=1.5,line width=0.8pt,font=\small]
\path (0,0) node[vertex] (1') {}  +(0,-0.2) node[empty] {$4k$}
++(1,0) node[vertex] (2) {}  +(0,-0.2) node[empty] {$4k-1$}
++(1,0) node[vertex] (3') {}  +(0,-0.2) node[empty] {$4k-2$}
++(0.5,0) coordinate (u')
++(1,0) node[vertex] (4) {}  +(0,-0.2) node[empty] {$2k+2$}
++(1,0) node[vertex] (n') {}  +(0,-0.16) node[empty] {$2k+1$}
++(0,1) node[vertex] (n) {}  +(0,0.2) node[empty] {$2k$}
++(-1,0) node[vertex] (4') {}  +(0,0.2) node[empty] {$2k-1$}
++(-0.5,0) coordinate (u) {}
++(-1,0) node[vertex] (3) {}  +(0,0.2) node[empty] {$3$}
++(-1,0) node[vertex] (2') {}  +(0,0.2) node[empty] {$2$}
++(-1,0) node[vertex] (1)  {}  +(0,0.2) node[empty] {$1$} ;
\draw (1)--(2)--(3) (1')--(2')--(3') (1')--(2) (2')--(3) (n)--(n')
(1)--(1') (1)--(2') (2)--(3') (4')--(n) (4)--(n') (4')--(n') (4)--(n);
\draw[dashed] (3')--(4) (3)--(4') ;
\end{tikzpicture}}
\end{center}
\caption{\emph{Example}~\ref{example20}}\label{fig080}
\end{figure}
\end{example}

\subsection{Relationship to known graph classes}
We will now consider the relationship between the classes defined above and known hereditary graph classes, which are defined in the Appendix.
First we will show that \och$\ \subset\ $\switch, by means of the following example.
\begin{example}\label{example10}
The graph $G$ in Fig.~\ref{fig010} has an (even) Hamilton cycle $H$ is $1\to2\to3\to4\to5\to6\to7\to8\to1$, and no odd chords, but there is a sequence of switches which transforms the perfect matching $X=\{(1,2),(3,4),(5,6),(7,8)\}$, shown in solid line, to the perfect matching $Y=\{(2,3),(4,5),(6.7),(8,1)\}$, shown dashed. Other edges of $G$ are shown dotted. The switch used to obtain the (solid) perfect matching from its predecessor is shown below each graph. The first switch is an even switch $(3,7,8,4)$ with two crossing even chords $(1,5)$, $(2,6)$. In the last step two disjoint odd switches have been made simultaneously.
\tikzset{vertex/.style={circle,draw,fill=none,inner sep=0.8pt}}
\begin{figure}[htb]
\begin{center}
\begin{tikzpicture}
\begin{scope}[scale=0.8]
\draw (0,0) node[vertex,label=left:1] (1) {} ++(90:1)  node[vertex,label=left:2] (2) {}
++(45:1)  node[vertex,label=above:3] (3) {} ++(0:1)  node[vertex,label=above:4] (4) {}
++(-45:1)  node[vertex,label=right:5] (5) {} ++(-90:1)  node[vertex,label=right:6] (6) {}
++(-135:1)  node[vertex,label=below:7] (7) {} ++(-180:1)  node[vertex,label=below:8] (8) {};
\draw (1)--(2) (3)--(4) (5)--(6) (7)--(8)  ;
\draw[densely dashed] (2)--(3) (4)--(5) (6)--(7) (8)--(1)  ;
\draw[densely dotted] (1)--(5) (2)--(6) (3)--(7) (4)--(8) ;
\draw (1.25,-2.5) node[empty] {$G$} ;
\end{scope}
\begin{scope}[scale=0.8,xshift=5cm]
\draw (0,0) node[vertex,label=left:1] (1) {} ++(90:1)  node[vertex,label=left:2] (2) {}
++(45:1)  node[vertex,label=above:3] (3) {} ++(0:1)  node[vertex,label=above:7] (4) {}
++(-45:1)  node[vertex,label=right:6] (5) {} ++(-90:1)  node[vertex,label=right:5] (6) {}
++(-135:1)  node[vertex,label=below:4] (7) {} ++(-180:1)  node[vertex,label=below:8] (8) {};
\draw (1)--(2) (3)--(4) (5)--(6) (7)--(8)  ;
\draw[densely dashed] (2)--(3) (4)--(5) (6)--(7) (8)--(1)  ;
\draw[densely dotted] (1)--(6) (2)--(5) (3)--(7) (4)--(8) ;
\draw (1.25,-2.5) node[empty] {(3,7,8,4)} ;
\end{scope}
\begin{scope}[scale=0.8,xshift=10cm]
\draw (0,0) node[vertex,label=left:1] (1) {} ++(90:1)  node[vertex,label=left:2] (2) {}
++(45:1)  node[vertex,label=above:3] (3) {} ++(0:1)  node[vertex,label=above:7] (4) {}
++(-45:1)  node[vertex,label=right:6] (5) {} ++(-90:1)  node[vertex,label=right:5] (6) {}
++(-135:1)  node[vertex,label=below:4] (7) {} ++(-180:1)  node[vertex,label=below:8] (8) {};
\draw (1)--(6)  (2)--(5) (3)--(4)  (7)--(8)  ;
\draw[densely dashed]  (2)--(3) (4)--(5) (6)--(7) (8)--(1)  ;
\draw[densely dotted] (1)--(2)  (5)--(6) (3)--(7) (4)--(8) ;
\draw (1.25,-2.5) node[empty] {(1,2,6,5)} ;
\end{scope}
\begin{scope}[scale=0.8,xshift=15cm]
\draw (0,0) node[vertex,label=left:1] (1) {} ++(90:1)  node[vertex,label=left:2] (2) {}
++(45:1)  node[vertex,label=above:3] (3) {} ++(0:1)  node[vertex,label=above:7] (4) {}
++(-45:1)  node[vertex,label=right:6] (5) {} ++(-90:1)  node[vertex,label=right:5] (6) {}
++(-135:1)  node[vertex,label=below:4] (7) {} ++(-180:1)  node[vertex,label=below:8] (8) {};
\draw  (2)--(3) (4)--(5) (6)--(7)  (8)--(1) ;
\draw[densely dotted] (1)--(2) (1)--(6) (2)--(5) (3)--(4) (5)--(6) (1)--(6) (2)--(5) (3)--(7) (7)--(8) (4)--(8) ;
\draw (1.25,-2.5) node[empty] {(2,6,7,3),\,(1,5,4,8)} ;
\end{scope}
\end{tikzpicture}
\end{center}
\caption{\emph{Example} 1.}\label{fig010}
\end{figure}
\end{example}
This example may be extended so  that the outer cycle has any even number of vertices. These graphs are the \emph{M\"obius ladders}, which appear in~\cite{DyeMue18}, in a related context.

Next we consider the simple class\cogr. We may show that $\cogr\nsubseteq\switch$, and hence the switch chain is not necessarily ergodic even in this class. Consider the graph $G$ shown in Fig.~\ref{fig090}:
\tikzset{vertex/.style={circle,draw,fill=none,inner sep=1pt}}
\begin{figure}[H]
\centering{%
\scalebox{0.9}{\begin{tikzpicture}[xscale=2,yscale=2.75,font=\small]
\path
(0,0) node[vertex,label=below:5] (5) {}
(-0.5,0.5) node[vertex,label=left:6] (6) {}
(0,1) node[vertex,label=above:1] (1) {}
(1,1) node[vertex,label=above:2] (2) {}
(1.5,0.5) node[vertex,label=right:3] (3) {}
(1,0) node[vertex,label=below:4] (4) {};
\draw (1)--(2)--(3)--(4)--(5)--(6)--(1) (3)--(1) (5)--(3) (6)--(2) (4)--(6)  ;
\end{tikzpicture}}}
\caption{A non-ergodic cograph}\label{fig090}
\end{figure}
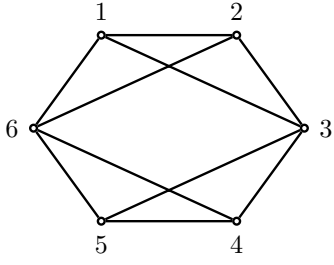

This is a cograph, since $G=G_1\Join G_2$, where $G_1=G[\{1,2,4,5\}]$, $G_2=G[\{3,6\}]$ are cographs, since $G_1\simeq(K_1\Join K_1)+(K_1\Join K_1)$ and $G_2\simeq K_1 \uplus K_1$. However $G$ is not odd chordal, since the 6-cycle $(1,2,3,4,5,6)$ spans $G$ and has only even chords: $\{1,3\}$, $\{2,6\}$, $\{3,5\}$ and $\{4,6\}$. Moreover, $G$ is not switchable, since the two even switches $\{1,3\}$, \{$2,6\}$ and  $\{3,5\}$, $\{4,6\}$ have no crossing chord.
However, we will show in section~\ref{sec:exact}, that perfect matchings in a cograph can be counted exactly, and hence a random matching can be generated, in polynomial time.

It is known that \cogr\ $\subseteq$\, \perm, the class of permutation graphs. Thus we know that \perm\ $\nsubseteq$\, \switch. However, there are permutation graphs which are not cographs and are not switchable.
Consider the graph $G$ shown in Fig.~\ref{fig090}.  It has the intersection model shown, so $G\in\perm$. However, $G$ is not a cograph, since $G[\{2,3,4,5\}]\simeq P_4$. The 6-cycle $(1,2,4,3,5,6)$ spans $G$, and has no odd chord or even switch. So, by Theorem~\ref{thm:ergodic}, $G\notin\switch$. This example can be extended to an infinite sequence of connected non-ergodic permutation graphs on $2(k+1)$ vertices $(k>1)$ which are a $2k$-ladder with a triangle at each end. These graphs also appear, in a related context in~\cite{DyeMue18}.
\begin{figure}[ht]
\centering{%
\scalebox{0.9}{\begin{tikzpicture}[xscale=2,yscale=1.5,font=\small]
\path
(0,0) node[vertex,label=below:1] (1) {}
(-0.5,0.5) node[vertex,label=left:2] (2) {}
(0,1) node[vertex,label=above:4] (3) {}
(1,1) node[vertex,label=above:3] (4) {}
(1.5,0.5) node[vertex,label=right:5] (5) {}
(1,0) node[vertex,label=below:6] (6) {};
\draw (1)--(2)--(3)--(4)--(5)--(6)--(1)--(3) (4)--(6)  ;
\end{tikzpicture}}\hspace*{2cm}
\scalebox{0.9}{\begin{tikzpicture}[xscale=1.25,yscale=1.5,font=\small]
\path
(1,1) node[vertex,label=above:1] (1) {}
(1.8,1) node[vertex,label=above:2] (2) {}
(4,1) node[vertex,label=above:4] (3) {}
(3,1) node[vertex,label=above:3] (4) {}
(5.2,1) node[vertex,label=above:5] (5) {}
(6,1) node[vertex,label=above:6] (6) {}
(4,0) node[vertex,label=below:1] (1') {}
(1.8,0) node[vertex,label=below:2] (2') {}
(1,0) node[vertex,label=below:4] (3') {}
(6,0) node[vertex,label=below:3] (4') {}
(5.2,0) node[vertex,label=below:5] (5') {}
(3,0) node[vertex,label=below:6] (6') {};
\draw (1)--(1') (2)--(2') (3)--(3') (4)--(4') (5)--(5') (6)--(6')  ;
\end{tikzpicture}}}
\caption{A non-ergodic permutation graph}\label{fig100}
\end{figure}
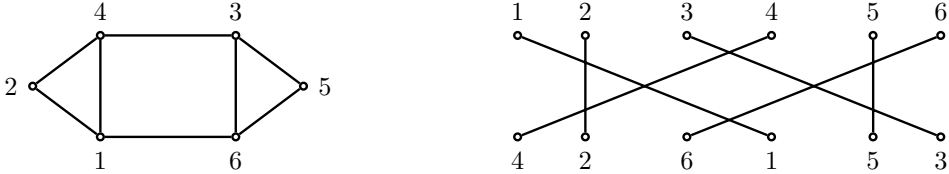

The class \och\ does not seem to have been studied previously in the graph theory literature.  We have the following relationships to the known classes \cbg, \intl, \textsc{Strongly~chordal}, \chord, \ehf\ and \och.\\[0.5ex]
\centerline{\cbg\ $\subset$\ \och\  $\subset$\ \switch\ $\subset$\ \ehf}\\[0.5ex]
\centerline{\intl\ $\subset$\ \strch\ =\ \chord$\, \cap\, $\och}\\[0.5ex]
\centerline{\cbg\ =\ \bip$\, \cap\, $\och.}\\[1ex]
The inclusions are strict, as illustrated in Example~\ref{example10} above and Fig.~\ref{fig020} below. Fig.~\ref{fig020}(a)  contains a triangle, so cannot be chordal bipartite, but has no odd hole. The only 6-cycle has an odd chord $\{1,4\}$, so the graph is odd chordal. In Fig.~\ref{fig020}(b), the 6-cycle has an even chord $\{1,5\}$, but no odd chord, so the graph has no even hole, but is not odd chordal.
In Fig.~\ref{fig020}(c), the graph is odd chordal, since the only even cycle is the 4-cycle (3,6,7,8), but it has an odd hole (1,2,3,4,5). In Fig.~\ref{fig020}(d), the outer 6-cycle has no odd chord, but is chordal, since all other cycles are triangles. So, in addition to the inclusions, we see that \chord\ and \ohf\ are  incomparable with \och. Thus, from~\cite{ChRoST06}, odd chordal graphs are not perfect in general.
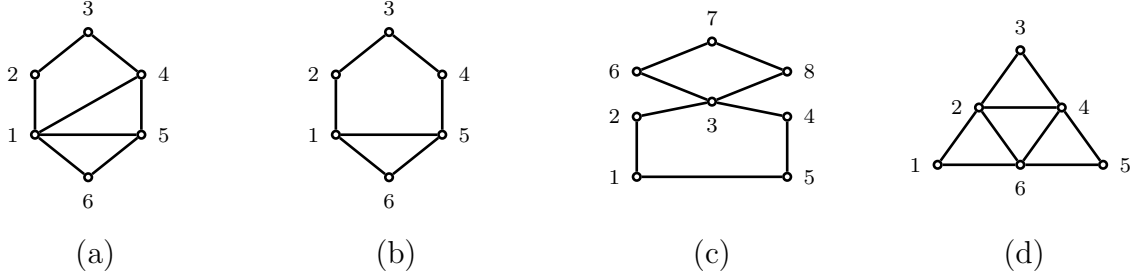
\begin{figure}[htb]
\centerline{
\begin{tikzpicture}[yscale=0.8,font=\scriptsize]
\begin{scope}
\draw (0,0) node[vertex,label=left:1] (1) {} ++(90:1)  node[vertex,label=left:2] (2) {}
++(45:1)  node[vertex,label=above:3] (3) {} ++(-45:1)  node[vertex,label=right:4] (4) {}
++(-90:1)  node[vertex,label=right:5] (5) {} ++(225:1)  node[vertex,label=below:6] (6) {};
\draw (1)--(2)--(3)--(4)--(5)--(6)--(1)--(5) (1)--(4) ;
\end{scope}
\draw (0.8,-2) node[empty] {\normalsize(a)} ;
\begin{scope}[xshift=4cm]
\draw (0,0) node[vertex,label=left:1] (1) {} ++(90:1)  node[vertex,label=left:2] (2) {}
++(45:1)  node[vertex,label=above:3] (3) {} ++(-45:1)  node[vertex,label=right:4] (4) {}
++(-90:1)  node[vertex,label=right:5] (5) {} ++(225:1)  node[vertex,label=below:6] (6) {};
\draw (1)--(2)--(3)--(4)--(5)--(6)--(1)--(5) ;
\end{scope}
\draw (4.8,-2) node[empty] {\normalsize(b)} ;
\begin{scope}[xshift=8cm,yshift=-7mm]
\draw (0,0) node[vertex,label=left:1] (1) {} ++(90:1)  node[vertex,label=left:2] (2) {}
++(1,0.25)  node[vertex,label=below:3] (3) {} ++(1,-.25)  node[vertex,label=right:4] (4) {}
++(-90:1)  node[vertex,label=right:5] (5) {}
(0,1.75)  node[vertex,label=left:6] (6) {} (1,2.25)  node[vertex,label=above:7] (7) {}(2,1.75)  node[vertex,label=right:8] (8) {};
\draw (1)--(2)--(3)--(4)--(5)--(1) (3)--(6)--(7)--(8)--(3) ;
\end{scope}
\draw (9,-2) node[empty] {\normalsize(c)} ;
\begin{scope}[xshift=12cm,yshift=-5mm]
\draw (0,0) node[vertex,label=left:1] (0) {} +(60:1.1)  node[vertex,label=left:2] (1) {} +(60:2.2)  node[vertex,label=above:3] (2) {} (1.1,0) node[vertex,label=below:6] (3) {} (2.2,0) node[vertex,label=right:5] (4) {} +(120:1.1)  node[vertex,label=right:4] (5) {};
\draw[line width=1pt] (0)--(3)--(4)--(5)--(2)--(1)--(0) (1)--(3)--(5)--(1) ;
\end{scope}
\draw (13.15,-2) node[empty] {\normalsize(d)} ;
\end{tikzpicture}
}
\caption{Distinction between graph classes}\label{fig020}
\end{figure}
Note that, except for \och, \switch\ and \ehf, all these classes are known to be recognisable in polynomial time. Thus, an obvious question is: does \och\ have a polynomial time recognition algorithm? We conjecture that the answer is ``yes'', but currently we cannot prove this.

\section{Rapid mixing and quasimonotone graphs}\label{sec:quasimon}

The switch chain for \emph{monotone} graphs (also known as \emph{bipartite permutation graphs}, or \emph{proper interval bigraphs}), was studied in~\cite{DyJeMu17}. These are graphs for which the biadjacency matrix has a ``staircase'' structure. See~\cite{DyJeMu17} for precise definitions. The chain was shown to have polynomial mixing time. As far as we are aware, that is the only proof of rapid mixing of the switch chain for a nontrivial class of graphs. Thus, we consider here extending the proof technique of~\cite{DyJeMu17} to a much larger class of graphs, which are not necessarily bipartite. To define this class, we need the following definition.
\subsection{Quasiclasses}\label{sec:quasigraph}
Let $\cC \subseteq \bip$, where $\bip$ denote the class of bipartite graphs.
Then we will define the class $\qua\cC$ as follows. A graph $G$ is in $\qua\cC$ if $G[\LR]\in \cC$ for all bipartitions $L,R$ of $V$. This may seem a very demanding definition, but it is not so for most classes of interest, as we shall see.

\begin{lemma}\label{lem:quasiclass1}
  If $\cC \subseteq \bip$ is a hereditary class, then so is $\qua\cC$.
\end{lemma}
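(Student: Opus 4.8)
The plan is to unpack both properties directly from the definitions. Fix a graph $G=(V,E)\in\qua\cC$ and a subset $U\subseteq V$; I must show $G[U]\in\qua\cC$, i.e. that $(G[U])[\LR']\in\cC$ for every bipartition $L',R'$ of $U$. The natural move is to extend the bipartition of $U$ to a bipartition of $V$: set $L=L'$ and $R=R'\cup(V\setminus U)$ (any extension will do, since the vertices outside $U$ play no role). By hypothesis $G[\LR]\in\cC$, where $\LR$ is the $L,R$ cut-set of $G$. Now observe that $(G[U])[\LR']$ is exactly the subgraph of $G[\LR]$ induced by the vertex set $U = L'\cup R'$: both graphs have vertex set $L'\cup R'$, and an edge $vw$ with $v\in L'$, $w\in R'$ lies in one iff it lies in $E$ iff it lies in the other. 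Hence $(G[U])[\LR'] = (G[\LR])[U]$, which belongs to $\cC$ because $\cC$ is hereditary.

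The one point that needs a little care — and the only place where anything could go wrong — is checking that forming the bipartite graph $G[\LR]$ and then taking an induced subgraph commutes with first taking the induced subgraph $G[U]$ and then forming its bipartite cut-graph. This is the identity $(G[U])[L'{:}R'] = (G[\LR])[L'\cup R']$ above. It is essentially bookkeeping: the vertex sets agree by construction, and an edge survives on either side precisely when both its endpoints are in $U$, one is in $L'$ and the other in $R'$, and it is an edge of $G$. So there is no real obstacle here; the lemma is true for the formal reason that "restrict to a bipartite cut-graph" and "take an induced subgraph" are both just restrictions and therefore commute.

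Thus the argument is: (1) take $G\in\qua\cC$ and $U\subseteq V$, and an arbitrary bipartition $L',R'$ of $U$; (2) extend it to a bipartition $L,R$ of $V$; (3) apply the definition of $\qua\cC$ to get $G[\LR]\in\cC$; (4) use the commuting identity to identify $(G[U])[L'{:}R']$ with an induced subgraph of $G[\LR]$; (5) invoke heredity of $\cC$ to conclude $(G[U])[L'{:}R']\in\cC$; (6) since the bipartition of $U$ was arbitrary, $G[U]\in\qua\cC$. No induction or case analysis is required.
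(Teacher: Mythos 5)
Your proof is correct and follows essentially the same route as the paper: extend the given bipartition of the induced subgraph's vertex set to a bipartition of $V$, apply the definition of $\qua\cC$ to $G$, and then use heredity of $\cC$ to pass to the induced bipartite cut-graph. The only cosmetic difference is that the paper deletes a single vertex at a time while you handle an arbitrary $U\subseteq V$ in one step, making the commuting-of-restrictions bookkeeping explicit.
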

\begin{proof}
  Suppose $G=(V,E)\in\qua\cC$ and $v\in V$. We wish to show that $G[V\setminus v]\in\qua\cC$. Let $L,R$ be any bipartition of $V\setminus v$. Then $L,R$ can be extended to a bipartition of $L\cup v,R$ of $V$. Thus $G[L\cup v{:}R]\in \cC$ and, since $\cC$ is hereditary, $G[\LR]\in\cC$. Thus $G[V\setminus v]\in\qua\cC$.
\end{proof}
\begin{lemma}\label{lem:quasiclass2}
  Let $\cC \subseteq \bip$ be a graph class that is hereditary and closed under
  disjoint union, then $\cC = \bip \cap \qua\cC$.
\end{lemma}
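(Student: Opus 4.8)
The plan is to prove the two inclusions $\cC\subseteq\bip\cap\qua\cC$ and $\bip\cap\qua\cC\subseteq\cC$ separately, with essentially all of the content sitting in the first one.

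I would dispatch the reverse inclusion first, as it is immediate and uses neither hypothesis on $\cC$. If $G=(V,E)\in\bip\cap\qua\cC$, fix a bipartition $A,B$ of $V$ with $A{:}B=E$, which exists because $G$ is bipartite. Then the associated bipartite graph is $G[A{:}B]=(A\cup B,A{:}B)=(V,E)=G$ itself, and since $G\in\qua\cC$ we have $G[A{:}B]\in\cC$, that is, $G\in\cC$.

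For the forward inclusion, since $\cC\subseteq\bip$ is assumed, it suffices to prove $\cC\subseteq\qua\cC$. Take $G=(V,E)\in\cC$ and an arbitrary bipartition $L,R$ of $V$; the goal is $G[\LR]\in\cC$. First I would fix a bipartition $A,B$ of $V$ with $A{:}B=E$ (possible as $G\in\bip$) and partition $V$ into the four blocks $A\cap L$, $B\cap L$, $A\cap R$, $B\cap R$. The key structural claim is that $G[\LR]$ splits as the disjoint union of its restrictions to $W:=(A\cap L)\cup(B\cap R)$ and to $W':=(A\cap R)\cup(B\cap L)$, and that each of these restrictions is exactly the vertex-induced subgraph $G[W]$, respectively $G[W']$, of $G$. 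Granting this, heredity of $\cC$ gives $G[W],G[W']\in\cC$, closure under disjoint union gives $G[\LR]=G[W]\uplus G[W']\in\cC$, and since $L,R$ was arbitrary we conclude $G\in\qua\cC$.

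The one step that needs checking --- the \textbf{main obstacle}, such as it is --- is that structural claim, a short case analysis on the bipartite structure of $G$: every edge of $G$ joins $A$ to $B$, and $\LR$ retains precisely the edges crossing the $L$--$R$ cut, so $\LR$ keeps exactly the $G$-edges between $A\cap L$ and $B\cap R$ and between $A\cap R$ and $B\cap L$. Hence no edge of $G[\LR]$ runs between $W$ and $W'$ (such an edge would have both endpoints in $A$, or both in $B$, or both in $L$, or both in $R$), and within $W$ (and similarly within $W'$) the edges of $G[\LR]$ are all the $G$-edges present, since a $G$-edge between $A\cap L$ and $B\cap R$ automatically crosses the cut. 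Everything else is just unwinding the definitions of $\qua\cC$ and of $G[\LR]$. Both hypotheses on $\cC$ are used only in the forward inclusion, and each exactly once.
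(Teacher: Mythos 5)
Your proof is correct and follows essentially the same route as the paper: the reverse inclusion by noting $G=G[\LR]$ for a bipartition realising all edges, and the forward inclusion by decomposing $G[\LR]$ into the two induced subgraphs on $(A\cap L)\cup(B\cap R)$ and $(A\cap R)\cup(B\cap L)$, then invoking heredity and closure under disjoint union. The only difference is that you spell out the case analysis verifying the disjoint-union decomposition, which the paper states without detail.
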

\begin{proof}
  First let $G = (L \cup R,E)$ be any bipartite graph that does not belong
  to $\cC$. Since $G = G[\LR]$, $G$ does not belong to $\qua\cC$.
  Hence $\cC \supseteq \bip \cap \qua\cC$.

  Next we show $\cC \subseteq \bip \cap \qua\cC$. Let $G = (X \cup Y,E)$
  be a graph in $\cC$ and let $\LR$ be any bipartition of $X \cup Y$.
  Now $G[\LR]$ is the disjoint union of $G_1 = G[(X \cap L)\cup(Y \cap R)]$
  and $G_2 = G[(X \cap R)\cup(Y \cap L)]$. The graphs $G_1$ and $G_2$ belong
  to $\cC$ since the class is hereditary, and hence $G[\LR]\in \cC$,  because
  $\cC$ is closed under disjoint union. Thus $G \in \qua\cC$.
\end{proof}
We also have
\begin{lemma}\label{lem:quasiclass3}
  $\qua\cbg=\och$.
\end{lemma}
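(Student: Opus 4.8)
The plan is to unwind both definitions and exploit the correspondence between alternating bipartitions of an even cycle and odd chords of that cycle. Recall that $\cbg$ consists of the bipartite graphs in which every cycle of length at least $6$ has a chord; since a bipartite graph has only even cycles, and since every chord of a cycle in a bipartite graph joins the two colour classes, this is the same as saying every even cycle of length $\geq 6$ has a chord joining the two sides. The key elementary observation is this: if a cycle $C\colon v_1\to v_2\to\cdots\to v_{2r}\to v_1$ sits inside a bipartite graph $G[\LR]$, then its vertices alternate between $L$ and $R$ along $C$, so the side of the bipartition containing $v_k$ depends only on the parity of $k$; hence an edge $v_iv_j$ joins $L$ to $R$ if and only if $i-j$ is odd, i.e. if and only if it is an odd chord of $C$ in the sense of Section~\ref{sec:ergodicity}.

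First I would show $\qua\cbg\subseteq\och$. Let $G\in\qua\cbg$ and let $C\colon v_1\to\cdots\to v_{2r}\to v_1$ be an even cycle in $G$ with $2r\geq 6$. Take the alternating bipartition of $C$, putting $v_k\in L$ for $k$ odd and $v_k\in R$ for $k$ even, and extend it arbitrarily to a bipartition $L,R$ of $V$. Every edge of $C$ joins $L$ to $R$, so $C$ is a cycle in $G[\LR]$. Since $G[\LR]\in\cbg$ and $|C|\geq 6$, $C$ has a chord $e$ in $G[\LR]$; this $e\in E(G)$ joins $L$ to $R$, so by the observation above it is an odd chord of $C$. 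Hence every even cycle of length $\geq 6$ in $G$ has an odd chord, i.e. $G\in\och$.

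Next I would show $\och\subseteq\qua\cbg$. Let $G\in\och$ and let $L,R$ be an arbitrary bipartition of $V$; I must check $G[\LR]\in\cbg$. The graph $G[\LR]$ is bipartite, so let $C$ be any cycle of length $\geq 6$ in it; then $C$ is an even cycle of length $\geq 6$ in $G$, so by hypothesis $C$ has an odd chord $e=v_iv_j$ in $G$. Traversing $C$ inside $G[\LR]$ its vertices alternate between $L$ and $R$, and since $i-j$ is odd the two endpoints of $e$ lie on opposite sides; therefore $e\in E(G[\LR])$, so $C$ has a chord in $G[\LR]$. Thus every cycle of length $\geq 6$ in $G[\LR]$ has a chord, so $G[\LR]\in\cbg$; as $L,R$ was arbitrary, $G\in\qua\cbg$.

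The argument is essentially bookkeeping, and I do not expect a genuine obstacle. The one point that needs care is the parity translation: checking that ``odd chord of $C$'', defined via positions along the cycle independently of the traversal direction, really does coincide with ``edge crossing the bipartition'' whenever $C$ is embedded in a bipartite graph with that bipartition, and that extending the alternating bipartition of $C$ to the remaining vertices of $V$ leaves $C$ intact as a subgraph of $G[\LR]$. Everything else is immediate from the definitions of $\cbg$, $\och$ and $\qua\cC$.
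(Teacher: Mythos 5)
Your proof is correct and is essentially the paper's own argument: both directions rest on the same observation that on a cycle embedded in $G[\LR]$ the bipartition is alternating, so chords crossing $L{:}R$ are exactly the odd chords; the paper merely phrases the two inclusions contrapositively (a cycle with only even chords becomes a hole in $G[\LR]$, and a hole in some $G[\LR]$ can have only even chords in $G$), while you argue them directly. No gap.
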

\begin{proof}
  $G\notin\och$ if it has as even cycle $C$ with only even chords. Then $C$ is a hole in $G[\LR]$, for any bipartition $L,R$ of $V$ which is alternating on $C$. Thus $G[\LR]\notin\cbg$, so $G\notin\qua\cbg$. Conversely, suppose that $G\notin\qua$\cbg. Then there is some bipartition $L,R$ of $V$ such that $G[\LR]$ contains a hole $C$. The edges of $G[C]$ that are not in $G[\LR]$ must be even chords of $C$, so $C$ has only even chords in $G$. Thus $G\notin\och$.
\end{proof}
In~\cite{DyeMue18}, some other examples of \qua\ classes are discussed. As a final example here, in section \ref{sss:qchains} we consider the \qua class corresponding to the class \chain, of \emph{chain graphs}.

Our motivation for introducing this concept is that methods and results for bipartite graph classes may be easily extendible to the corresponding \qua class. In particular, we are interested in the case of \emph{monotone graphs}.

\subsection{Quasimonotone graphs}
For the class \mono, of monotone graphs, we will denote the hereditary (by Lemma~\ref{lem:quasiclass1}) class $\qua\mono$ by $\qmon$, and a graph $G\in\qmon$ will be called \emph{quasimonotone}. All monotone graphs are quasimonotone, by Lemma~\ref{lem:quasiclass2}.
Since \mono$\,\subset\,$\cbg, \qmon\,$\subset$\,\och, by Lemmas~\ref{lem:quasiclass1} and~\ref{lem:quasiclass3}. So the switch chain is ergodic on quasimonotone graphs, since we have \och\,$\subset$\,\switch.
\subsubsection{Unit interval graphs}
A \emph{unit interval graph} $G$ (also called a \emph{proper interval graph}, \emph{claw-free interval graph} or \emph{indifference graph}) is the \emph{intersection graph} of a set of unit intervals $v_i=[x_i,x_i+1]$ ($i\in[n]$)
on the real line. That is, $G=(V,E)$, where $V=\{v_i:i\in[n]\}$ and $v_iv_j\in E$ if and only if $i\neq j$ and $v_i\cap v_j\neq \emptyset$. The class of unit interval graphs will be denoted by \unii.

\unii\ is a hereditary class, with the following forbidden subgraphs: all chordless cycles $C_k$ of length $k\geq 4$, the \emph{claw}, the \emph{3-sun} and its complement, the \emph{net}. Our interest in this class results from the following.
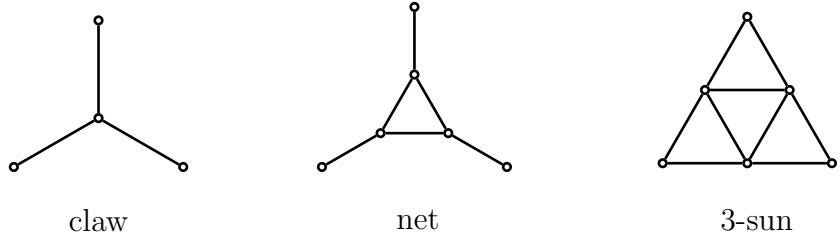
\begin{figure}[htb]
\centerline{\begin{tikzpicture}
\begin{scope}[yshift=0.6cm,scale=1.3]
\draw (0,0) node[vertex] (0) {} +(90:1)  node[vertex] (1) {}
+(-30:1)  node[vertex] (2) {} +(210:1)  node[vertex] (3) {} ;
\draw (0)--(1) (0)--(2) (0)--(3);
\end{scope}
\draw (0,-0.75) node {claw} ;
\begin{scope}[xshift=3.75cm,yshift=0.4cm,scale=0.9]
\draw (0,0) node[vertex] (0) {} +(60:1)  node[vertex] (1) {} +(0:1)  node[vertex] (2) {}
(0)+(210:1) node[vertex] (3) {} (2)+(-30:1) node[vertex] (4) {} (1)+(90:1) node[vertex] (5) {};
\draw (0)--(1)--(2)--(0)--(3) (2)--(4) (1)--(5);
\end{scope}
\draw (4.25,-0.75) node {net} ;
\begin{scope}[xshift=7.5cm,scale=0.75]
\draw (0,0) node[vertex] (0) {} +(60:1.5)  node[vertex] (1) {} +(60:3)  node[vertex] (2) {}
(1.5,0) node[vertex] (3) {} (3,0) node[vertex] (4) {} +(120:1.5)  node[vertex] (5) {};
\draw (0)--(3)--(4)--(5)--(2)--(1)--(0) (1)--(3)--(5)--(1) ;
\end{scope}
\draw (8.75,-0.75) node[empty] {3-sun} ;
\end{tikzpicture}}
\caption{Forbidden subgraphs for unit interval graphs}\label{quasi:fig20}
\end{figure}

\begin{theorem}\label{quasi:thm010}
  \unii\,$\subset$\,\qmon.
\end{theorem}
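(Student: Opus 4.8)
The plan is to prove the inclusion \unii\,$\subseteq$\,\qmon\ directly, and then exhibit a separating example for strictness. For the inclusion, the natural route is to take a unit interval graph $G=(V,E)$ with a representation $v_i=[x_i,x_i+1]$ indexed so that $x_1\le x_2\le\cdots\le x_n$, fix an arbitrary bipartition $L,R$ of $V$, and show that $G[\LR]$ is monotone. Since \mono\ is exactly the class of bipartite graphs whose biadjacency matrix has a staircase (doubly-convex, or ``proper interval bigraph'') structure, it suffices to exhibit orderings of $L$ and of $R$ — and the obvious candidates are simply the restrictions of the global order $x_1\le\cdots\le x_n$ to $L$ and to $R$. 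So first I would recall/cite from \cite{DyJeMu17} the characterisation of monotone graphs we need (staircase biadjacency matrix; equivalently, the adjacency-closure property that if $a<a'$ in $L$, $b<b'$ in $R$, $ab'\in\LR$ and $a'b\in\LR$ then also $ab\in\LR$ and $a'b'\in\LR$), so that the target reduces to verifying a combinatorial condition on the cut-set.

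Second, I would verify that condition using the unit-interval geometry. The key fact about unit interval graphs with sorted left endpoints is that adjacency is ``interval-like'' in the ordering: $v_iv_j\in E$ (with $i<j$) iff $x_j\le x_i+1$, and moreover the closed neighbourhood of each vertex is a contiguous block in the order $v_1,\dots,v_n$ (this is the standard consecutive/umbrella property of proper interval graphs). From this, the staircase condition on $G[\LR]$ follows almost mechanically: if $a<a'$ lie in $L$, $b<b'$ lie in $R$, and both $ab'$ and $a'b$ are edges, then — using that $x$-values are monotone along each of the four vertices in global order and that an edge $v_iv_j$ with $i<j$ means $x_j\le x_i+1$ — one gets $x_{b'}\le x_a+1$ and, combining with $a<a'$ and $b<b'$, the inequalities forcing $ab\in E$ and $a'b'\in E$. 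I should be a little careful about the relative global order of $a,a',b,b'$ (there are a few cases depending on how the two chains $a<a'$ and $b<b'$ interleave), but each case is a short inequality chase using $x_1\le\cdots\le x_n$ and the edge criterion $x_j\le x_i+1$. That establishes $G[\LR]\in\mono$ for every bipartition, i.e. $G\in\qua\mono=\qmon$.

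Third, for strictness of the inclusion I would produce a small quasimonotone graph that is not a unit interval graph. The cleanest candidates are the forbidden subgraphs for \unii\ listed just above — the claw, the net, or the 3-sun — provided they (or some slightly larger graph built from them) can be checked to lie in \qmon. The claw $K_{1,3}$ is the natural first try: it is bipartite, and one checks that for every bipartition $L,R$ the cut-set $G[\LR]$ is a subgraph of $K_{1,3}$, hence a star, which is trivially monotone; so $K_{1,3}\in\qmon\setminus\unii$. (If one prefers a connected non-bipartite witness, the net or 3-sun would work, at the cost of checking the staircase property over all of their bipartitions.) Combined with the inclusion this gives the strict containment stated.

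The main obstacle I anticipate is not any single hard idea but getting the case analysis in the second step clean: one must pin down precisely which formulation of ``monotone'' from \cite{DyJeMu17} to invoke and then check the staircase/adjacency-closure condition on $G[\LR]$ across the possible interleavings of the two chains in the global endpoint order. A secondary subtlety is choosing the representation and index order so that ``left endpoints sorted'' also gives the neighbourhood-contiguity property with no ties causing trouble (standard, but worth stating explicitly). Everything else — reducing to the cut-set via Lemma~\ref{lem:quasiclass2}-style reasoning is not even needed here since we argue the membership in $\qua\mono$ directly, and the strictness example — is routine.
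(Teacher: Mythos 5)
Your proposal is correct, but it proves the inclusion by a different route than the paper. The paper's proof is essentially a one-liner: for any bipartition $L,R$ of a unit interval graph $G$, the cut graph $G[\LR]$ is by definition a \emph{unit interval bigraph}, and it then cites Hell and Huang~\cite{HelHua04}, who showed that unit interval bigraphs coincide with \mono\ (proper interval bigraphs / bipartite permutation graphs); strictness is left implicit, since just before the theorem it is noted (via Lemma~\ref{lem:quasiclass2}) that all monotone graphs are quasimonotone, and monotone graphs such as $C_4$ are not unit interval. You instead verify membership in \mono\ directly: taking the representation with sorted left endpoints, adjacency becomes $|x_u-x_v|\le 1$, and the strong-ordering (staircase) condition for $a<a'$ in $L$, $b<b'$ in $R$ with $ab',a'b\in\LR$ follows from a short inequality chase ($x_a-x_b\le x_{a'}-x_b\le 1$ or $x_b-x_a\le x_{b'}-x_a\le 1$, and symmetrically for $a'b'$), so the restriction of the global endpoint order is a strong ordering of $G[\LR]$ and no real case analysis over interleavings is needed. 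Your argument is self-contained and elementary, at the cost of having to fix the strong-ordering characterisation of \mono\ (from~\cite{SpBrSt87}) rather than its staircase definition; the paper's argument is shorter but rests entirely on the quoted equivalence from~\cite{HelHua04}. Your explicit strictness witness $K_{1,3}$ (quasimonotone because every cut is a star plus isolated vertices, or simply because it is monotone and \mono$\,\subseteq\,$\qmon\ by Lemma~\ref{lem:quasiclass2}, yet claw-free-ness fails for \unii) is sound and in fact makes explicit a point the paper only leaves to the surrounding discussion.
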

\begin{proof}
Let $G=(V,E)\in\unii$, and suppose that $L,R$ is any bipartition of $V$. Then, by definition, $G[\LR]$ is a \emph{unit interval bigraph}~\cite{HelHua04}. It is shown in~\cite{HelHua04} that the class of unit interval bigraphs coincides with the class \mono. Thus $G[\LR]$ is a monotone graph, and hence $G$ is quasimonotone.
\end{proof}
%

Clearly \mono\,$\cup$\,\unii$\,\subseteq\,$\qmon, but the class is considerably larger than this, and there seems to be no simple characterisation of all graphs in the class. In Fig.~\ref{fig:quasiexample}(a), we give an example of a quasimonotone graph which is not monotone (because it is nonbipartite) and not unit interval (because it is not chordal).  In Fig.~\ref{fig:quasiexample}(b), we give an example of a quasimonotone graph which is chordal (so not monotone) but not unit interval (because it contains claws). We show in section~\ref{sec:rapidmixing} below that the switch chain is rapidly mixing in the class \qmon. Therefore, the applicability of the switch chain requires a recognition problem for quasimonotone graphs. In particular, can we recognise a quasimonotone graph in polynomial time? Trivially, this problem is only in \coNP, by guessing a bipartition $L,R$, and using an algorithm for recognising monotone graphs~\cite{SpBrSt87} to show that $G[\LR]$ is not monotone. However, we show in~\cite{DyeMue18} that the problem of quasimonotone graph recognition is in \Ptime.
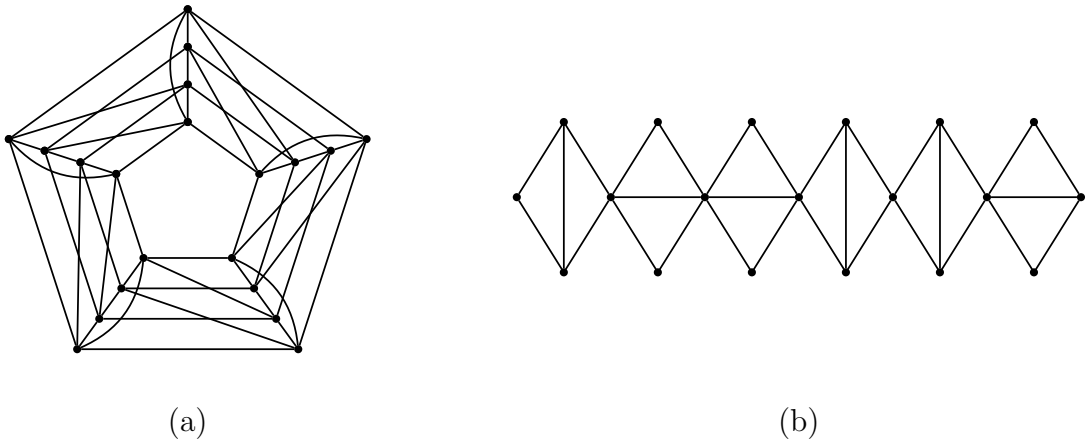
\begin{figure}[ht]
\begin{center}
  \tikzset{every node/.style={circle,draw,thick,fill=black,inner sep=0.8pt}}
  \begin{tikzpicture}[scale=0.5,line width=0.67pt]
    \foreach[count=\r from 2] \n in {11,12,13,14} \node (\n) at ( 90:\r) {};
    \foreach[count=\r from 2] \n in {15,16,17,18} \node (\n) at (162:\r) {};
    \foreach[count=\r from 2] \n in {19,20,21,22} \node (\n) at (234:\r) {};
    \foreach[count=\r from 2] \n in {23,24,25,26} \node (\n) at (306:\r) {};
    \foreach[count=\r from 2] \n in {27,28,29,30} \node (\n) at ( 18:\r) {};
    \foreach[count=\r from 2] \n in {31,32,33,34} \node (\n) at ( 90:\r) {};
    \foreach[count=\r from 2] \n in {35,36,37,38} \node (\n) at (162:\r) {};
    \foreach[count=\m from 15] \n in {11,12,...,30} \draw (\n)--(\m);
    \draw (11)--(12)--(13)--(14) to[bend right] (11)
          (15)--(16)--(17)--(18) to[bend right] (15)
          (19)--(20)--(21)--(22) to[bend right] (19)
          (23)--(24)--(25)--(26) to[bend right] (23)
          (27)--(28)--(29)--(30) to[bend right] (27);
    \draw (11)--(17)  (12)--(18)
          (15)--(21)  (16)--(22)
          (19)--(25)  (20)--(26)
          (23)--(29)  (24)--(30)
          (27)--(33)  (28)--(34);
    \draw (0,-6) node[empty] {(a)} ;
  \begin{scope}[xscale=1.25,xshift=6cm]
    \foreach \n in {2,4,6,8,10,12} \node (b\n) at (\n,-2) {} ;
    \foreach \n in {2,4,6,8,10,12} \node (t\n) at (\n,2) {} ;
    \foreach \n in {1,3,5,7,9,11,13} \node (m\n) at (\n,0) {} ;
    \foreach \n in {1,3,5,7,9,11} {\draw (m\n) -- ++(1,2) ;};
    \foreach \n in {3,5,7,9,11,13} {\draw (m\n) -- ++(-1,2);};
    \foreach \n in {1,3,5,7,9,11} {\draw (m\n) -- ++(1,-2) ;};
    \foreach \n in {3,5,7,9,11,13} {\draw (m\n) -- ++(-1,-2);};
    \draw (t2)--(b2) (t8)--(b8) (m3)--(m5)--(m7) (t10)--(b10) (m11)--(m13) ;
    \draw (7,-6) node[empty] {(b)} ;
  \end{scope}
  \end{tikzpicture}
\end{center}
\caption{Two quasimonotone graphs}\label{fig:quasiexample}
\end{figure}

\subsection{Rapid mixing of the switch chain}\label{sec:rapidmixing}

We will now show that the switch chain has polynomial time convergence on the class of quasimonotone graphs.

To do this, we simply extend to quasimonotone graphs the analysis for monotone graphs given in~\cite[Sec.\,3]{DyJeMu17}. We construct a canonical path between any pair of perfect matchings $X$, $Y$ in $G$ by considering the set of alternating cycles in $X\oplus Y$. Since quasimonotone graphs form a hereditary class, we can reduce the problem to constructing a canonical path for switching each of these cycles, taken in some canonical order. Each such cycle $H$ is an alternating Hamilton cycle in the graph $G'=G[H]$. Note that $G'$ is quasimonotone, by heredity, and has an even number $n$ of vertices, since $H$ is alternating. We will denote the restrictions of $X$ and $Y$ to $G'$ by $X'$ and $Y'$.

Now consider the alternating bipartition $L,R$ of $H$, which gives a bipartition of $G'$ such that
$|L|=|R|=n/2$. Since $G'$ is quasimonotone, $G'[\LR]$ is monotone, and we have $H\subseteq G'[\LR]$. Hence we can use the ``mountain climbing'' technique of~\cite{DyJeMu17} to construct a canonical path and an encoding for switching $X'$ to $Y'$ in $G'[\LR]$. This is also a canonical path for switching $X'$ to $Y'$ in $G'$, with length $O(n^2)$, as in~\cite{DyJeMu17}.

The rest of the analysis follows closely that in~\cite[Sec.\,3]{DyJeMu17}, noting only that $L,R$ each have at most $n/2$ vertices, rather than $n$, as in \cite{DyJeMu17}. However the conclusion, that the mixing time is $O(n^7\log n)$, remains the same. See~\cite{DyJeMu17} for further details.

Of course, the starting configuration for the switch chain must be a perfect matching. In the case of monotone graphs, a simple linear time algorithm was given in~\cite{DyJeMu17}. This does not extend to quasimonotone graphs, but the $O(n^3)$ algorithm of~\cite{Lawler76} for general graphs suffices to obtain $O(n^7\log n)$ mixing time. However, we know that an $O(n^2)$ algorithm exists, by making use of the quasimonotone structure. We will not give details here, since this is not a critical issue. We leave open the question of the existence of a $o(n^2)$ algorithm for quasimonotone graphs.

\subsubsection{Forbidden subgraphs of quasimonotone graphs}\label{sec:quasiforbid}

The forbidden subgraphs for the class \mono\ are  all (even) holes, together with the three 7-vertex graphs shown in Fig.~\ref{quasi:fig10}, as shown in~\cite{Kohl99}.
\begin{figure}[H]
\centerline{\begin{tikzpicture}[every node/.style={circle,draw,inner sep=0pt,minimum size=1.25mm},scale=0.8]
  \begin{scope}
    \path (0:0) node[fill=black] (c) {} (-30:1) node (a1) {} (-30:2) node[fill=black] (a2) {}
    (90:1) node (b1) {} (90:2) node[fill=black] (b2) {}  (210:1) node (c1) {} (210:2) node[fill=black] (c2) {};
     \draw (a2)--(a1) (b1)--(c)--(a1) (c)--(c1) (b2)--(b1) (c2)--(c1) ;
  \end{scope}
  \draw (0,-1.75) node[empty]  {tripod} ;
\begin{scope}[xshift=4.4cm,yshift=-1cm,xscale=0.9]
    \path  (2,0) node[fill=black] (a) {} (0,0) node (b) {} (2,0.75) node (c) {} (0,0.75) node[fill=black] (d) {}
     (2,1.5) node[fill=black] (e) {} (0,1.5) node (f) {}  (0,3) node[fill=black] (g) {} ;
     \draw  (g)--(f)--(e)--(c)--(d)--(b) (a)--(c) (d)--(f) ;
  \end{scope}
  \draw (5.25,-1.75) node[empty]  {armchair} ;
\begin{scope}[xshift=9cm,yshift=-1cm,xscale=0.75]
    \path (0,0) node (a) {} (1,0) node[fill=black] (b) {} (2,0) node (c) {}
    (0,1) node[fill=black] (d) {} (1,1) node (e) {}  (2,1) node[fill=black] (f) {} (1,3) node[fill=black] (g) {};
     \draw (a)--(b)--(c)--(f)--(e)--(d)--(a) (b)--(e)--(g) ;
  \end{scope}
  \draw (9.7,-1.75) node[empty]  {stirrer} ;
\end{tikzpicture}}
\caption{Forbidden subgraphs for monotone graphs}\label{quasi:fig10}
\end{figure}
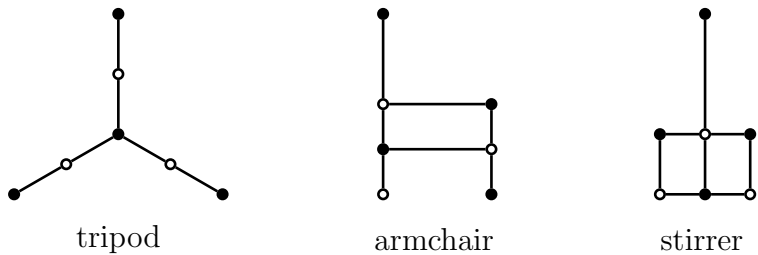
If $H$ is a bipartite graph, a graph $H'$ will be called \emph{pre-}$H$ if it has a bipartition $L,R$ such that $H'[\LR]\cong H$. Thus, if a class $\cC$ of bipartite graphs can be characterised by the set $\cF$ of forbidden subgraphs then $\qua\cC$ can characterised by forbidding all pre-$F$ ($F \in \cF$) as induced subgraphs.

We will call any pretripod, prearmchair or prestirrer, a \emph{flaw}. A \emph{flawless} graph $G$ will be one which contains no flaw as an induced subgraph. Let us call this (hereditary) class \noflaw. Since all flaws have only seven vertices, we can test in $O(n^7)$ time whether an input graph $G$ on $n$ vertices is flawless. Thus membership in \noflaw\ is certainly in \Ptime.

However, preholes can have unbounded size. It is easy to see that the preholes are all even cycles that have no odd chord, which is an infinite class. These preholes are clearly the forbidden subgraphs for the class \och. Thus quasimonotone graphs are characterised by the absence of preholes, pretripods, prestirrers and prearmchairs, which is equivalent to the statement $\qmon=\noflaw\,\cap\,\och$.

Unfortunately, this characterisation of \qmon\ does not seem to lead to polynomial time recognition. We have observed above that we do not know whether the class \och\ can be recognised in polynomial time, so we cannot simply test whether $G$ is flawless and odd chordal. However, we show in~\cite{DyeMue18} that quasimonotone graphs can be recognised in polynomial time.

\subsection{Quasi-chain graphs}\label{sss:qchains}

Chain graphs form a subclass of monotone graphs, and there is a trivial algorithm (see~\cite{DyJeMu17}) for counting matchings in such graphs.  However, this does not extend to the quasi-class. We know of no better analysis of the switch chain, and no better algorithm for either approximately or exactly counting matchings, than those we have given for quasimonotone graphs. However, we will show that there is a simpler recognition algorithm for graphs in this quasi-class than that given in ~\cite{DyeMue18}.

This class also illustrates a definitional issue. Thee class of chain graphs, \chain, is not closed under disjoint union, so the quasi-class does not include the class itself.
For example, in the simple chain graph of Fig.~\ref{chain:fig010}, $G[\LR]$ is a union of two chain graphs, so \qua\chain\  does not contain \chain. Therefore, we define instead the class \chains, which is the class of graphs such that every connected component is a chain graph.

\begin{figure}[ht]
\centering{%
\begin{tikzpicture}[xscale=0.7,yscale=1.2,font=\small]
\path
(0,0) node[b] (0) {}
(0,1) node[b] (1) {}
(2,0) node[b] (0') {}
(2,1) node[b] (1') {};
\draw (0)--(0')  (0)--(1')  (1)--(1')   ;
\draw (1,-0.75) node{$G$};
\end{tikzpicture}
\hspace*{2cm}
\begin{tikzpicture}[xscale=0.7,yscale=1.2,font=\small]
\path
(0,0) node[w] (0) {}
(0,1) node[b] (1) {}
(2,0) node[b] (0') {}
(2,1) node[w] (1') {};
\draw (0)--(0')  (0)--(1')  (1)--(1')   ;
\draw (1,-0.75) node{$L,R$};
\end{tikzpicture}
\hspace*{2cm}
\begin{tikzpicture}[xscale=0.7,yscale=1.2,font=\small]
\path
(0,0) node[w] (0) {}
(0,1) node[b] (1) {}
(2,0) node[b] (0') {}
(2,1) node[w] (1') {};
\draw (0)--(0')  (1)--(1')   ;
\draw (1,-0.75) node{$G[\LR]$};
\end{tikzpicture}}
\caption{}\label{chain:fig010}
\end{figure}

Now \qua\chains\ will contain \chains, by Lemma~\ref{lem:quasiclass2}. Clearly $\chains\subseteq\mono$, since each chain graph is monotone, and \mono\ is closed under disjoint union. However, it is known that the forbidden subgraph for the class \chain\ is $2K_2$, but we see from Fig.~\ref{chain:fig010} that $2K_2\in \chains$. So we need a different characterisation of \chains, in which the forbidden subgraphs are connected. This is not difficult. The following result seems to be folklore. It is stated without proof in~\cite[Proposition~2]{BraMos03} and~\cite[Property~1]{GeHeSc03}. (In fact~\cite[Proposition~2]{BraMos03} does have a two-line proof, but it is a proof of Corollary~\ref{cor:forb} below.)  It seems to be attributed to~\cite{BacTuz90}, which does contain related results, but not this. Therefore, we will give a short proof via monotone graphs. The graphs $P_5$ and $C_5$ are shown in Fig.~\ref{chain:fig030}.

\begin{lemma}\label{lem:P5}
  $G\in\chains$ if and only if it is bipartite and $P_5$-free.
\end{lemma}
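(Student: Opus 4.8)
The plan is to prove both directions of the equivalence, using the known characterisation of \chain\ (the connected chain graphs) by the single forbidden induced subgraph $2K_2$, together with the fact that \chains\ is the class of graphs all of whose connected components are chain graphs.

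For the forward direction, suppose $G\in\chains$. Then $G$ is a disjoint union of chain graphs, hence bipartite. For $P_5$-freeness, note that $P_5$ is connected, so an induced $P_5$ would lie entirely within one connected component $G_0$ of $G$; but $G_0$ is a chain graph, and a chain graph contains no induced $2K_2$, whereas $P_5$ contains an induced $2K_2$ (its two end edges). This contradiction gives $P_5$-freeness. So far this uses only elementary facts and the $2K_2$ characterisation of connected chain graphs.

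For the converse, suppose $G$ is bipartite and $P_5$-free; I must show every connected component is a chain graph, i.e. is $2K_2$-free. Fix a connected component $G_0$ and suppose for contradiction it contains an induced $2K_2$, say edges $ab$ and $cd$ with no other edges among $\{a,b,c,d\}$. Since $G_0$ is connected, there is a shortest path in $G_0$ joining the edge $ab$ to the edge $cd$; take a shortest path $Q$ from $\{a,b\}$ to $\{c,d\}$, say from $b$ to $c$ without loss of generality, with internal vertices disjoint from $\{a,b,c,d\}$ except at its endpoints. Because the $2K_2$ is induced, $Q$ has length at least $2$. Now $a$–$b$–$Q$–$c$–$d$ is a walk, and one analyses its chords. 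If $Q$ has length exactly $2$, say $b$–$x$–$c$, then $a,b,x,c,d$ is a $P_5$ unless there is a chord; the only possible chords are $ax$ and $xd$ (the edges $ac,ad,bc,bd,bx',\dots$ are excluded: $ac,ad,bc,bd$ because $2K_2$ is induced, and $x$ is the only internal vertex). But $x$ cannot be adjacent to both $a$ and $d$: if $xa\in E$ and $xd\in E$ then, using bipartiteness, $a,b$ are on opposite sides and $c,d$ on opposite sides, and $x$ adjacent to $b$ forces $x$ on the side opposite $b$, i.e. the side of $a$; then $x$ adjacent to $a$ contradicts bipartiteness. Hence at most one of $xa,xd$ is an edge; without loss of generality $xd\notin E$, and then $a,b,x,c,d$ induces a $P_5$ (checking the remaining non-adjacencies via the induced $2K_2$ and bipartiteness), contradiction. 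If $Q$ is longer, pick the sub-path of length $2$ at its $b$-end, $b$–$x_1$–$x_2$; then $x_2\neq c$, and $a$–$b$–$x_1$–$x_2$ together with the fact that $Q$ is a shortest inter-edge path (so $x_1,x_2\notin\{c,d\}$ and $x_1,x_2$ nonadjacent to $a$, and $b$ nonadjacent to $x_2$) yields that $a,b,x_1,x_2$ plus an appropriate further vertex gives an induced $P_4$ that extends to a $P_5$; more cleanly, one shows directly that the shortest closed-up structure between the two edges always contains an induced $P_5$ when $Q$ has length $\ge 2$.

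The main obstacle is the converse, and specifically keeping the case analysis on the shortest connecting path $Q$ clean: one must use bipartiteness to rule out the "wrong-parity" chords and the shortest-path property to rule out skipping chords, and then exhibit the induced $P_5$. Since the paper explicitly wants a short proof "via monotone graphs", I would in fact prefer to route the converse through that: a bipartite $2K_2$-free \emph{connected} graph is exactly a chain graph, and the cleanest way to handle the disconnected case is to observe that if $G$ is bipartite and $P_5$-free then no induced $2K_2$ of $G$ can have its two edges in the same component (the length-$\ge 2$ connecting path would create an induced $P_5$ by the argument above), so each component is $2K_2$-free, hence a chain graph, hence (being connected) monotone; thus $G\in\chains$. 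The delicate point to get exactly right is that a connecting path of length $\ge 2$ between two edges of an induced $2K_2$, chosen of minimum length, always produces an induced $P_5$ — this is where bipartiteness is essential and is the crux of the argument.
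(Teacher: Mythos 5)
Your forward direction is the same as the paper's (an induced $P_5$ lies in one component and contains an induced $2K_2$, so that component is not a chain graph). For the converse you take a genuinely different route: the paper first shows that a bipartite $P_5$-free graph is monotone (the flaws and all $k$-holes, $k\geq 6$, contain induced $P_5$'s), and then reads an induced $P_5$ off a non-staircase $3\times 2$ submatrix of a component's monotone biadjacency matrix; you instead argue directly that no component can contain an induced $2K_2$, by taking a shortest path $Q$ joining the two edges and extracting an induced $P_5$, and then invoke the $2K_2$-characterisation of chain graphs (which the paper also quotes). Your route is more elementary -- it avoids the forbidden-subgraph characterisation of \mono\ entirely -- at the cost of a small case analysis, and it does work: in fact the cleanest way to finish it is to observe that if $Q=b,q_1,\dots,q_{\ell-1},c$ is a shortest $\{a,b\}$--$\{c,d\}$ path (so $\ell\geq 2$ and $Q$ is induced, with no shortcut chords from $a$ or $d$ to interior vertices by minimality, and $aq_1,\thsp q_{\ell-1}d\notin E$ by bipartiteness, since $q_1$ lies in the part of $a$ and $q_{\ell-1}$ in the part of $d$), then $a,b,q_1,\dots,q_{\ell-1},c,d$ is an induced path on at least five vertices, hence contains an induced $P_5$.

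Two points in your write-up need repair, though both are within your own framework. In the length-two case your conclusion ``at most one of $xa,xd$ is an edge; wlog $xd\notin E$, and then $a,b,x,c,d$ induces a $P_5$'' is wrong as stated: if $xa\in E$ the five vertices do not induce a $P_5$. But your own parity argument already shows more: $x$ is adjacent to $b$ and to $c$, so $x$ lies in the same part as both $a$ and $d$, whence \emph{neither} $xa$ nor $xd$ is an edge, and the $P_5$ is induced with no wlog needed. Second, the case of a longer $Q$ is only gestured at (``an appropriate further vertex gives an induced $P_4$ that extends to a $P_5$''); it should be written out, e.g.\ via the concatenated induced path above, where minimality of $Q$ excludes all chords from $a$ and $d$ except the two killed by bipartiteness. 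With these repairs your proof is complete and correct, and stands as a self-contained alternative to the paper's detour through monotone graphs.
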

\begin{proof}
  If $G$ is not bipartite, it is  clearly not in \chains. If it has an induced $P_5$, this must be entirely in some component $G'$ of $G$. But $P_5$ contains an induced $2K_2$, by deleting its middle vertex. So $G'$ cannot be a chain graph, and hence $G\notin\chains$.
  Conversely, suppose $G$ is bipartite and $P_5$-free. It cannot contain a flaw, or a $k$-hole for $k>4$, since the flaws contain an induced $P_5$, and so does every $k$-hole for any $k\geq 6$. See Fig.~\ref{chain:fig020}. Thus $G\in\mono$.

  \begin{figure}[hbt]
\centering{%
\begin{tikzpicture}[scale=0.75]
  \begin{scope}
    \path (0:0) node[b] (c) {} (-30:1) node[b](a1) {} (-30:2) node[b] (a2) {}
    (90:1) node[b](b1) {} (90:2) node[b] (b2) {}  (210:1) node[w](c1) {} (210:2) node[w] (c2) {};
     \draw  (b2)--(b1)--(c)--(a1)--(a2); \draw[dashed] (c)--(c1)--(c2);
  \end{scope}
  \draw (0,-1.75) node[empty]  {tripod} ;
\begin{scope}[xshift=5cm,yshift=-1cm,xscale=0.75]
    \path  (2,0) node[b] (a) {} (0,0) node[w](b) {} (2,0.75) node[b](c) {} (0,0.75) node[w] (d) {}
     (2,1.5) node[b] (e) {} (0,1.5) node[b](f) {}  (0,3) node[b] (g) {} ;
     \draw  (g)--(f)--(e)--(c) (a)--(c); \draw[dashed]  (b)--(d)--(f) (d)--(c) ;
  \end{scope}
  \draw (5.8,-1.75) node[empty]  {armchair} ;
\begin{scope}[xshift=10cm,yshift=-1cm,xscale=1]
    \path (0,0) node[b](a) {} (1,0) node[w] (b) {} (2,0) node[b](c) {}
    (0,1) node[b] (d) {} (1,1) node[b](e) {}  (2,1) node[b] (f) {} (1,3) node[w] (g) {};
     \draw (c)--(f)--(e)--(d)--(a); \draw[dashed] (a)--(b)--(c) (b)--(e)--(g) ;
  \end{scope}
  \draw (11.1,-1.75) node[empty]  {stirrer} ;
\begin{scope}[xshift=15cm,yshift=-1cm,xscale=0.5,yscale=0.67]
    \path (0,0) node[b](a) {} (0,2) node[b] (b) {} (2,2) node[b](c) {}
    (4,2) node[b] (d) {} (4,0) node[b](e) {}  (2,0) node[w] (f) {} ;
     \draw (a)--(b)--(c)--(d)--(e); \draw[dashed] (e)--(f)--(a) ;
  \end{scope}
  \draw (16.1,-1.75) node[empty]  {6-hole} ;
\end{tikzpicture}}
\caption{}\label{chain:fig020}
\end{figure}

Now consider any connected component $G'$ of $G$, with monotone biadjacency matrix $A'$.
If $G'$ is not a chain graph, then $A'$ must contain a submatrix of the form below, or its transpose.
\[\begin{array}{c|ccc}
&x&y\\ \hline
u&1&0\\
v&1&1\\
w&0&1
\end{array}\]
But this corresponds to an induced path $(u,x,v,y,w)$, which is a $P_5$, giving a contradiction. Thus $G'$ must be a chain graph, and $G\in\chains$.
\end{proof}
\begin{corollary}\label{cor:forb}
$G\in\chains$ if and only if it is (triangle,\thsp$C_5,P_5$)-free.
\end{corollary}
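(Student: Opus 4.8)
The plan is to derive the corollary directly from Lemma~\ref{lem:P5}, which already identifies $\chains$ with the class of bipartite $P_5$-free graphs. So it suffices to show that, \emph{in the presence of $P_5$-freeness}, bipartiteness is equivalent to being $(C_3,C_5)$-free; equivalently, a $P_5$-free graph is bipartite if and only if it contains no induced triangle and no induced $C_5$.

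The forward implication is immediate: if $G\in\chains$ then by Lemma~\ref{lem:P5} it is bipartite, hence contains no odd cycle at all, so in particular no triangle and no $C_5$; and it is $P_5$-free. Thus $G$ is (triangle, $C_5$, $P_5$)-free.

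For the converse, suppose $G$ is (triangle, $C_5$, $P_5$)-free and, for contradiction, that $G$ is not bipartite. Then $G$ contains an odd cycle; let $C$ be a shortest one, of length $k$ (so $k$ is odd, $k\ge3$). The first step is the standard observation that $C$ must be chordless: a chord of $C$ would split it into two shorter cycles whose lengths sum to $k+2$, one of which is odd and strictly shorter than $k$, contradicting minimality. Hence $C$ is an induced $C_k$. If $k=3$ this is an induced triangle, and if $k=5$ an induced $C_5$ --- both excluded. If $k\ge7$, then any five consecutive vertices of $C$ induce a $P_5$ (the two endpoints of this segment are non-adjacent in $C$ when $k\ge6$, and $C$ is chordless), which is again excluded. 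In every case we reach a contradiction, so $G$ is bipartite; being also $P_5$-free, Lemma~\ref{lem:P5} gives $G\in\chains$.

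I do not expect any real obstacle here: the only point requiring a sentence of justification is the ``shortest odd cycle is induced'' fact, and the rest is a short case analysis on the cycle length using $P_5$-freeness to handle $k\ge7$.
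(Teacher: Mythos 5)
Your proof is correct and follows essentially the same route as the paper: both directions reduce to Lemma~\ref{lem:P5}, with the converse resting on the observation that a $P_5$-free graph with no triangle and no $C_5$ has no (induced) odd cycle and is therefore bipartite. Your write-up merely makes explicit the ``shortest odd cycle is chordless'' step that the paper's two-line proof leaves implicit.
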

\begin{proof}
  If $G$ is $P_5$-free, it has no holes of size 6 or more. Therefore, unless it has a triangle or a 5-hole, it must be bipartite. So, if we exclude these two possibilities, Lemma~\ref{lem:P5} implies $G\in\chains$. The converse is also clear from Lemma~\ref{lem:P5}. All graphs in \chains\ are $P_5$-free and bipartite, so cannot have a triangle or a 5-cycle.
\end{proof}
\begin{corollary}
\qua\chains\ is precisely the class of (pre-$P_5$)-free graphs, and membership can be recognised in $O(n^5)$ time.
\end{corollary}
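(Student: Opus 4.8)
The plan is to reduce the statement to the finite forbidden-induced-subgraph description of $\chains$ supplied by Corollary~\ref{cor:forb}, together with the elementary fact that a cut graph $G[\LR]$ is always bipartite. First I would unfold the definitions: $G\in\qua\chains$ means $G[\LR]\in\chains$ for every bipartition $L,R$ of $V(G)$, and by Corollary~\ref{cor:forb} this says that each $G[\LR]$ is (triangle, $C_5$, $P_5$)-free. Since $G[\LR]$ has all its edges between $L$ and $R$ it is bipartite, hence contains no odd cycle at all, so the triangle-free and $C_5$-free requirements hold automatically. Thus $G\in\qua\chains$ if and only if $G[\LR]$ is $P_5$-free for every bipartition $L,R$ of $V(G)$.

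Next I would identify this last condition with being (pre-$P_5$)-free; this is the $P_5$-instance of the general pre-$F$ observation of Section~\ref{sec:quasiforbid}, but since $\chains$ is given by a list containing non-bipartite graphs I would re-prove it directly. If some $G[\LR]$ contains an induced $P_5$ on a set $S$, then $L\cap S,\,R\cap S$ is a bipartition of $S$ with $G[S][(L\cap S){:}(R\cap S)]=G[\LR][S]\cong P_5$, so $G[S]$ is a pre-$P_5$. Conversely, if $G[S]$ is a pre-$P_5$ via a bipartition $L_0,R_0$ of $S$, extending $L_0,R_0$ arbitrarily to a bipartition $L,R$ of $V(G)$ leaves the colour of every vertex of $S$ unchanged, so $G[\LR][S]=G[S][L_0{:}R_0]\cong P_5$ and $G[\LR]$ is not $P_5$-free. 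Together with the previous paragraph this shows $\qua\chains$ is exactly the class of (pre-$P_5$)-free graphs.

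For the complexity bound I would use that every pre-$P_5$ has exactly five vertices (a pre-$H$ graph has the same vertex set as its realising cut graph). Hence to decide whether a graph $G$ on $n$ vertices is (pre-$P_5$)-free it suffices to run over the $\binom{n}{5}=O(n^5)$ five-element subsets $S\subseteq V(G)$ and, for each, try the $2^5$ bipartitions of $S$, checking in $O(1)$ time whether any of them makes the cut graph of $G[S]$ isomorphic to $P_5$; this gives an $O(n^5)$ recognition algorithm.

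I do not expect a real obstacle here: the only step that needs a word of justification is the vacuity of the triangle and $C_5$ conditions, which holds simply because a cut graph can never be non-bipartite. If one preferred not to invoke Corollary~\ref{cor:forb}, one could instead start from Lemma~\ref{lem:P5}, writing $\chains=\bip\cap\{P_5\text{-free}\}$ and using that $\bip$ is characterised by forbidding all odd induced cycles, which again never occur as cut graphs; the argument and the conclusion are unchanged.
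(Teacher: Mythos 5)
Your proof is correct and follows essentially the same route as the paper: reduce membership in \qua\chains\ to forbidding pre-$P_5$'s as induced subgraphs (via Lemma~\ref{lem:P5}/Corollary~\ref{cor:forb}), note that a pre-$P_5$ has exactly five vertices, and recognise by brute force over all $\binom{n}{5}$ subsets in $O(n^5)$ time. The only presentational difference is that the paper also lists preholes as candidate forbidden subgraphs and then discards them because every prehole induces a pre-$P_5$, whereas you avoid them from the outset by observing that the non-bipartite forbidden graphs (triangle, $C_5$, and odd holes generally) can never occur in a cut graph $G[\LR]$, and by re-proving the pre-$F$ correspondence directly for $P_5$ --- a harmless and slightly more self-contained variant.
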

\begin{proof}
  From Lemma~\ref{lem:P5}, the forbidden subgraphs for \qua\chains\ are pre-$P_5$'s and preholes. But any prehole has size 6 or more, and so induces a pre-$P_5$. Thus preholes give no new forbidden subgraphs. The pre-$P_5$'s have only 5 vertices, so they can be searched for by brute force in $O(n^5)$ time.
\end{proof}
The class pre-$P_5$ contains the ten graphs shown in Fig.~\ref{chain:fig030}, up to isomorphism. They are in order of the number of edges added to $P_5$, which is given an alternating bipartition.
\begin{figure}[htb]
\centering{
\begin{tikzpicture}[scale=0.67,font=\small]
\begin{scope}[xscale=0.85]
\draw (0,0) node[b] (0) {} (1,0)  node[w] (1) {}
(2,0)  node[b] (2) {} (3,0)  node[w] (3) {}  (4,0)  node[b] (4) {} ;
\draw (0)--(1)--(2)--(3)--(4) ;
\draw (2,-1.5) node {$P_5$} ;
\end{scope}
\begin{scope}[xshift=5.5cm,yscale=1]
\draw (0,0) node[w] (0) {} (0.25,1)  node[b] (1) {}
(1,-0.5)  node[b] (2) {} (1.75,1)  node[b] (3) {}  (2,0)  node[w] (4) {} ;
\draw (2)--(0)--(1) (3)--(4)--(2) (1)--(3);
\draw (1,-1.5) node {$C_5$} ;
\end{scope}
\begin{scope}[xshift=9.5cm]
\draw (0,0) node[b] (0) {} (1,0)  node[w] (1) {}
(2.5,0)  node[w] (2) {} (3.5,0)  node[b] (3) {}  (1.75,1)  node[b] (4) {} ;
\draw (0)--(1)--(2)--(3) (1)--(4)--(2) ;
\draw (1.75,-1.5) node {bull} ;
\end{scope}
\begin{scope}[xshift=14.5cm,yscale=1.5]
\draw (0,-0.5) node[b] (0) {} (0,0.5)  node[w] (1) {}
(1,0)  node[b] (2) {} (2,0)  node[w] (3) {}  (3,0)  node[b] (4) {} ;
\draw (2)--(0)--(1)--(2)--(3)--(4) ;
\draw (1.5,-1) node {co-banner} ;
\end{scope}
\begin{scope}[xshift=19cm,yscale=0.8,xscale=0.9]
\draw (0,0) node[b] (0) {} (1,-1)  node[b] (1) {}
(2,0)  node[w] (2) {} (3,0)  node[b] (3) {}  (1,1)  node[w] (4) {} ;
\draw (0)--(1)--(2)--(3) (1)--(4)--(2) (0)--(4) ;
\draw (1,-2) node {dart} ;
\end{scope}
\end{tikzpicture}\\[3ex]
\begin{tikzpicture}[scale=0.75,font=\small]
\begin{scope}
\draw (0,-0.5) node[b] (0) {} (0,0.5)  node[w] (1) {}
(1,0)  node[b] (2) {} (2,0.5)  node[w] (3) {}  (2,-0.5)  node[b] (4) {} ;
\draw (2)--(0)--(1)--(2)--(3)--(4)--(2) ;
\draw (1,-1.5) node {butterfly} ;
\end{scope}
\begin{scope}[xshift=3.75cm,yshift=-0.5cm,xscale=0.75,yscale=1]
\draw (0,0) node[b] (0) {} (0,1)  node[w] (1) {}
(2,1)  node[w] (2) {} (2,0)  node[b] (3) {}  (1,1.5)  node[b] (4) {} ;
\draw (0)--(1)--(2)--(3)--(0) (1)--(4)--(2);
\draw (1,-1) node {house} ;
\end{scope}
\begin{scope}[xshift=7cm,yscale=1]
\draw (0,0) node[b] (0) {} (0.25,1)  node[w] (1) {}
(1,-0.5)  node[b] (2) {} (1.75,1)  node[w] (3) {}  (2,0)  node[b] (4) {} ;
\draw (2)--(0)--(1)--(2)--(3)--(4)--(2) (1)--(3);
\draw (1,-1.5) node {3-fan} ;
\end{scope}
\begin{scope}[xshift=10.25cm,scale=1.25]
\draw (0,0) node[w] (0) {} (1,-0.5)  node[b] (1) {}
(2,0)  node[b] (2) {} (1,0)  node[b] (3) {}  (1,0.75)  node[w] (4) {} ;
\draw (0)--(1)--(2)--(4)--(3) (0)--(4) (0)--(3)--(2);
\draw (1,-1.25) node {sailboat} ;
\end{scope}
\begin{scope}[xshift=14cm,yscale=1.25]
\draw (0,-0.5) node[b] (0) {} (0,0.5)  node[w] (1) {}
(1,0)  node[b] (2) {} (2,0.5)  node[w] (3) {}  (2,-0.5)  node[b] (4) {} ;
\draw (2)--(0)--(1)--(2)--(3)--(4)--(2) (1)--(3) (0)--(4);
\draw (1,-1.25) node {$W_4$} ;
\end{scope}
\end{tikzpicture}}
\caption{Forbidden subgraphs for \qua\chains}\label{chain:fig030}
\end{figure}

\begin{lemma}\label{lem:holefree}
  Let $\cC$ be a hereditary bipartite graph class. Then $\qua\cC\subseteq\textsc{HoleFree}$ if and only if $\cC\subseteq\chains$.
\end{lemma}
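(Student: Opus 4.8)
The plan is to prove the two implications separately; both rest on Lemma~\ref{lem:P5}, which identifies $\chains$ with the class of bipartite $P_5$-free graphs, so the whole argument is really about producing or excluding an induced $P_5$ inside a cut-subgraph.

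\textbf{($\Leftarrow$).} Assume $\cC\subseteq\chains$ and suppose, for contradiction, that some $G\in\qua\cC$ contains a hole $C$. Pick five consecutive vertices $u_1,u_2,u_3,u_4,u_5$ along $C$ (they exist since $|C|\ge5$), and choose any bipartition $L,R$ of $V(G)$ with $u_1,u_3,u_5\in L$ and $u_2,u_4\in R$, distributing the remaining vertices arbitrarily. Because $C$ is chordless, the only edges of $G$ among $\{u_1,\dots,u_5\}$ are $u_1u_2,u_2u_3,u_3u_4,u_4u_5$, together with $u_5u_1$ in the degenerate case $|C|=5$; of these, exactly the first four cross the cut $\LR$, the potential edge $u_5u_1$ failing to be a cut edge since $u_1$ and $u_5$ both lie in $L$. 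Hence $\{u_1,\dots,u_5\}$ induces a copy of $P_5$ in $G[\LR]$. But $G\in\qua\cC$ forces $G[\LR]\in\cC\subseteq\chains$, which is $P_5$-free by Lemma~\ref{lem:P5} — a contradiction. So no graph of $\qua\cC$ has a hole, i.e.\ $\qua\cC\subseteq\textsc{HoleFree}$.

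\textbf{($\Rightarrow$).} I argue the contrapositive. Suppose $\cC\not\subseteq\chains$ and pick $G\in\cC\setminus\chains$. Since $\cC\subseteq\bip$, $G$ is bipartite, so by Lemma~\ref{lem:P5} it contains an induced $P_5$, and hence $P_5\in\cC$ by heredity. Next I would verify that the $5$-hole $C_5$ lies in $\qua\cC$: a short case analysis on the cycle $1{-}2{-}3{-}4{-}5{-}1$ shows that, up to isomorphism, the graphs $C_5[\LR]$ arising from bipartitions $L,R$ of $V(C_5)$ are precisely $5K_1$ (the trivial bipartition), $P_3\cup 2K_1$ (a single vertex versus the rest), $2K_2\cup K_1$ (two adjacent vertices versus the rest), and $P_5$ (two non-adjacent vertices versus the rest). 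Each of these is a disjoint union of induced subgraphs of $P_5$, so each lies in $\cC$, using that $\cC$ is hereditary and closed under disjoint union. Therefore $C_5\in\qua\cC$, and since $C_5$ is a hole, $\qua\cC\not\subseteq\textsc{HoleFree}$, which is what the contrapositive demands.

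The argument is short, and the only place requiring care is in ($\Leftarrow$): the claim ``the five chosen vertices induce exactly a $P_5$ in $G[\LR]$'' must be stated uniformly over all hole lengths, with attention to the case $|C|=5$, where $u_5u_1$ is a genuine edge of $G$ but is absent from $G[\LR]$ because $u_1,u_5$ were placed on the same side. Everything else — enumerating the four cut-subgraphs of $C_5$ in ($\Rightarrow$), and invoking the closure of $\cC$ under disjoint union (exactly the hypothesis used in Lemma~\ref{lem:quasiclass2}, and the reason this quasi-operation behaves well) — is routine.
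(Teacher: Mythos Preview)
Your argument follows the paper's route: for ($\Leftarrow$), exhibit an induced $P_5$ in the cut-subgraph of a hole; for ($\Rightarrow$), show $C_5\in\qua\cC$ by checking its bipartitions. You are in fact more careful than the paper on two points. In ($\Leftarrow$) you only alternate five consecutive vertices of the hole, which works uniformly for odd holes; the paper's phrase ``alternating bipartition of $H$'' is undefined when $|H|=5$. In ($\Rightarrow$) you correctly observe that the cut-subgraphs of $C_5$ (namely $5K_1$, $P_3\cup 2K_1$, $2K_2\cup K_1$, $P_5$) are \emph{disjoint unions} of induced subgraphs of $P_5$, not induced subgraphs themselves --- the paper's assertion that ``every bipartition of $C_5$ gives $P_5$ or its subgraphs'' is literally false for the $\{1,2\}$ versus $\{3,4,5\}$ split.

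This forces you to invoke closure of $\cC$ under disjoint union, which is absent from the lemma's stated hypotheses. You are right to need it: the ($\Rightarrow$) direction fails without it. Take $\cC$ to be the hereditary closure of $\{P_5\}$, i.e.\ all induced subgraphs of $P_5$. Then $\cC$ is hereditary and bipartite with $\cC\not\subseteq\chains$, yet $\qua\cC$ contains no graph on five or more vertices --- for any such $G$ the bipartition $\{v\}$ versus $V(G)\setminus\{v\}$ yields a star plus isolated vertices on $|V(G)|\ge 5$ vertices, and the only five-vertex member of $\cC$ is $P_5$ itself --- so $\qua\cC\subseteq\textsc{HoleFree}$ holds vacuously. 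Your added hypothesis is therefore not a gap in your proof but a necessary correction to the statement; the paper's own argument tacitly relies on it.
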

\begin{proof}
If $G\notin\qua\textsc{HoleFree}$, then $G$ contains a hole $H$. Then any bipartition of $G$ which extends an alternating bipartition of $H$ contains $P_5$ as a subgraph, Thus $G\notin\qua\textsc{Chains}$, by Lemma~\ref{lem:P5}. Thus $\qua\textsc{Chains}\subseteq\textsc{HoleFree}$.

Now suppose $G\notin\textsc{Chains}$. Then $G$ contains a $P_5$ by Lemma~\ref{lem:P5}. Thus, by heredity, $\cC$ contains $P_5$ and all its subgraphs. Then $\qua\cC$ contains $C_5$, since every bipartition of $C_5$ gives $P_5$ or its subgraphs. Thus $\qua\cC\nsubseteq\textsc{HoleFree}$.
\end{proof}

In particular, we see that if $\cC\nsubseteq\textsc{Chains}$, then $\qua\cC\nsubseteq\textsc{Perfect}$, by \cite{ChRoST06}.

\section{Slow mixing of the switch chain}\label{sec:slowmixing}

Unfortunately, the switch chain appears to mix slowly in the worst case on graphs in many hereditary classes of interest. In this section we consider the two classes \intl\ and \perm, by showing that even their intersection \chp\ exhibits slow mixing.

\subsection{Chordal permutation graphs}

The examples we present here are an inspired by those given for biconvex graphs in \cite{Blumbe12,Matthe08}.

\subsubsection{Construction}

For every integer $k \ge 1$ let $G_k$ be the graph with vertex set
$U \cup W \cup X \cup Y \cup Z$ and edge set
$\EUW \cup \EW \cup \EWX \cup \EX \cup \EXY \cup \EY \cup \EYZ$
defined by
\allowdisplaybreaks
\begin{align*}
  U &= \{u_i \mid 1 \le i \le k\} &
  Z &= \{z_i \mid 1 \le i \le k\} \\
  W &= \{w_i \mid 1 \le i \le k\} &
  Y &= \{y_i \mid 1 \le i \le k\} \\
  X &= \{x_i \mid 1 \le i \le 2\} &
  \EX  &= \{x_1x_2\}\\
  \EUW &= \{u_iw_j \mid 1 \le i \le j \le k\}  &
  \EYZ &= \{z_iy_j \mid 1 \le i \le j \le k\}  \\
  \EW  &= \{vw \mid w \in W, v \in W\sm\{w\}\} &
  \EY  &= \{vy \mid y \in Y, v \in Y\sm\{y\}\} \\
  \EWX &= \{vx \mid x \in X, v \in U \cup W\}  &
  \EXY &= \{vx \mid x \in X, v \in Y \cup Z\}  \\
\end{align*}
Thus $G_k$ has $n=4k+2$ vertices.

Using the notation for cographs, i.e.\ $\uplus$ for disjoint union and
$\Join$ for complete join, we have $G_k = (X,\EX) \Join \big((U \cup W,
\EUW\cup\EW)\uplus(Y \cup Z, \EYZ\cup\EY)\big)$. The graphs $G_k[U \cup W]$
and $G_k[Y \cup Z]$ are threshold graphs,
that is, these graphs are both interval and permutation graphs, see
Fig.~\ref{fig:Hasse}. Since both these classes are closed under
disjoint union and join with complete graphs, $G_k$ too is both
an interval graph and a permutation graph.
For illustration, Fig.~\ref{fig:G4} gives $G_4$, where $w_2,w_3,y_2,y_3$ are  not labelled for clarity. Then Fig.~\ref{fig:G4int} gives an interval model of $G_4$, and Fig.~\ref{fig:G4perm} gives a permutation model.
\tikzset{vertex/.style={circle,draw,fill=none,inner sep=1.2pt}}
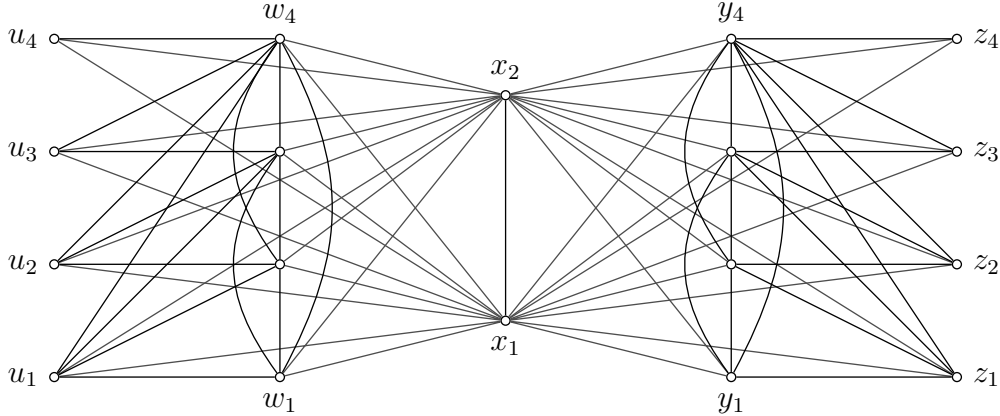
\begin{figure}[htbp]
\begin{center}
  \begin{tikzpicture}[xscale=3,yscale=1.5,line width=0.5pt]
    \node[vertex] at (0,1.5) (x1) [label=below:$x_1$] {};
    \node[vertex] at (0,3.5) (x2) [label=above:$x_2$] {};
    \foreach \ind in {1,2,3,4}
      \node[vertex] at (-2,\ind) (u\ind) [label=left:$u_{\ind}$] {};
    \foreach \ind in {1,2,3,4}
      \node[vertex] at (-1,\ind) (w\ind) {}; 
    \foreach \ind in {1,2,3,4}
      \node[vertex] at (1,\ind) (y\ind) {}; 
    \foreach \ind in {1,2,3,4}
      \node[vertex] at (2,\ind) (z\ind) [label=right:$z_{\ind}$] {};
    \draw (x1)--(x2);
    \foreach \ind in {1,2,3,4} \draw[color=black!70] (x1)--(u\ind)--(x2)
      (x1)--(w\ind)--(x2) (x1)--(y\ind)--(x2) (x1)--(z\ind)--(x2);
    \foreach \ind in {1,2,3,4} \foreach \jot in {1,...,\ind}
      \draw (u\jot)--(w\ind)  (z\jot)--(y\ind);
    \foreach \ind/\jot in {1/2,2/3,3/4}
      \draw (w\ind)--(w\jot)  (y\ind)--(y\jot);
    \draw (w1) to [bend left=20] (w3);
    \draw (w2) to [bend left=20] (w4);
    \draw (w1) to [bend right=15] (w4);
    \draw (y1) to [bend  left=20] (y3);
    \draw (y2) to [bend  left=20] (y4);
    \draw (y1) to [bend  right=15] (y4);
    \node [below] at (w1.south) {$w_1$};
    \node [above] at (w4.north) {$w_4$};
    \node [below] at (y1.south) {$y_1$};
    \node [above] at (y4.north) {$y_4$};
  \end{tikzpicture}
  \end{center}\vspace{-\baselineskip}
  \caption{The graph $G_4$}
  \label{fig:G4}
  \end{figure}

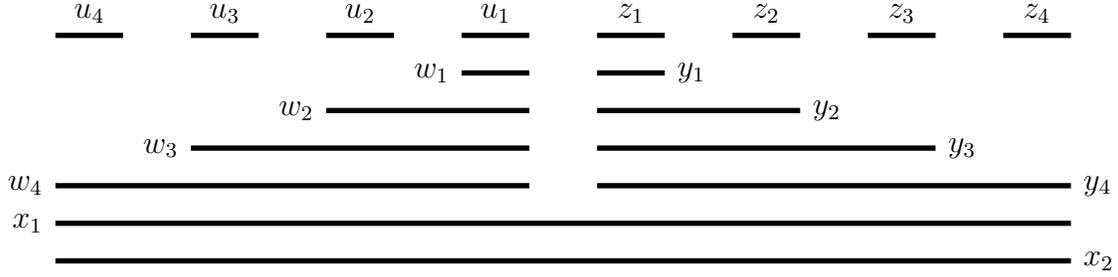
\begin{figure}[htbp]
  \begin{center}
  \begin{tikzpicture}[xscale=0.45, yscale=0.5, label distance=-4pt]
    \foreach[count=\ind from 1] \pos in {2,6,10,14}
      \node at (-\pos,6) [label=above:$u_{\ind}$] {};
    \foreach \le/\ri in {1/3, 5/7, 9/11, 13/15}
      \draw[line width=2pt] (-\le,6)--(-\ri,6);
    \foreach \hi/\ri in {5/3, 4/7, 3/11, 2/15}
      \draw[line width=2pt] (-1,\hi)--(-\ri,\hi);
    \foreach[count=\ind from 1] \hi/\ri in {5/3, 4/7, 3/11, 2/15}
      \node at (-\ri,\hi) [label=left:$w_{\ind}$] {};
    \foreach[count=\ind from 1] \pos in {2,6,10,14}
      \node at (\pos,6) [label=above:$z_{\ind}$] {};
    \foreach \le/\ri in {1/3, 5/7, 9/11, 13/15}
      \draw[line width=2pt] (\le,6)--(\ri,6);
    \foreach \hi/\ri in {5/3, 4/7, 3/11, 2/15}
      \draw[line width=2pt] (1,\hi)--(\ri,\hi);
    \foreach[count=\ind from 1] \hi/\ri in {5/3, 4/7, 3/11, 2/15}
      \node at (\ri,\hi) [label=right:$y_{\ind}$] {};
    \draw[line width=2pt] (-15,1)--(15,1);
    \node at (-15,1) [label=left:$x_1$] {};
    \draw[line width=2pt] (-15,0)--(15,0);
    \node at (15,0) [label=right:$x_2$] {};
  \end{tikzpicture}
\end{center}\vspace{-\baselineskip}
  \caption{An interval model of $G_4$}
  \label{fig:G4int}
\end{figure}

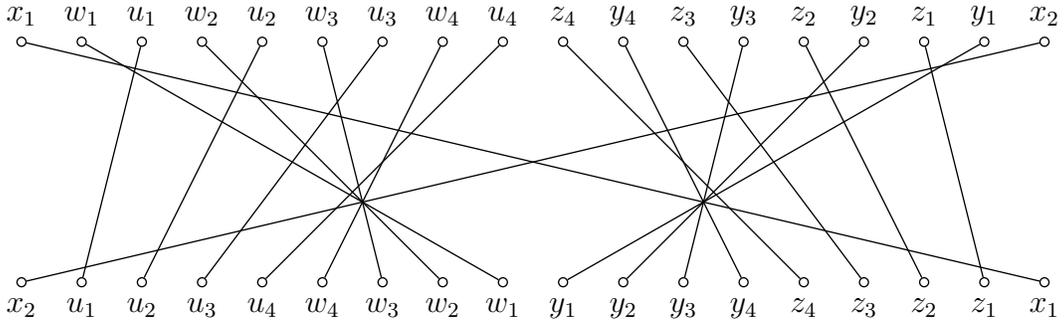
\begin{figure}[htbp]
\begin{center}
  \begin{tikzpicture}[xscale=0.4,yscale=0.4,line width=0.5pt]
    \foreach[count=\ind from 1] \x in {13, 9, 5, 1} \node[vertex] at (\x, 4)
      (zu\ind) [label=above:$z_{\ind}$] {};
    \foreach[count=\ind from 1] \x in {15,13,11, 9} \node[vertex] at (\x,-4)
      (zl\ind) [label=below:$z_{\ind}$] {};
    \foreach \ind in {1,2,3,4}  \draw (zu\ind)--(zl\ind);
    \foreach[count=\ind from 1] \x in {15,11, 7, 3} \node[vertex] at (\x, 4)
      (yu\ind) [label=above:$y_{\ind}$] {};
    \foreach[count=\ind from 1] \x in { 1, 3, 5, 7} \node[vertex] at (\x,-4)
      (yl\ind) [label=below:$y_{\ind}$] {};
    \foreach \ind in {1,2,3,4}  \draw (yu\ind)--(yl\ind);
    \foreach[count=\ind from 1] \x in {13, 9, 5, 1} \node[vertex] at (-\x, 4)
      (uu\ind) [label=above:$u_{\ind}$] {};
    \foreach[count=\ind from 1] \x in {15,13,11, 9} \node[vertex] at (-\x,-4)
      (ul\ind) [label=below:$u_{\ind}$] {};
    \foreach \ind in {1,2,3,4}  \draw (uu\ind)--(ul\ind);
    \foreach[count=\ind from 1] \x in {15,11, 7, 3} \node[vertex] at (-\x, 4)
      (wu\ind) [label=above:$w_{\ind}$] {};
    \foreach[count=\ind from 1] \x in { 1, 3, 5, 7} \node[vertex] at (-\x,-4)
      (wl\ind) [label=below:$w_{\ind}$] {};
    \foreach \ind in {1,2,3,4}  \draw (wu\ind)--(wl\ind);
    \node[vertex] at (-17, 4) (xu1) [label=above:$x_1$] {};
    \node[vertex] at ( 17,-4) (xl1) [label=below:$x_1$] {};
    \draw (xu1)--(xl1);
    \node[vertex] at ( 17, 4) (xu2) [label=above:$x_2$] {};
    \node[vertex] at (-17,-4) (xl2) [label=below:$x_2$] {};
    \draw (xu2)--(xl2);
  \end{tikzpicture}
\end{center}\vspace{-\baselineskip}
  \caption{A permutation model of $G_4$}
  \label{fig:G4perm}
\end{figure}

\subsubsection{Perfect matchings of \boldmath$G_k$}

Now we consider the perfect matchings of $G_k$. One of them is
\[ M_0 = \{u_iw_i, y_iz_i \mid i \in[k]\} \cup \{x_1x_2\} \]
and plays a special role. If a perfect matching $M$ of $G_k$
contains the edge $x_1x_2$ then $M=M_0$ because the threshold graphs
$G_k[U \cup W]$ and $G_k[Y \cup Z]$ have only one perfect matching,
namely $M_0 \cap \EUW$ and $M_0 \cap \EYZ$.

No perfect matching of $G_k$ matches one vertex in $X$ to a vertex in
$U \cup W$ and the other to a vertex in $Y \cup Z$, because, for every
$v_1 \in U \cup W$ and $v_2 \in Y \cup Z$, the graph $G_k\sm\{v_1,x_1,v_2,x_2\}$
contains two odd components. For $v_1 \ne w_k$ and $v_2 \ne y_k$ it
consists of two connected components that contain $2k-1$ vertices each.

That is, every perfect matching $M$ of $G_k$ either contains the edge
$x_1x_2$ or it contains edges $x_1v_1$ and $x_2v_2$ where either
$v_1,v_2 \in U \cup W$ or $v_1,v_2 \in Y \cup Z$. We call this the
one-sided property of the perfect matchings of $G_k$.

Let $\cM$ be the set of perfect matchings of $G_k$ and let
$\cM'_1$ and $\cM'_2$ be the set of perfect matchings of
$G_k[U \cup W \cup X]$ and $G_k[X \cup Y \cup Z]$, respectively.
With
\begin{align*}
  \cM_1 &= \{M \cup (M_0 \cap \EXY) \mid M \in \cM'_1\} \\
  \cM_2 &= \{M \cup (M_0 \cap \EUW) \mid M \in \cM'_2\}
\end{align*}
we have
\begin{align*}
  \cM     &= \cM_1 \cup \cM_2 &
  \{M_0\} &= \cM_1 \cap \cM_2
\end{align*}
by the one-sided property shown above. From $|\cM'_i| = 3^k$ for $i=1,2$
follows $|\cM| = 2\cdot3^k-1$.

\subsection{Mixing time}

Note that $\G{G_k}$ is connected, but $\G{G_k} \thsp\sm\thsp M_0$ is not.
By induction we show that every matching $M \in \cM$ is at most $k$
switches away from $M_0$. This is obvious for $M=M_0$. In the
inductive step we may assume $M \in \cM_1 \sm\thsp M_0$ by symmetry.
We consider the maximal index $i$ such that $u_iw_i \notin M$. Let
$u_ix$ and $w_iv$ be the two edges in $M$ that saturate $u_i$ and
$w_i$. Since $u_jw_j \in M$ for $i<j \le k$ we have $x \in X$ and $v
\in W \cup X$. Hence $vx$ is an edge of $G_k$. Switching the $4$-cycle
$(u_i,w_i,v,x)$ transforms $M$ into a matching containing the edges $u_jw_j$
for all indices $j$ with $i \le j \le k$. By induction, the distance
from $M$ to $M_0$ in $\cG_k$ is at most~$k$. We may also observe that,
$|\nbh{M_0}\cap\cM_1|=k$ in \G{G_k}, and $|\nbh{M_0}\cap\cM_2|=k$.

Now, similarly to~\cite{Blumbe12,Matthe08}, we upper bound the conductance of the switch
chain by computing the flow through the cut $\cM_1\setminus M_0:\cM_2$.
There are only $k$ edges in the cut, those from $M_0$ to $\cM_1$, and each has
transition probability $2/n^2$.
The uniform equilibrium distribution $\pi$ of the chain gives every state $M\in\cM$
probability $\pi(M)=1/|\cM|$, and thus $\pi(\cM_1\setminus M_0)<\nicefrac12$.
Thus the flow through the cut is at most $2k/(n^2|\cM|)<1/(8k|\cM|)$, and hence the conductance
of the chain is
\[ \Phi\ \leq\ \frac{2}{8k|\cM|}\ =\ \frac{1}{4k(2.3^k-1)}\,.  \]
Now, for example from~\cite[Thm.\,7.3]{LePeWi06}, the mixing time \tmix for the chain to reach variation distance~\nicefrac14 from $\pi$ satisfies $\tmix \geq 1/(4\Phi)$. Thus, for the switch chain on $G_n$,
\[\tmix\ \geq\ \frac{4k(2.3^k-1)}{4}\ =\ k(2.3^k-1)\ >\ 3^{k+1}\ > \ 3^{(n+2)/4}\,, \]
for all $k\geq 2$. Thus the mixing time of the switch chain increases exponentially  for the graph sequence $G_n$
$(n=10,14,18,\ldots)$.

\section{The switch chain in general graphs}\label{sec:generalswitch}

We have considered the ergodicity and rapid mixing of the switch chain in hereditary classes where all graphs have the property. However, we might ask about recognising the ergodicity, or rapid mixing, of the switch chain for an arbitrary graph. We have seen that there are graphs that are not ergodic, and ergodic graphs that are not rapidly mixing. So we might wish to establish the complexity of recognising ergodicity, or rapid mixing. Since recognising rapid mixing is at least as hard as recognising ergodicity, we will consider only the ergodicity question. Consider the following computational problems, where we measure the complexity of the problem as a function of the graph size~$n$.

\textbf{Ergodicity}\\
\textsf{Input}: A graph $G$ on $n$ vertices.\\
\textsf{Question}: Is $\G{G}$ a connected graph?

or the seemingly simpler problem,

\textbf{Connection}\\
\textsf{Input}: A graph $G$ on $n$ vertices, and two perfect matchings $X$, $Y$ in $G$.\\
\textsf{Question}: Are $X$, $Y$ in the same component of $\G{G}$?

\textbf{Connection} is easily seen to be in \Pspace. It is the $st$-connectivity problem on a graph with less than $n^n$ vertices and degrees less than $n^2$. Then, since $st$-connectivity is in \Lspace (log-space)~\cite{Reingo08}, it follows that \textbf{Connection} is in \Pspace. Thus \textbf{Ergodicity} is also in \Pspace.  We simply guess two matchings which are disconnected, and use the \textbf{Connection} algorithm to prove disconnection in \Pspace. So \textbf{Ergodicity} is in $\Pspace^{\NP} = \Pspace$.
It is possible that the problem is $\Pspace$-complete, but we have no evidence for this.

However, it is not clear that either problem is in \NP, or  even in the polynomial hierarchy, though we suspect that this is the case. We could place \textbf{Connection} in \NP if we had a polynomial bound on the diameter of \G{G}. Then, from the argument above, \textbf{Ergodicity} could be solved in \coNP using an oracle for \textbf{Connection}, which would place it within the first two levels of the polynomial hierarchy.

Thus we might first ask: what is the maximum diameter of \G{G}, over all ergodic graphs $G$ on $n$ vertices. in particular, is this polynomially bounded?

For hereditarily ergodic graphs, we showed, in Lemma~\ref{lem:diamswitch}, that \G{G} has diameter $O(n)$. However, this is not true in general. In the following, we show that the diameter of the switch chain can be $\Omega(n^2)$ for a graph on which it is ergodic. Of course, this gives a rather weak lower, rather than an upper, bound on the diameter. But it does show that there is not necessarily a ``monotonically improving'' path from a matching $X$ to a matching $Y$ in $G$. And the difficulty of proving even this weak result suggests that establishing a polynomial upper bound will be far from easy.

\subsection{The spider's web graph}
\tikzset{vertex/.style={circle,draw,fill=none,inner sep=1pt}}
Let $\jj$ denote $j\bmod6$. The \emph{spider's web} graph $W_k$ is $(V_k,E_k)$,
\begin{align*}
  V_0 &= \emptyset,\quad
  U_k  = \{u_{kj}:j\in[6]\},\quad
  V_k  = V_{k-1}\cup U_k\quad (k\geq 1)\,.\\
  E_1 &= \{(u_{11},u_{14})\}\cup\{(u_{1j}, u_{1,\,[j]+1}) :j\in[6]\}\,,\\
  E_k &= E_{k-1} \cup\{(u_{k-1,j},u_{kj}), (u_{kj}, u_{k,\,[j]+1}) :j\in[6]\},\quad(k\geq 1)\,.
\end{align*}
For example, $W_5$ is shown in Fig.~\ref{fig:spider}. Note that $W_k$ is bipartite, with bipartition $V_{k,0},\,V_{k,1}$, where $V_{k,p}=\{u_{ij}: i+j=p \bmod 2\}$. We will also define the following subgraphs: the \emph{hexagon}
$C_i=W_k[U_i]$ ($1\in[k]$), and the \emph{annulus} $A_i=W_k[U_i\cup U_{i+1}]$ ($i\in[k-1]$).
Clearly $C_i\simeq C_1$, for all $i\in[k]$, and $A_i\simeq A_1$, for all $i\in[k-1]$. Also $W_k[V_i]\simeq W_i$ for any $i\in[k]$, so we may refer to this subgraph simply as $W_i$.

\begin{figure}[htbp]
\begin{center}
  \begin{tikzpicture}[scale=0.45, line width=0.75pt]
    \foreach \angle in {0,60,120,180,240,300}
       \foreach \radius in {1,2,3,4,5}
          \node[vertex] at (\angle:\radius) (z\radius\angle) {};
    \foreach \angle in {0,60,120,180,240,300}
       \foreach[count=\ind from 1] \radius in {2,3,4,5}
          \draw (z\ind\angle)--(z\radius\angle);
    \foreach \a/\b in {0/60, 60/120, 120/180, 180/240, 240/300, 300/0}
       \foreach \radius in {1,2,3,4,5}
          \draw (z\radius\a)--(z\radius\b);
    \draw (z10)--(z1180);
    \node at (0:6) {$u_{51}$}; \node at (60:5.9) {$u_{52}$}; \node at (120:5.9) {$u_{53}$};
    \node at (180:6) {$u_{54}$}; \node at (240:5.9) {$u_{55}$}; \node at (300:5.9) {$u_{56}$};
  \end{tikzpicture} \hspace*{1cm}
\end{center}\vspace{-0.5\baselineskip}
\caption{Spider's web $W_5$}\label{fig:spider}
\end{figure}
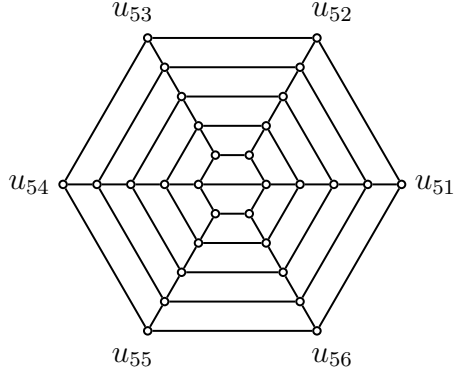

Note that $W_k$ is not hereditarily ergodic for any $k>1$. This follows from \cite[Lem.~2]{DyJeMu17}, but note that we have $A_1\subset W_k$, and the matchings $M_1$, $M_2$ in Fig.~\ref{fig:spiderNon-ergodic} have no switches in~$A_1$.

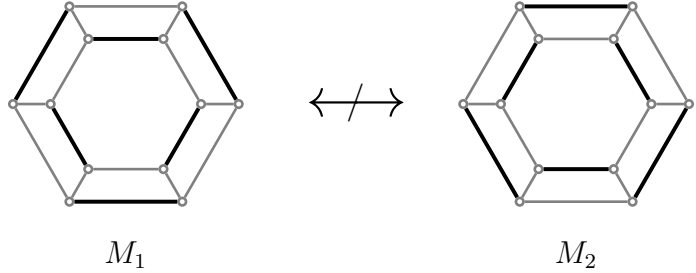
\begin{figure}[htbp]
\begin{center}
\begin{tikzpicture}[scale=0.5,line width=1pt,gray]
    \foreach \angle in {0,60,120,180,240,300}
       \foreach \radius in {2,3}
          \node[vertex] at (\angle:\radius) (z\radius\angle) {};
    \foreach \angle in {0,60,120,180,240,300}
       \foreach[count=\ind from 2] \radius in {3}
          \draw (z\ind\angle)--(z\radius\angle);
    \foreach \a/\b in {0/60, 120/180, 240/300}
       \foreach \radius in {2}
          \draw (z\radius\a)--(z\radius\b);
    \foreach \a/\b in {0/60, 120/180, 240/300}
       \foreach \radius in {3}
          \draw[black,line width=1.5pt] (z\radius\a)--(z\radius\b);
    \foreach \a/\b in {60/120, 180/240, 300/0}
       \foreach \radius in {2}
          \draw[black,line width=1.5pt] (z\radius\a)--(z\radius\b);
    \foreach \a/\b in {60/120, 180/240, 300/0}
       \foreach \radius in {3}
          \draw (z\radius\a)--(z\radius\b);
          \node[black] at (0,-4) {$M_1$};
  \end{tikzpicture}%
\qquad\raisebox{21.5mm}{\LARGE$\longleftrightarrow$}\hspace*{-8mm}\raisebox{22mm}{\large /}\qquad\quad
  \begin{tikzpicture}[scale=0.5, line width=1pt,gray]
    \foreach \angle in {0,60,120,180,240,300}
       \foreach \radius in {2,3}
          \node[vertex] at (\angle:\radius) (z\radius\angle) {};
    \foreach \angle in {0,60,120,180,240,300}
       \foreach[count=\ind from 2] \radius in {3}
          \draw (z\ind\angle)--(z\radius\angle);
    \foreach \a/\b in {0/60, 120/180, 240/300}
       \foreach \radius in {2}
          \draw[black,line width=1.5pt] (z\radius\a)--(z\radius\b);
    \foreach \a/\b in {60/120, 180/240, 300/0}
       \foreach \radius in {3}
          \draw[black,line width=1.5pt] (z\radius\a)--(z\radius\b);
    \foreach \a/\b in {60/120, 180/240, 300/0}
       \foreach \radius in {2}
          \draw (z\radius\a)--(z\radius\b);
    \foreach \a/\b in {0/60, 120/180, 240/300}
       \foreach \radius in {3}
          \draw (z\radius\a)--(z\radius\b);
          \node[black] at (0,-4) {$M_2$};
  \end{tikzpicture}
\end{center}\vspace{-\baselineskip}
\caption{Non-ergodicity of $A_1$}\label{fig:spiderNon-ergodic}
\end{figure}

However, these two matchings are the only obstructions to ergodicity.
\begin{lemma}\label{lem:ergodicA1}
\G{A_1} comprises a connected component, and two isolated vertices, $M_1$ and $M_2$.
\end{lemma}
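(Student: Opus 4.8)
The plan is to work with the explicit structure of $A_1$, which is (up to isomorphism) the \emph{annulus}: two vertex-disjoint hexagons $C = a_1a_2\cdots a_6a_1$ and $C' = b_1b_2\cdots b_6b_1$ joined by the six \emph{spokes} $a_jb_j$, $j\in[6]$ (indices mod $6$). The first step is to determine all $4$-cycles of $A_1$: a $4$-cycle cannot be contained in $C$ or $C'$, since hexagons have girth $6$; it cannot use three vertices of one hexagon and one of the other, since the lone vertex has a single neighbour on the far side; and a $2$-plus-$2$ split forces its two ``crossing'' edges to be spokes $a_jb_j,\,a_{j+1}b_{j+1}$ at adjacent positions. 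Hence the only $4$-cycles are the six ``square faces'' $S_j = (a_j, a_{j+1}, b_{j+1}, b_j)$, and every switch acts on some $S_j$. There are exactly two kinds: a \emph{contracting} switch that replaces a pair $\{a_ja_{j+1},\,b_jb_{j+1}\}\subseteq M$ by the two spokes $\{a_jb_j,\,a_{j+1}b_{j+1}\}$, and its inverse \emph{expanding} switch. Writing $F(M)\subseteq[6]$ for the set of positions $j$ with $a_jb_j\notin M$, a contracting switch lowers $|F|$ by $2$ and an expanding switch raises it by $2$.

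Next I would describe the perfect matchings via $F$. When $F\subsetneq[6]$, the subgraph of $C$ induced on $\{a_j:j\in F\}$ is a disjoint union of paths (the \emph{arcs} of $F$), each necessarily of even length, and the matching inside each arc is forced; the same arcs occur on the $C'$ side, aligned with those on $C$, so in particular the first edge $a_la_{l+1}$ of any arc satisfies $a_la_{l+1}\in M$ and $b_lb_{l+1}\in M$, giving a contracting switch. When $F=[6]$, $M$ restricts to one of the two perfect matchings of $C$ and one of the two of $C'$, yielding four matchings; $M_1$ and $M_2$ are precisely the two ``crossed'' ones, in which the set of position-pairs used by $M$ on $C$ is disjoint from the set used on $C'$.

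From this the claim falls out. First, $M_1$ and $M_2$ are isolated in $\G{A_1}$: they use no spokes, so no expanding switch is possible, and by the crossing property there is no $j$ with $a_ja_{j+1}$ and $b_jb_{j+1}$ both in $M$, so no contracting switch is possible either. Conversely, every other perfect matching $M$ is joined to the all-spokes matching $M^{\ast}$ (the unique matching with $F=\emptyset$), by induction on $|F(M)|$: if $|F(M)|=0$ then $M=M^{\ast}$; if $0<|F(M)|<6$, the first edge of any arc of $F(M)$ gives a contracting switch to a matching $M'$ with $|F(M')| = |F(M)|-2$; and if $|F(M)|=6$ with $M\notin\{M_1,M_2\}$, then by the previous paragraph the position-pairs of $M$ on $C$ and $C'$ overlap, so some $S_j$ again gives a contracting switch to a matching at level $4$. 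Since the resulting $M'$ has level below $6$ it is not $M_1$ or $M_2$, so the induction closes. Therefore $\G{A_1}$ is the disjoint union of the component containing $M^{\ast}$ (and all remaining perfect matchings) with the two isolated vertices $M_1$ and $M_2$.

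The routine but error-prone parts are the $4$-cycle enumeration and the forced-matching claim for arcs of $F$; the one genuinely special situation to handle carefully is $F=[6]$, where each hexagon admits two perfect matchings rather than one, and this is exactly where the obstruction $\{M_1,M_2\}$ resides. As a sanity check, the bookkeeping produces $1+6+9+4 = 20$ perfect matchings in total, which is the known number for the $6$-prism.
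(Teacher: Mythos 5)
Your proof is correct, but it takes a genuinely different route from the paper's. The paper's own argument is: if $M$ contains a cross (spoke) edge $u_{1j}u_{2j}$, delete its two endpoints and invoke hereditary ergodicity of the remaining ladder-like graph (citing \cite[Lem.~2]{DyJeMu17}) to connect $M$ to the all-spokes matching $M_0$; if $M$ has no spoke but a parallel pair, a single switch creates a spoke, reducing to the previous case; and any perfect matching with neither a spoke nor a parallel pair is $M_1$ or $M_2$, which have no available switch. You instead give a self-contained argument: you first check that the only $4$-cycles of the prism are the six square faces, so every switch is either ``contracting'' (two parallel hexagon edges replaced by two spokes) or ``expanding'', and then induct on the number $|F(M)|$ of positions not carrying a spoke, showing via the forced matchings on the even arcs of $F$ (and the ``aligned'' subcase when $F=[6]$) that every matching other than $M_1$, $M_2$ and the all-spokes matching $M^\ast=M_0$ admits a contracting switch. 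Your route avoids the appeal to the external lemma, makes explicit the classification of switches that the paper's closing assertion (``these have no available switch'') implicitly rests on, and yields extra information for free: the census of $20$ perfect matchings of the prism and the fact that every non-isolated matching is at most $|F(M)|/2\le 3$ contracting switches from $M_0$; the paper's proof is shorter precisely because it outsources the main connectivity step. One shared (and correct) reading: like the paper's own proof and figures, you treat $A_1$ as the chordless hexagonal prism, disregarding the chord $u_{11}u_{14}$ that the literal definition $A_1=W_k[U_1\cup U_2]$ would include; this matches the paper's intent, as witnessed by the claim $A_i\simeq A_1$, and is in fact necessary, since with that chord present $M_1$ and $M_2$ would admit switches and the statement would fail.
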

\begin{proof}
  Let $M$ be any matching in $A_1$. Suppose first that $M$ has a \emph{cross} edge $u_{1j}u_{2j}$ for some $j\in[6]$. The graph $A_1\setminus\{u_{1j},\,u_{2j}\}$, given by deleting $u_{1j},\,u_{2j}$, is hereditarily ergodic, by~\cite[Lem.~2]{DyJeMu17}. Thus, $M$ is connected to the matching $M_0=\{u_{1j}u_{2j}:j\in[6]\}$ with all cross edges, as shown in Fig.~\ref{fig:spiderAswitch}. Therefore suppose $M$ has no cross edges, but has a pair of \emph{parallel} edges $\{u_{1j}u_{1,[j]+1},\,u_{2j}u_{2,[j]+1}\}$, for some $j\in[6]$. Switching the quadrangle $(u_{1j},u_{2j},u_{2,[j]+1},u_{1,[j]+1})$ results in a matching $M'$ with a cross edge $(u_{1j},u_{2j})$. Hence $M$ is again connected to $M_0$, via $M'$. Now any perfect matching in $A_1$ which has no cross edges or parallel edges is either $M_1$ or $M_2$, and these have no available switch.
\end{proof}
We will use this to show that $W_k$ is ergodic.
\begin{lemma}\label{lem:ergodicWk}
\G{W_k} is connected, for all $k\geq 1$.
\end{lemma}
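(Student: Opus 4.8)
I would prove this by induction on $k$, and to make the induction go through I would strengthen it to: \emph{for every $k\ge 1$ and every $S\subseteq U_k$, the graph $W_k\sm S$ has connected transition graph $\G{W_k\sm S}$} (these are exactly the graphs one meets after ``pinning'' some outer‑ring vertices to radial edges; $W_k\sm S$ has a perfect matching only for rather restricted $S$, but we lose nothing by stating it for all $S$). The base case $k=1$ concerns $C_1\sm S$, where $C_1$ is a hexagon together with the chord $u_{11}u_{14}$. Since $u_{11}u_{14}$ has odd index difference, the only even cycle of length $\ge 6$ in $C_1$ has an odd chord, so $C_1\in\och\subseteq\switch$ and $\G{C_1}$ is connected by Theorem~\ref{thm:ergodic} (concretely it is the path through the three perfect matchings of $C_1$, the two hexagon matchings joined through the chord matching), and the finitely many graphs $C_1\sm S$ are checked by hand.

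For the inductive step ($k\ge 2$) I would classify a perfect matching $M$ of $W_k\sm S$ by its \emph{cross} edges $u_{kj}u_{k-1,j}\in M$. Deleting the cross‑matched vertices from the path(s) $C_k\sm S$ leaves arcs that $M$ must match internally, which forces the number $c$ of cross edges to have the correct parity and pins down the possible cross patterns (for $S=\emptyset$: $c\in\{0,2,4,6\}$, with the two cross positions at cyclic distance $1$ or $3$ when $c=2$, and the two non‑cross positions adjacent when $c=4$). The basic move is that the quadrangle switch on $(u_{kj},u_{k,j+1},u_{k-1,j+1},u_{k-1,j})$ collapses any pair of consecutive cross edges, lowering $c$ and turning two radial edges at ring $k-1$ into a hexagon edge; iterating this handles every pattern except an antipodal pair of cross edges, for which I would first apply the inductive hypothesis to $W_{k-1}\sm\{u_{k-1,a},u_{k-1,a+3}\}$ to switch the inner part into a matching containing a suitable parallel edge $u_{k-1,b}u_{k-1,b+1}$, which enables a quadrangle switch raising $c$ to $4$ and returning to the previous case. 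Once $c=0$, $M$ is the disjoint union of a hexagon matching of $C_k$ with a perfect matching of $W_{k-1}\sm S'$ for some $S'\subseteq U_{k-1}$ determined by $S$, and we conclude by the inductive hypothesis; to connect the two possible hexagon matchings of $C_k$ I would push $c$ all the way up to $6$ (using the inductive hypothesis on the vertex‑deleted $W_{k-1}$'s to supply the needed parallel edges on ring $k-1$) and then bring it back down through the shifted sequence of consecutive‑pair collapses, which yields the other hexagon matching. The admissibility of all these quadrangle moves near the outer boundary is exactly the kind of local analysis already carried out for the annulus in Lemma~\ref{lem:ergodicA1}.

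The main obstacle is the bookkeeping for the strengthened statement: pinning down which deleted sets $S\subseteq U_k$ can occur and what $S'\subseteq U_{k-1}$ they reduce to, checking that the required parallel edge on ring $k-1$ can always be realized, and verifying the finite list of cross patterns (in each case confirming that the named quadrangle is a legal switch, i.e.\ two of its edges lie in the current matching and the other two in $E_k$). A minor separate point is the base‑case verification that $\G{C_1\sm S}$ is connected for every $S$, which is precisely where the chord $u_{11}u_{14}$ — absent from the bare annulus $A_1$ of Lemma~\ref{lem:ergodicA1} — does its work.
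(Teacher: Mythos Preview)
Your approach is sound, and the strengthened induction hypothesis (connectivity of $\G{W_k\sm S}$ for all $S\subseteq U_k$) is a legitimate way to make the recursion self-contained. It differs genuinely from the paper's route, though.

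The paper keeps the \emph{unstrengthened} hypothesis ``$\G{W_k}$ is connected'' and instead leans on Lemma~\ref{lem:ergodicA1} about the annulus $A_{k-1}$. Its four-step scheme is: (i) remove all cross edges between $U_k$ and $U_{k-1}$; (ii) apply the plain IH on $W_{k-1}$ to make $C_{k-1}$ parallel to $C_k$; (iii) now the annulus $A_{k-1}$ is fully matched internally and is in the non-isolated state, so Lemma~\ref{lem:ergodicA1} applies directly to equalize the two matchings on $A_{k-1}$; (iv) apply the plain IH on $W_{k-1}$ once more. Steps (ii)--(iv) use only the unstrengthened IH, which is what the paper buys by routing the argument through the annulus lemma rather than through vertex-deleted subgraphs.

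The trade-off is visible exactly at the point you isolate: step~(i). When $X\cap A_{k-1}$ is not a perfect matching of $A_{k-1}$ (because some $U_{k-1}$ vertices are matched inward), Lemma~\ref{lem:ergodicA1} does not literally apply, and the antipodal two-cross configuration needs an auxiliary switch to manufacture a parallel edge on $C_{k-1}$ before the quadrangle collapse becomes available. The paper's proof passes over this quickly; your strengthening confronts it head-on by having the IH supply that parallel edge inside $W_{k-1}\sm S'$. So your version is more explicit and rigorous but carries the case analysis over $S$, while the paper's version is shorter but relies on the reader to fill in the cross-edge elimination for the antipodal pattern.
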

\begin{proof}
  We use induction on $k$. As basis, $W_1$ is ergodic: \G{W_1} is shown in Fig.~\ref{fig:spiderW1}.  For $k>1$, let $X$, $Y$ be any two perfect matchings in $W_k$. From Lemma~\ref{lem:ergodicA1}, we can exchange $X\cap A_{k-1}$ and $Y\cap A_{k-1}$ to give matchings $X_1,\,Y_1$ so that $X_1\cap A_{k-1},\,Y_1\cap A_{k-1}$ have no cross edges. (See Figs.~\ref{fig:spiderNon-ergodic} and~\ref{fig:spiderAswitch}). By induction,  we can exchange $X_1\cap W_{k-1}$ to $Y_1\cap W_{k-1}$ to give matchings $X_2$, $Y_2$ so that every edge of $X_2\cap C_{k-1}$ is parallel to an edge of $X_2\cap C_k$, and every edge of $Y_2\cap C_{k-1}$ is parallel to an edge of $Y_2\cap C_k$. Using Lemma~\ref{lem:ergodicA1} again, $X_2$, $Y_2$ can be transformed to $X_3$, $Y_3$,  so that $X_3\cap A_{k-1}=Y_3\cap A_{k-1}$. Finally, by induction, $X_3$, $Y_3$ can be transformed to $X_4$, $Y_4$,  so that $X_4\cap W_{k-1}=Y_4\cap W_{k-1}$ and $X_4\cap C_k=Y_4\cap C_k$, and we are done.
\end{proof}

  \begin{figure}[htbp]
\begin{center}
  \begin{tikzpicture}[scale=0.45, line width=1pt,gray]
    \foreach \angle in {0,60,120,180,240,300}
          \node[vertex] at (\angle:2) (2\angle) {};
    \foreach \a/\b in {0/60, 120/180, 240/300} \draw (2\a)--(2\b);
    \foreach \a/\b in {60/120, 180/240, 300/0} \draw[black,line width=1.5pt] (2\a)--(2\b);
    \draw (20)--(2180);
  \end{tikzpicture} \raisebox{7.5mm}{\Large\ $\longleftrightarrow$\ }
  \begin{tikzpicture}[scale=0.45, line width=1pt,gray]
    \foreach \angle in {0,60,120,180,240,300}
          \node[vertex] at (\angle:2) (2\angle) {};
    \foreach \a/\b in {0/60, 120/180, 180/240, 300/0} \draw (2\a)--(2\b);
    \foreach \a/\b in {60/120, 240/300} \draw[black,line width=1.5pt] (2\a)--(2\b);
    \draw[black,line width=1.5pt] (20)--(2180);
  \end{tikzpicture}\raisebox{7.5mm}{\large\ $\longleftrightarrow$\ }
  \begin{tikzpicture}[scale=0.45, line width=1pt,gray]
    \foreach \angle in {0,60,120,180,240,300}
          \node[vertex] at (\angle:2) (2\angle) {};
    \foreach \a/\b in {0/60, 120/180, 240/300} \draw[black,line width=1.5pt] (2\a)--(2\b);
    \foreach \a/\b in {60/120, 180/240, 300/0} \draw (2\a)--(2\b);
    \draw (20)--(2180);
  \end{tikzpicture}
\end{center}\vspace{-0.5\baselineskip}
\caption{Switching $W_1$}\label{fig:spiderW1}
\end{figure}
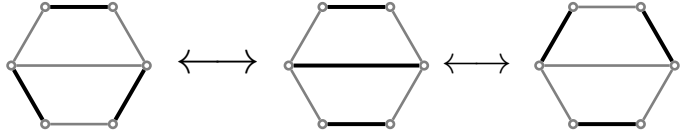

Let $W'_k=W_k\thsp\setminus\thsp u_{11}u_{14}$ be the graph $W_k$ after deleting the edge $u_{11}u_{14}$. Then the two perfect matchings $M_1$, $M_2$ in $W'_k$ given by \[M_p=\{w_{i,j}w_{i,[j]+1}:i+j=p-1 \bmod 2,\ i\in[k],\,j\in[6]\} \quad (p=1,2),\]
and shown in Fig.~\ref{fig:spiderXY}, have no available switch, so are isolated vertices in \G{W'_k}. Thus $W'_k$ is not ergodic.
Given this, we might suppose that $M_1$ and $M_2$ are far apart in \G{W_k}, and that is the case.
\begin{figure}[htbp]
\begin{center}
  \begin{tikzpicture}[scale=0.4, line width=1pt,gray]
    \foreach \angle in {0,60,120,180,240,300}
       \foreach \radius in {1,2,3,4,5}
          \node[vertex] at (\angle:\radius) (z\radius\angle) {};
    \foreach \angle in {0,60,120,180,240,300}
       \foreach[count=\ind from 1] \radius in {2,3,4,5}
          \draw (z\ind\angle)--(z\radius\angle);
    \foreach \a/\b in {0/60, 120/180, 240/300}
       \foreach \radius in {2,4}
          \draw (z\radius\a)--(z\radius\b);
    \foreach \a/\b in {0/60, 120/180, 240/300}
       \foreach \radius in {1,3,5}
          \draw[black,line width=1.5pt] (z\radius\a)--(z\radius\b);
    \foreach \a/\b in {60/120, 180/240, 300/0}
       \foreach \radius in {2,4}
          \draw[black,line width=1.5pt] (z\radius\a)--(z\radius\b);
    \foreach \a/\b in {60/120, 180/240, 300/0}
       \foreach \radius in {1,3,5}
          \draw (z\radius\a)--(z\radius\b);
          \draw (z10)--(z1180);
    \node[black] at (0,-5.5) {$M_1$};
  \end{tikzpicture} \hspace*{1cm}
\begin{tikzpicture}[scale=0.4, line width=1pt,gray]
       \foreach \angle in {0,60,120,180,240,300}
       \foreach \radius in {1,2,3,4,5}
          \node[vertex] at (\angle:\radius) (z\radius\angle) {};
    \foreach \angle in {0,60,120,180,240,300}
       \foreach[count=\ind from 1] \radius in {2,3,4,5}
          \draw (z\ind\angle)--(z\radius\angle);
    \foreach \a/\b in {0/60, 120/180, 240/300}
       \foreach \radius in {2,4}
          \draw[black,line width=1.5pt] (z\radius\a)--(z\radius\b);
    \foreach \a/\b in {0/60, 120/180, 240/300}
       \foreach \radius in {1,3,5}
          \draw (z\radius\a)--(z\radius\b);
    \foreach \a/\b in {60/120, 180/240, 300/0}
       \foreach \radius in {2,4}
          \draw (z\radius\a)--(z\radius\b);
    \foreach \a/\b in {60/120, 180/240, 300/0}
       \foreach \radius in {1,3,5}
          \draw[black,line width=1.5pt] (z\radius\a)--(z\radius\b);
          \draw (z10)--(z1180);
          \node[black] at (0,-5.5) {$M_2$};
  \end{tikzpicture}
\end{center}\vspace{-\baselineskip}
\caption{Matchings $M_1$, $M_2$ in $W_5$}\label{fig:spiderXY}
\end{figure}
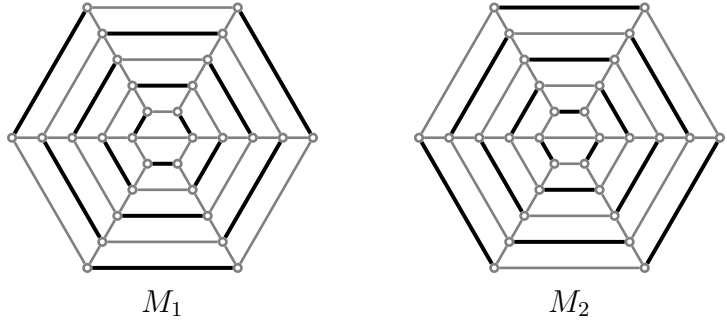

\begin{lemma}\label{lem:distanceM1M2}
The distance between $M_1$, $M_2$ in \G{W_k} is $k(3k-1)$.
\end{lemma}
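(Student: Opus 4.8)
The plan is to identify $\G{W_k}$ with a \emph{dimer flip graph} and then compute distances via height functions. The first step is to fix the planar drawing of $W_k$ as the $k$ concentric hexagons $C_1,\dots,C_k$ joined by the spoke edges, together with the diagonal $u_{1,1}u_{1,4}$, and to observe that \emph{every} $4$-cycle of $W_k$ bounds a face of this drawing. Indeed $W_k$ with the diagonal deleted is $C_6\times P_k$, whose only $4$-cycles are the $6(k-1)$ ``unit squares'' $Q_{i,j}$ spanned by $u_{i,j},u_{i,j+1},u_{i+1,j+1},u_{i+1,j}$; adding the diagonal creates exactly the two further $4$-cycles $u_{1,1}u_{1,2}u_{1,3}u_{1,4}$ and $u_{1,1}u_{1,4}u_{1,5}u_{1,6}$, into which it splits the innermost hexagonal region, and a short adjacency check rules out any other $4$-cycle. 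Since a switch is precisely the exchange of the two opposite edges of a $4$-cycle when the other two lie in the current matching, switches are exactly flips of the bounded faces, and the region enclosed by $C_k$ is a topological disk tiled by these quadrilaterals (its boundary being the hexagon $C_k$, which is not a $4$-cycle and so is never flippable — consistent with $M_1,M_2$ using different edges of $C_k$).

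On such a quadrilateral‑tiled disk each perfect matching $M$ determines a height function $h_M$ on the faces, normalised by $h_M(f_\infty)=0$ on the outer face; it is well defined because every vertex is saturated by exactly one edge of $M$. Two local facts drive the argument. (a) A single flip of a face $f_0$ changes $h$ by $\pm4$ at $f_0$ and leaves $h$ unchanged on every other face, so the potential $\Phi(M):=\sum_f h_M(f)$ changes by exactly $\pm4$ at each switch; hence $\dist(M_1,M_2)\ge\tfrac14\lvert\Phi(M_1)-\Phi(M_2)\rvert$. (b) The function $g:=h_{M_1}-h_{M_2}$ is constant across any edge on which $M_1$ and $M_2$ agree and jumps by $\pm4$ across any edge of $M_1\oplus M_2$. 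Now $M_1\oplus M_2$ is exactly the union of all $k$ hexagons (neither matching uses a spoke or the diagonal, and on each $C_m$ the two matchings induce complementary perfect matchings of the $6$‑cycle), so $g$ is constant on the outer region, on each annular region between consecutive hexagons, and on the two inner faces, jumping $\pm4$ whenever a hexagon is crossed. A direct inspection of the induced matchings of the $C_m$ shows all these jumps have the same sign, so $g\ge 0$ with $g\equiv4(k-m)$ on the six faces $Q_{m,\cdot}$ and $g\equiv4k$ on the two inner faces. Summing, $\Phi(M_1)-\Phi(M_2)=\sum_f g(f)=\sum_{m=1}^{k-1}6\cdot4(k-m)+2\cdot4k=12k(k-1)+8k=4k(3k-1)$, whence $\dist(M_1,M_2)\ge k(3k-1)$.

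For the matching upper bound I would run the standard monotone flip procedure: while $M\ne M_2$ choose a face $f$ with $h_M(f)>h_{M_2}(f)$ that is a local maximum of $h_M$; such an $f$ always exists (otherwise $M=M_2$) and is always flippable downward, and since $h_M-h_{M_2}$ is a multiple of $4$ vanishing on $f_\infty$, flipping $f$ decreases $\sum_f\bigl(h_M(f)-h_{M_2}(f)\bigr)$ by exactly $4$ while preserving $h_M\ge h_{M_2}$. The procedure therefore reaches $M_2$ after exactly $\tfrac14\bigl(\Phi(M_1)-\Phi(M_2)\bigr)=k(3k-1)$ switches; with the lower bound this gives $\dist(M_1,M_2)=k(3k-1)$ (here the two bounds coincide because $\Phi(M_1)-\Phi(M_2)>0$).

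The main obstacle, beyond carefully establishing the two ``standard'' height‑function facts on this particular disk — whose interior vertices do not all have degree $4$ (those on $C_k$, and four of the six on $C_1$, have degree $3$), so the local consistency of $h_M$ and the ``$\pm4$ per flip'' claim must be checked rather than quoted — is verifying the sign assertion in (b): that the $k$ jumps of $g$ across $C_k,C_{k-1},\dots,C_1$ all point the same way, equivalently that $M_1$ and $M_2$ realise respectively the maximum and the minimum height function among all perfect matchings of $W_k$. I expect this to follow from the fact that $M_1,M_2$ are exactly the two matchings that become isolated in $\G{W'_k}$ (Figure~\ref{fig:spiderXY}), together with a short argument that in $W_k$ every available switch at $M_1$ decreases, and every available switch at $M_2$ increases, the potential $\Phi$. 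An alternative to invoking the monotone procedure for the upper bound would be to write out an explicit schedule that ``rotates the hexagons from the inside outwards'', but bookkeeping the precise count $k(3k-1)$ by hand is appreciably messier than the height‑function route.
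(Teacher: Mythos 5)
Your strategy is genuinely different from the paper's, and its lower‑bound half is sound and lands on exactly the right constant. The paper argues directly on the hexagons: for $i\ge 2$ the cycle $C_i$ can only be exchanged by six switches in the annulus $A_{i-1}$, which simultaneously un‑exchanges $C_{i-1}$, so the number of exchanged hexagons grows only when $C_1$ is exchanged (two switches, via the diagonal); hence the exchange must be propagated outward $k$ times, giving $\sum_{i=1}^k(6i-4)=k(3k-1)$, and the same propagation run explicitly is the matching upper bound. Your argument replaces this accounting by a potential function: your check that every $4$-cycle of $W_k$ bounds a face is correct, so switches are face flips; $M_1\oplus M_2$ is the union of the $k$ nested hexagons; and the sign fact does hold, because on every $C_m$ the $M_1$-edges run from the same colour class to the other in the rotational direction, so each inward crossing changes $g$ by the same amount, whence $\sum_f|g(f)|=\sum_{m=1}^{k-1}6(k-m)+2k=k(3k-1)$ (in unit jumps) and $\dist(M_1,M_2)\ge k(3k-1)$. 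Two adjustments: state (a) and (b) for the \emph{relative} height defined directly from $M\oplus M_2$ (path‑independence and ``one face changes by a fixed amount per flip'' then need no degree hypotheses, which disposes of your degree‑$3$ worry), and prove the sign fact by the parity inspection you mention --- the proposed detour through the isolated vertices of $\G{W'_k}$ is neither needed nor obviously sufficient, and the ``equivalent'' global extremality of $M_1,M_2$ is stronger than what the bound requires.

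The one real gap is the upper bound: the claim that a face where the relative height exceeds the target and is locally maximal is always flippable, and flippable in the decreasing direction, is precisely the nontrivial ``lattice of height functions'' lemma; it is true here but cannot simply be quoted, and proving it is comparable work to the alternative you set aside. The cheapest fix is the paper's explicit schedule: exchange $C_1$ in two switches, push the exchange outward through $A_1,\dots,A_{i-1}$ at six switches per annulus, and repeat for $i=k,k-1,\dots,1$, a path of length exactly $\sum_{i=1}^k(6i-4)=k(3k-1)$, which together with your potential lower bound proves the lemma. What your route buys is a cleaner, fully rigorous lower bound (the paper's hexagon accounting is somewhat informal); what the paper's route buys is a self‑contained upper bound with no appeal to height‑function machinery.
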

\begin{proof}
Let $X_t$ be the perfect matching at step $t$ on a path $P$ from $M_1$ to $M_2$ in \G{W_k}, so $X_0=M_1$ and $X_\ell=M_2$, where $\ell$ is the path length. For any hexagon $C_i$ $(i\in[k])$, we will write $C_i(t)\supset M_j$ to mean $C_i\cap X_t=C_i\cap M_j$ $(j=1,2)$. Initially $C_i(0)\supset M_1$, for all $i\in[k]$. At step $t$, we will say $C_i$ has been \emph{exchanged} if $C_i(t)\supset M_2$. Let $s(t)=|\{i\in[k]: C_i(t)\supset M_2\}|$ denote the number of exchanged hexagons, so $s(0)=0$ and $s(\ell)=k$.

Let $t_i$ be the first step on $P$  at which $C_i$ has been exchanged, and let $t'_i<t_i$ be the last step  before any edge of $C_i$ has been  switched. Initially, the only switch that can be performed is in $W_1$, using $u_{11}u_{14}$. After two switches, $C_1$ can be exchanged (see Fig.~\ref{fig:spiderW1}). Since at least two quadrangles must be switched to change the state of six edges, this is clearly the minimum number of switches needed to exchange $C_1$. Thus $t'_1=0$, $t_1=2$ and $s(2)=1$.

For $i>1$, since $M_1$, $M_2$ are edge-disjoint, we can exchange $C_i$ only by switching all six edges.
Therefore, since no two edges of $C_i$ share a quadrangle, at least six switches are needed to exchange $C_i$.
Now, an edge of $C_i$ can be switched only if there is a parallel edge in $C_{i-1}$ or $C_{i+1}$. Since, by assumption $C_i(t'_i), C_{i+1}(t'_i)\supset M_1$, we must have $C_{i-1}(t'_i)\supset M_2$.

Then we have the situation shown in Fig~\ref{fig:spiderAswitch}, and we can perform exactly six switches in $A_{i-1}$ so that $C_i(t_i)\supset M_2$, where $t_i=t'_i+6$. However, we now have $C_{i-1}(t_i)\supset M_1$, so $s(t_i)=s(t'_i)$. Thus $s(t)$ changes only when $C_1$ is exchanged. So $C_1$ must be exchanged at least $k$ times to switch the whole of $W_k$.

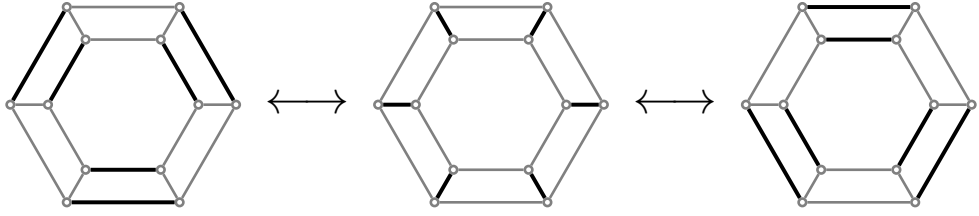
\begin{figure}[htbp]
\begin{center}
  \begin{tikzpicture}[scale=0.5, line width=1pt,gray]
    \foreach \angle in {0,60,120,180,240,300}
       \foreach \radius in {2,3}
          \node[vertex] at (\angle:\radius) (z\radius\angle) {};
    \foreach \angle in {0,60,120,180,240,300}
       \foreach[count=\ind from 2] \radius in {3}
          \draw (z\ind\angle)--(z\radius\angle);
    \foreach \a/\b in {60/120, 180/240, 300/0}
       \foreach \radius in {2}
          \draw (z\radius\a)--(z\radius\b);
    \foreach \a/\b in {0/60, 120/180, 240/300}
       \foreach \radius in {3}
          \draw[black,line width=1.5pt] (z\radius\a)--(z\radius\b);
    \foreach \a/\b in {0/60, 120/180, 240/300}
       \foreach \radius in {2}
          \draw[black,line width=1.5pt] (z\radius\a)--(z\radius\b);
    \foreach \a/\b in {60/120, 180/240, 300/0}
       \foreach \radius in {3}
          \draw (z\radius\a)--(z\radius\b);
  \end{tikzpicture} \raisebox{12.5mm}{\Large\ $\longleftrightarrow$\ }
\begin{tikzpicture}[scale=0.5, line width=1pt,gray]
    \foreach \angle in {0,60,120,180,240,300}
       \foreach \radius in {2,3}
          \node[vertex] at (\angle:\radius) (z\radius\angle) {};
    \foreach \angle in {0,60,120,180,240,300}
       \foreach[count=\ind from 2] \radius in {3}
          \draw[black,line width=1.5pt] (z\ind\angle)--(z\radius\angle);
    \foreach \a/\b in {0/60, 120/180, 240/300}
       \foreach \radius in {2}
          \draw (z\radius\a)--(z\radius\b);
    \foreach \a/\b in {0/60, 120/180, 240/300}
       \foreach \radius in {3}
          \draw (z\radius\a)--(z\radius\b);
    \foreach \a/\b in {60/120, 180/240, 300/0}
       \foreach \radius in {2}
          \draw (z\radius\a)--(z\radius\b);
    \foreach \a/\b in {60/120, 180/240, 300/0}
       \foreach \radius in {3}
          \draw (z\radius\a)--(z\radius\b);
  \end{tikzpicture}
 \raisebox{12.5mm}{\Large\ $\longleftrightarrow$\ }
\begin{tikzpicture}[scale=0.5, line width=1pt,gray]
    \foreach \angle in {0,60,120,180,240,300}
       \foreach \radius in {2,3}
          \node[vertex] at (\angle:\radius) (z\radius\angle) {};
    \foreach \angle in {0,60,120,180,240,300}
       \foreach[count=\ind from 2] \radius in {3}
          \draw (z\ind\angle)--(z\radius\angle);
    \foreach \a/\b in {0/60, 120/180, 240/300}
       \foreach \radius in {2}
          \draw (z\radius\a)--(z\radius\b);
    \foreach \a/\b in {0/60, 120/180, 240/300}
       \foreach \radius in {3}
          \draw (z\radius\a)--(z\radius\b);
    \foreach \a/\b in {60/120, 180/240, 300/0}
       \foreach \radius in {2}
          \draw[black,line width=1.5pt] (z\radius\a)--(z\radius\b);
    \foreach \a/\b in {60/120, 180/240, 300/0}
       \foreach \radius in {3}
          \draw[black,line width=1.5pt] (z\radius\a)--(z\radius\b);
  \end{tikzpicture}
\end{center}\vspace{-0.5\baselineskip}
\caption{Switching $A_i$}\label{fig:spiderAswitch}
\end{figure}

After switching $C_i$ ($i\in[k-1]$) for the first time, we have $C_i(t_i)\supset M_2$, $C_{i+1}(t_i) \supset M_1$. So we can exchange $C_{i+1}$, using six switches in $A_i$. Thus we can propagate the exchanged cycle $C_j(t)\supset M_2$ outwards, starting with $j=1$, and until $j=i$, leaving $C_j(t_i)\supset M_1$ ($j\in[i-1]$). See Fig.~\ref{fig:spiderC5}.

Since $C_k$ must be switched, we continue this outward propagation until $i=k$. Then we have $C_k(t_k)\supset M_2$ and $C_k(t_k)\supset M_1$ ($i\in[k-1]$), after $t_k=6(k-1)+2=6k-4$ switches. This is clearly the minimum number of switches needed to exchange $C_k$, starting from $X_0=M_1$.

\begin{figure}[htbp]
\begin{center}
\hspace*{15mm}\begin{tikzpicture}[scale=0.35, line width=1pt,gray]
    \foreach \angle in {0,60,120,180,240,300}
       \foreach \radius in {1,2,3,4,5}
          \node[vertex] at (\angle:\radius) (z\radius\angle) {};
    \foreach \angle in {0,60,120,180,240,300}
       \foreach[count=\ind from 1] \radius in {2,3,4,5}
          \draw (z\ind\angle)--(z\radius\angle);
    \foreach \a/\b in {0/60, 120/180, 240/300}
       \foreach \radius in {2,4,5}
          \draw (z\radius\a)--(z\radius\b);
    \foreach \a/\b in {0/60, 120/180, 240/300}
       \foreach \radius in {3,5}
          \draw (z\radius\a)--(z\radius\b);
    \foreach \a/\b in {60/120, 180/240, 300/0}
       \foreach \radius in {1,3,5}
          \draw[black,line width=1.5pt] (z\radius\a)--(z\radius\b);
    \foreach \a/\b in {0/60, 60/120, 120/180, 180/240, 240/300, 300/0}
       \foreach \radius in {1,2,3,4}
          \draw (z\radius\a)--(z\radius\b);
          \foreach \a/\b in {0/60, 120/180, 240/300}
       \foreach \radius in {2,4}
          \draw[black,line width=1.5pt] (z\radius\a)--(z\radius\b);
    \draw (z10)--(z1180);
   \end{tikzpicture}\raisebox{15mm}{{\Large\ $\stackrel{W_1}{\longleftrightarrow}$\ }}
\begin{tikzpicture}[scale=0.35, line width=1pt,gray]
    \foreach \angle in {0,60,120,180,240,300}
       \foreach \radius in {1,2,3,4,5}
          \node[vertex] at (\angle:\radius) (z\radius\angle) {};
    \foreach \angle in {0,60,120,180,240,300}
       \foreach[count=\ind from 1] \radius in {2,3,4,5}
          \draw (z\ind\angle)--(z\radius\angle);
    \foreach \a/\b in {0/60, 120/180, 240/300}
       \foreach \radius in {2,4,5}
          \draw (z\radius\a)--(z\radius\b);
    \foreach \a/\b in {0/60, 120/180, 240/300}
       \foreach \radius in {3,5}
          \draw (z\radius\a)--(z\radius\b);
    \foreach \a/\b in {60/120, 180/240, 300/0}
       \foreach \radius in {3,5}
          \draw[black,line width=1.5pt] (z\radius\a)--(z\radius\b);
    \foreach \a/\b in {0/60, 60/120, 120/180, 180/240, 240/300, 300/0}
       \foreach \radius in {1,2,3,4}
          \draw (z\radius\a)--(z\radius\b);
          \foreach \a/\b in {0/60, 120/180, 240/300}
       \foreach \radius in {1,2,4}
          \draw[black,line width=1.5pt] (z\radius\a)--(z\radius\b);
    \draw (z10)--(z1180);
   \end{tikzpicture}
   \raisebox{15mm}{\Large\ $\stackrel{A_1}{\longleftrightarrow}$\ }
  \begin{tikzpicture}[scale=0.35, line width=1pt,gray]
    \foreach \angle in {0,60,120,180,240,300}
       \foreach \radius in {1,2,3,4,5}
          \node[vertex] at (\angle:\radius) (z\radius\angle) {};
    \foreach \angle in {0,60,120,180,240,300}
       \foreach[count=\ind from 1] \radius in {2,3,4,5}
          \draw (z\ind\angle)--(z\radius\angle);
    \foreach \a/\b in {0/60, 120/180, 240/300}
       \foreach \radius in {1,2,4,5}
          \draw (z\radius\a)--(z\radius\b);
    \foreach \a/\b in {0/60, 120/180, 240/300}
       \foreach \radius in {3,5}
          \draw (z\radius\a)--(z\radius\b);
    \foreach \a/\b in {60/120, 180/240, 300/0}
       \foreach \radius in {1,2,3,5}
          \draw[black,line width=1.5pt] (z\radius\a)--(z\radius\b);
    \foreach \a/\b in {0/60, 60/120, 120/180, 180/240, 240/300, 300/0}
       \foreach \radius in {2,3,4}
          \draw (z\radius\a)--(z\radius\b);
          \foreach \a/\b in {0/60, 120/180, 240/300}
       \foreach \radius in {4}
          \draw[black,line width=1.5pt] (z\radius\a)--(z\radius\b);
       \draw (z10)--(z1180);
  \end{tikzpicture}\\[5mm]
   \raisebox{15mm}{\Large\ $\stackrel{A_2}{\longleftrightarrow}$\ }
\begin{tikzpicture}[scale=0.35, line width=1pt,gray]
    \foreach \angle in {0,60,120,180,240,300}
       \foreach \radius in {1,2,3,4,5}
          \node[vertex] at (\angle:\radius) (z\radius\angle) {};
    \foreach \angle in {0,60,120,180,240,300}
       \foreach[count=\ind from 1] \radius in {2,3,4,5}
          \draw (z\ind\angle)--(z\radius\angle);
    \foreach \a/\b in {0/60, 120/180, 240/300}
       \foreach \radius in {2,4,5}
          \draw (z\radius\a)--(z\radius\b);
    \foreach \a/\b in {0/60, 120/180, 240/300}
       \foreach \radius in {1,3,5}
          \draw (z\radius\a)--(z\radius\b);
    \foreach \a/\b in {60/120, 180/240, 300/0}
       \foreach \radius in {1,5}
          \draw[black,line width=1.5pt] (z\radius\a)--(z\radius\b);
    \foreach \a/\b in {0/60, 60/120, 120/180, 180/240, 240/300, 300/0}
       \foreach \radius in {2,3,4}
          \draw (z\radius\a)--(z\radius\b);
          \foreach \a/\b in {0/60, 120/180, 240/300}
       \foreach \radius in {2,3,4}
          \draw[black,line width=1.5pt] (z\radius\a)--(z\radius\b);
       \draw (z10)--(z1180);
  \end{tikzpicture}
  \raisebox{15mm}{\Large\ $\stackrel{A_3}{\longleftrightarrow}$\ }
  \begin{tikzpicture}[scale=0.35, line width=1pt,gray]
    \foreach \angle in {0,60,120,180,240,300}
       \foreach \radius in {1,2,3,4,5}
          \node[vertex] at (\angle:\radius) (z\radius\angle) {};
    \foreach \angle in {0,60,120,180,240,300}
       \foreach[count=\ind from 1] \radius in {2,3,4,5}
          \draw (z\ind\angle)--(z\radius\angle);
    \foreach \a/\b in {0/60, 120/180, 240/300}
       \foreach \radius in {2,4,5}
          \draw (z\radius\a)--(z\radius\b);
    \foreach \a/\b in {0/60, 120/180, 240/300}
       \foreach \radius in {1,3,5}
          \draw (z\radius\a)--(z\radius\b);
    \foreach \a/\b in {60/120, 180/240, 300/0}
       \foreach \radius in {1,3,4,5}
          \draw[black,line width=1.5pt] (z\radius\a)--(z\radius\b);
    \foreach \a/\b in {0/60, 60/120, 120/180, 180/240, 240/300, 300/0}
       \foreach \radius in {2}
          \draw (z\radius\a)--(z\radius\b);
          \foreach \a/\b in {0/60, 120/180, 240/300}
       \foreach \radius in {2}
          \draw[black,line width=1.5pt] (z\radius\a)--(z\radius\b);
       \draw (z10)--(z1180);
  \end{tikzpicture}\raisebox{15mm}{\Large\ \,$\stackrel{A_4}{\longleftrightarrow}$\ }
\begin{tikzpicture}[scale=0.35, line width=1pt,gray]
    \foreach \angle in {0,60,120,180,240,300}
       \foreach \radius in {1,2,3,4,5}
          \node[vertex] at (\angle:\radius) (z\radius\angle) {};
    \foreach \angle in {0,60,120,180,240,300}
       \foreach[count=\ind from 1] \radius in {2,3,4,5}
          \draw (z\ind\angle)--(z\radius\angle);
    \foreach \a/\b in {0/60, 120/180, 240/300}
       \foreach \radius in {2,4,5}
          \draw[black,line width=1.5pt] (z\radius\a)--(z\radius\b);
    \foreach \a/\b in {0/60, 120/180, 240/300}
       \foreach \radius in {1,3,5}
          \draw (z\radius\a)--(z\radius\b);
    \foreach \a/\b in {60/120, 180/240, 300/0}
       \foreach \radius in {2,4,5}
          \draw (z\radius\a)--(z\radius\b);
    \foreach \a/\b in {60/120, 180/240, 300/0}
       \foreach \radius in {1,3}
          \draw[black,line width=1.5pt] (z\radius\a)--(z\radius\b);
    \draw (z10)--(z1180);
  \end{tikzpicture}
\end{center}
\caption{Exchanging $C_5$}\label{fig:spiderC5}
\end{figure}
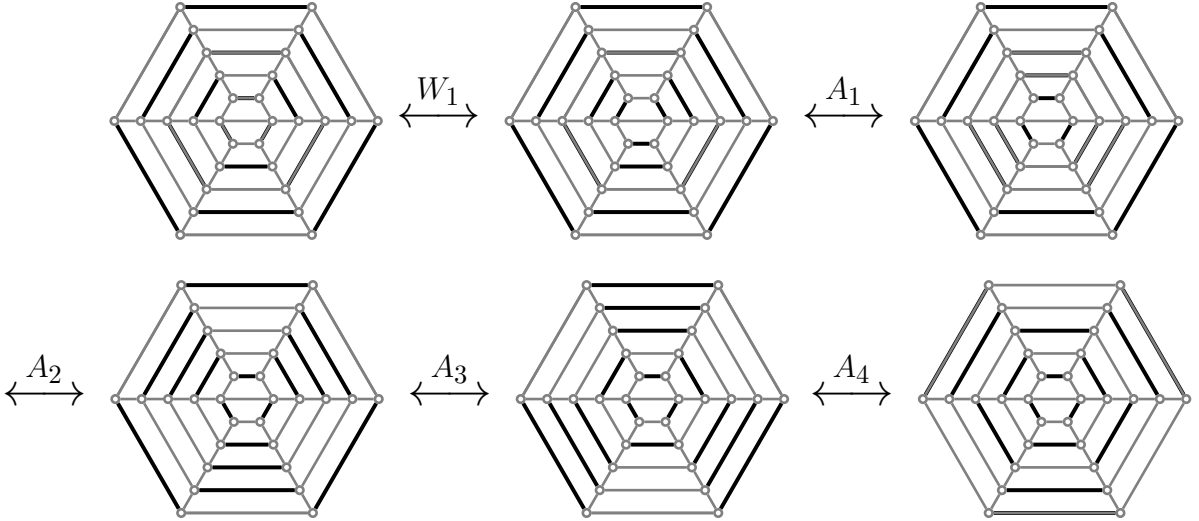

Now, for each $i=k,k-1,\ldots,2,1$, suppose we have $M_1$ in $W_i$ and $M_2$ in $W_k\setminus W_i$.  Then we can exchange $C_{i}$ in $W_i$ as above, and hence $P$ will terminate  with $X_\ell=M_2$. The minimum number of switches needed to exchange $C_i$ in $W_i$ is $t_i=(6i-4)$. Let $\ell_{\min}$ be the minimum total number of switches required  to exchange all $C_i$ ($i\in[k]$). It follows that $\ell_{\min}=\sum_{i=1}^k t_i =\sum_{i=1}^k (6i-4)=k(3k-1)$. Thus $k(3k-1)$ is the minimum length of any path $P$ in \G{G} from $M_1$ to $M_2$.
\end{proof}

\begin{theorem}\label{thm:diametern2}
There exists a sequence of graphs $G_n$, on $n$ vertices, such that the transition graph \G{G_n} is connected, but has diameter $\Omega(n^2)$.
\end{theorem}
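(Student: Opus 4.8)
The plan is to take $G_n$ to be the spider's web graph $W_k$ with $k = n/6$, defined for every $n$ divisible by $6$; this already supplies an infinite sequence of graphs with $n\to\infty$, which is all the statement asks for. First I would record the vertex count: by construction $V_k=\bigcup_{i=1}^{k}U_i$, the sets $U_i$ are pairwise disjoint, and $|U_i|=|\{u_{ij}:j\in[6]\}|=6$, so $n=|V(G_n)|=6k$ and hence $k=n/6$. Note also that $W_k$ does possess a perfect matching (for instance $M_1$), so $\G{G_n}$ is nonempty.

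Connectivity of $\G{G_n}$ is then immediate from Lemma~\ref{lem:ergodicWk}, which asserts that $\G{W_k}$ is connected for every $k\geq 1$.

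For the diameter lower bound I would invoke Lemma~\ref{lem:distanceM1M2}: the two perfect matchings $M_1,M_2$ of $W_k$ shown in Fig.~\ref{fig:spiderXY} lie at distance exactly $k(3k-1)$ in $\G{W_k}$. Since the diameter of a connected graph is at least the distance between any two of its vertices, $\diam{\G{G_n}}\ge\dist(M_1,M_2)=k(3k-1)$. Substituting $k=n/6$ gives
\[
\diam{\G{G_n}}\ \ge\ \frac{n}{6}\Bigl(\frac{n}{2}-1\Bigr)\ =\ \frac{n^2}{12}-\frac{n}{6}\ =\ \Omega(n^2),
\]
which completes the proof.

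In truth there is essentially no obstacle remaining at this point: Theorem~\ref{thm:diametern2} is a direct corollary of Lemmas~\ref{lem:ergodicWk} and~\ref{lem:distanceM1M2}, and all the real effort sits inside those lemmas — in particular the lower-bound half of Lemma~\ref{lem:distanceM1M2}, where one shows that exchanging the outermost hexagon $C_k$ forces a chain of exactly six switches per annulus to propagate outward, that the count $s(t)$ of exchanged hexagons changes only when $C_1$ is exchanged, and hence that $C_1$ must be exchanged $k$ times, after which the telescoping sum $\sum_{i=1}^{k}(6i-4)=k(3k-1)$ yields the bound. If one insisted on a statement covering every (even) $n$ rather than just multiples of $6$, one could append $O(1)$ disjoint copies of $K_2$ to $W_{\lfloor n/6\rfloor}$; each such component has a forced matching and admits no switch across components, so the transition graph is unchanged up to isomorphism, and neither the connectivity nor the $\Omega(n^2)$ order of the diameter is affected.
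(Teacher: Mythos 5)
Your proposal is correct and follows the paper's own argument exactly: take the spider's web graphs $W_k$ with $n=6k$, get connectivity from Lemma~\ref{lem:ergodicWk}, and lower-bound the diameter by $\dist(M_1,M_2)=k(3k-1)=n(n-2)/12=\Omega(n^2)$ via Lemma~\ref{lem:distanceM1M2}. The remark about padding with $K_2$ components to handle other values of $n$ is a harmless extra not needed for the statement as phrased.
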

\begin{proof}
For the sequence of graphs $W_k$ ($k=1,2,\ldots$), we have $n=6k$ and, from Lemma~\ref{lem:distanceM1M2}, \G{W_k} has diameter at least $k(3k-1)= n(n-2)/12$.
\end{proof}

However, in spite of having no sub-exponential bound on the diameter, it not clear that we can even construct graphs for which the diameter is $\Omega(n^3)$.

\section{Exact counting}\label{sec:exact}
We conclude this paper by considering the problem of exactly counting perfect matchings in some ``small'' classes of graphs.
For graphs in such classes, there is often an ordering which permits a dynamic programming type of algorithm to be employed. Such an algorithm was given in~\cite{DyJeMu17}, for example,  for monotone graphs of small width. Here we give algorithms for three graph classes defined in the Appendix.

\subsection{Cographs}

For two graphs $G=(V,E)$ and $H=(W,F)$ with $V \cap W = \emptyset$ we define
their \emph{disjoint union} $G \uplus  H = (V \cup W, E \cup F)$,
and \emph{complete join}
$G \Join H = (V \cup W, E \cup F \cup \{ vw \mid v \in V, w \in W\})$. These
two operations are complementary:
$\ol{G \Join H} = \ol{G} \uplus   \ol{H}$ and $\ol{G \uplus   H} = \ol{G} \Join \ol{H}$.

A graph $G$ is a \emph{cograph} (or \emph{complement-reducible}) if
\begin{enumerate}[itemsep=0pt,label=(\alph*)]
\item $G \simeq K_1$, that is, $G$ has one vertex and no edges, or
\item $G=G_1\uplus G_2$, where $G_1,G_2$ are cographs, or\label{cplus}
\item $G=G_1\Join G_2$, where $G_1,G_2$ are cographs.\label{cjoin}
\end{enumerate}

The class of cographs was introduced in \cite{CoLeBu81}. In particular,
it was shown in~\cite{CoLeBu81} that $G$ is a cograph if and only it is $P_4$-free,
where $P_4$ is the path with four vertices and three edges.
Since $P_4=\ol{P_4}$, this implies that $G$ is a cograph if and only if $\ol{G}$ is a cograph.

The decomposition of a cograph can be represented by a rooted binary tree $T$,
called a \emph{cotree}. The leaves of $T$ are the vertices of $G$,
and its internal nodes are marked $\uplus $ and $\Join$, corresponding to
constructions \ref{cplus} and  \ref{cjoin} above. Two vertices are
adjacent in $G$ if and only if their lowest common ancestor in $T$ is
marked $\Join$.
\begin{figure}[htbp]
  \newcommand{\st}[1]{\makebox(0,0){\rule[-4pt]{0pt}{13pt}#1}}
  \tikzset{b/.style = {circle,draw,inner sep=0pt,fill,  minimum size=1.2mm},
           w/.style = {circle,draw,inner sep=0pt,thick, minimum size=1.25mm}
     }
  \hspace*{\fill}
  \begin{tikzpicture}[xscale=0.5, yscale=0.7]
    \node (a) at ( 1,1) {\st a}; \node (c) at ( 3,1) {\st c};
    \node (b) at ( 5,1) {\st b}; \node (e) at ( 7,1) {\st e};
    \node (f) at ( 9,2) {\st f}; \node (g) at (11,2) {\st g};
    \node (d) at (13,4) {\st d};
    \node (ac) at ( 2,2) {$\Join$};  \draw  (a)--(ac)--(c);
    \node (be) at ( 6,2) {$\Join$};  \draw  (b)--(be)--(e);
    \node (fg) at (10,3) {$\Join$};  \draw  (f)--(fg)--(g);
    \node (ab) at ( 4,3) {$\uplus$}; \draw (ac)--(ab)--(be);
    \node (af) at ( 7,4) {$\uplus$}; \draw (ab)--(af)--(fg);
    \node (ad) at (10,5) {$\Join$};  \draw (af)--(ad)--(d);
  \end{tikzpicture}
  \hspace*{\fill}
  \begin{tikzpicture}[scale=1.25]
    \node[w, label=120:\st a] (a) at (120:1) {};
    \node[b, label= 60:\st b] (b) at ( 60:1) {};
    \node[b, label=180:\st c] (c) at (180:1) {};
    \node[w, label=  0:\st e] (e) at (  0:1) {};
    \node[w, label=240:\st f] (f) at (240:1) {};
    \node[b, label=300:\st g] (g) at (300:1) {};
    \tikzset{label distance=3pt}
    \node[b, label= 10:\st d] (d) at (  0,0) {};
    \draw (d)--(a)--(c)--(d)--(b)--(e)--(d)--(f)--(g)--(d);
  \end{tikzpicture}
  \hspace*{\fill}
  \caption{A cotree and the cograph represented with a bipartition into a tripod.}
  \label{fig:cograph}
\end{figure}

\subsubsection*{Recurrence equation}

For a graph $G$ and an integer $s$ let $m(G,s)$ denote the number of
matchings of $G$ that have size exactly $s$. If $G$ has $n$ vertices then
$m(G,s)=0$ holds for $s<0$ or $2s>n$. For cographs the values of $m$
can be computed recursively as follows:

\begin{description}
\item[leaf] For a cograph with one vertex we have $m(K_1,s) = 1$ if $s=0$
  and $m(K_1,s) = 0$ otherwise, because $K_1$ has only one matching, the
  empty set.
\item[union] For the disjoint union of two graphs $G$ and $H$ we have
  \[ m(G \uplus  H, s) = \sum_{i=0}^s m(G,i) \cdot m(H, s-i) \]
  because there are no edges between the vertices of $G$ and
  the vertices of $H$.
\item[join]
  For $g = |V(G)|$ and $h = |V(H)|$ we have
  \[  m(G \Join H, s) = \sum_{i=0}^{\min(s,  \lfloor g/2\rfloor)} \hq
                          \sum_{j=0}^{\min(s-i,\lfloor h/2\rfloor)} \hm
      m(G,i) \cdot m(H,j) \cdot\binom{g-2i}{k}\cdot\binom{h-2j}{k}\cdot k! \]
  where $k=s-i-j$. A matching of size $s$ in $G \Join H$ partitions into
  a matching of size $i$ in $G$, a matching of size $j$ in $H$ and a matching
  of size $k$ between the vertices of $G$ and $H$. The subgraph $G$ has
  $g-2i$ vertices that are not matched internally, that is, are unsaturated
  or matched to vertices of $H$. Similarly, the subgraph $H$ has
  $h-2j$ vertices that are not matched internally. From both sets we
  can choose exactly $k$ vertices to be matched across the join.
  The partial subgraph isomorphic to $K_{k,k}$ has $k!$ perfect matchings.
\end{description}

\subsubsection*{Algorithm}

Let $G$ be a cograph with $n$ vertices. A cotree $T$ of $G$ can be computed
in linear time \cite{HabPau45} and has $n-1$ internal nodes.  For every
cograph $H$ represented by a rooted subtree of $T$ we can compute $m(H,s)$
for all values of $s$ with $0 \le s \le n/2$. This takes time $O(1)$ for leaves,
$O(n)$ for union nodes and $O(n^2)$ for join nodes. Hence $m(G,n/2)$,
the number of perfect matchings of $G$, can be computed in time $O(n^4)$.

\subsection{Graphs with bounded treewidth}

  A pair $(T,X)$ is a \emph{tree decomposition} of a graph $G=(V,E)$
  if $T$ is a tree with node set $I$ and $X$ maps nodes of $T$ to
  subsets of $V$ (called \emph{bags}) such that
  \begin{enumerate}
  \item \label{tw:1} 
    $\forall v \in V ,\hq \exists i \in I, \hq v \in X(i)$;
  \item \label{tw:2}
    $\forall uv \in E, \hq \exists i \in I, \hq \{u,v\} \subseteq X(i)$;
  \item \label{tw:3}
    $\forall v \in V, \hq  T[\{i \mid v \in X(i)\}]$ is connected.
  \end{enumerate}
  The \emph{width} of $(T,X)$ is $\max_{i \in I} |X(i)|-1$ and the
  \emph{treewidth} of $G$ is the minimum width of a tree decomposition of $G$.
  It is denoted as $\tw(G)$.
The class of graphs with $\tw(G)\leq w$, for some constant $w$, is clearly hereditary.


In a \emph{rooted} tree decomposition we choose one node $r$ to become
the root of the tree. For all other nodes $i$, the neighbour of $i$ on
the path to $r$ is the \emph{parent} of $i$, all other neighbours are
\emph{children} of $i$. All neighbours of $r$ are children of the root.
For a rooted tree decomposition $(T,X)$ and every $i \in I$ let
$Y(i) = X(i) \cup \bigcup_{j} Y(j)$ where the union is taken over all
children of $i$. Especially we have $Y(i)=X(i)$ for all leaves $i$ of $T$,
and $Y(r)=V$.

A \emph{nice} tree decomposition of $G=(V,E)$ is a rooted tree
decomposition $(T,X)$ of $G$ where each node has at most two children,
which recursively uses the operations:
\begin{description}
\item[start] If $i$ is a leaf of $T$ then $X(i)=\emptyset$.
\item[introduce/forget] If $i$ has exactly one child $j$ then
  $X(i)$ and $X(j)$ differ by one vertex. More precisely, $i$ is an
  \emph{introduce} node if $X(i) \supset X(j)$ and $i$ is a \emph{forget} node
  if $X(i) \subset X(j)$.
\item[join] If $i$ has two children $j$ and $k$ then $X(i)=X(j)$ and
  $X(i)=X(k)$.
\item[root] The root $r$ is a node with $X(r)=\emptyset$, usually a forget
  node, but a start node if $V=\emptyset$.
\end{description}
Every graph $G=(V,E)$ has a nice tree decomposition of width $\tw(G)$
that contains $O(|V|)$ 
nodes, see Lemma 13.1.2 on page 149 of \cite{Kloks94}.

\subsubsection*{Recurrence equations} \label{ss:re}

Let $(T,X)$ be a nice tree decomposition of a graph $G=(V,E)$. For every
node $i$ of $T$ and every set $U \subseteq X(i)$ let $p(i,U)$ denote
the number of perfect matchings in the graph $G[Y(i) \sm U]$ such that
every vertex in $X(i) \sm U$ is matched to a vertex in $Y(i) \sm X(i)$.
That is, a matching containing an edge with both endpoints in $X(i)$
does not contribute to $p(i,U)$ for any $U$.
The numbers $p(i,U)$ can be computed recursively as follows:

\begin{description}
\item[start] If $i$ is a leaf of $T$ then $p(i,\emptyset)=1$ since $\emptyset$ is
  the unique perfect matching of the empty graph.
\item[introduce] If $i$ is an introduce node with child $j$ and
  $v \in X(i) \sm X(j)$ then $p(i,U) = 0$ and
  $p(i,U\cup\{v\}) = p(j,U)$ hold for all $U \subseteq X(j)$.
  By Condition \ref{tw:2} of the definition 
  the new vertex $v$ has no neighbour vertex in $Y(i) \sm X(i)$.
\item[forget] If $i$ is a forget node with child $j$ and $v \in X(j) \sm X(i)$
  then
  \[ p(i,U) = \sum_{u \in \nbh{v} \cap X(i) \sm U} p(j,U\cup\{u,v\}) \]
  holds for all $U \subseteq X(i)$.
  The vertex $v \in X(j) \sm X(i)$ must be matched to a neighbour $u \in X(i)$.
  We add the edge $uv$ to the matchings of $G[Y(j) \sm (U\cup\{u,v\})]$.
  Note that $p(i,U)=0$ if $\nbh{v} \cap X(i) \sm U = \emptyset$.
\item[join] If $i$ is a join node with children $j$ and $k$ then
  \[ p(i,U) = \sum_{\begin{smallmatrix}
                     J \cap K = \emptyset \\ J \cup K = X(i) \sm U
                    \end{smallmatrix}} p(j,U \cup J) \cdot p(k, U \cup K) \,. \]
  Condition \ref{tw:3} of the definition 
  implies $(Y(j) \sm X(j)) \cap (Y(k) \sm X(k)) = \emptyset$. In $G[Y(i)\sm U]$
  every matching edge with exactly one endpoint in $X(i) \sm U$ has
  its other endpoint either in $Y(j) \sm X(j)$ or in $Y(k) \sm X(k)$.
\end{description}

\subsubsection*{Algorithm}
The following generalises the algorithm given in~\cite{DyJeMu17} for bounded-degree monotone graphs.
Let $G=(V,E)$ be a graph with a nice tree decomposition
$(T,X)$ rooted at $r$. By the definition of $p(i,U)$ the graph $G$ has
$p(r,\emptyset)$ perfect matchings. This value can be computed recursively
by the recurrence equations above.  If the width of $(T,X)$ is
$w$ then such an algorithm will run in time $O(3^wn)$, where $n = |V|$,
by computing ``bottom up'' from the leaves to the root in the tree $T$.
In the case where $(T,X)$ is a path decomposition, that is, there are no
join nodes,  the algorithm takes only $O(w2^wn)$ time.

\subsection{Complements of chain graphs}

A bipartite graph $G=(V,E)$ with bipartition $(X,Y)$ is a \emph{chain graph}
if for every pair of vertices $u,v \in X$ we have $\Adj(u)\subseteq \Adj(v)$
or $\Adj(u) \supseteq \Adj(v)$. That is, the vertices in $X$ can be linearly
ordered such that $\Adj(x_1) \subseteq \Adj(x_2) \subseteq \dots
\subseteq \Adj(x_n)$. It is easy to see that this implies a linear ordering
on $Y$ as well such that $\Adj(y_1) \supseteq \Adj(y_2) \supseteq \dots
\supseteq \Adj(y_m)$.

For the sake of completeness, we re-derive a recurrence given in~\cite{DyJeMu17}
for the number of matchings in a chain graph. For positive integers $m$ and $n$
let $G=(V,E)$ be a chain graph, as defined above, with
$V = X_n \cup Y_m$ where $X_n = \{x_i \mid 1 \le i \le n\}$ and
$Y_m = \{y_j \mid 1 \le j \le m\}$. Let $M(i,s)$ be the number of matchings
of size exactly $s$ in the subgraph $G_i$ of $G$ induced by $X_i \cup Y_m$.
We have
\begin{align*}
  M(i,0) &= 1 && \text{for~} 0 \le i \le n \\
  M(i,s) &= 0 && \text{for~} 0 \le i < s \le n \\
  M(i,s) &= M(i-1,s) + (\deg{x_i}-s+1)M(i-1,s-1)
              && \text{for~} 1 \le s \le i \le n
\end{align*}
In the last equation, $M(i-1,s)$ counts matchings of size $s$ in $G_i$ with
$x_i$ unmatched. The other term counts all matchings of size $s$ in $G_i$
with $x_i$ is matched, as follows. Since $G_i$ is a chain graph, each matching
of size $(s-1)$ in $G_{i-1}$ can be extended to a matching of size $s$ in
$G_i$, with $x_i$ matched, in exactly $(\deg{x_i}-s+1)$ ways.

Next we consider complete graphs. Let $p(G)$ denote the number of perfect
matchings in $G$. Then $p(K_{2n+1}) = 0$ and $p(K_{2n}) = (2n)!!$, where
$(2n)!! = 2\cdot4\cdots(2n-2)(2n)$, which is $2^nn!$.

Finally let $G$ be the complement of a chain graph with bipartition $(X,Y)$.
Then $X$ and $Y$ are cliques of $G$, and if we remove their edges from $G$
we obtain a chain graph $G^{\mathrm{b}}$. For $|X|=n$ and $|Y|=m$ we have
\[ p(G) = \sum_{s=0}^{\min(n,m)} M(n,s) \cdot p(K_{n-s}) \cdot p(K_{m-s}) \]
where $M(n,s)$ is the number of matchings of size exactly $s$ in
$G^{\mathrm{b}}$. Since we can compute $M(i,s)$ for all values of $i$ and $s$
in $O(n^2)$ time using the recurrence above, $p(G)$ can be computed in this
time as well.


\begin{thebibliography}{10}

\bibitem{BacTuz90}
G.~Bacs{\'o} and Zs. Tuza.
\newblock Dominating cliques in ${P}_5$-free graphs.
\newblock {\em Periodica Mathematica Hungarica}, 21(4):303---308, 1990.

\bibitem{Blumbe12}
Olena Blumberg.
\newblock {\em Permutations with interval restrictions}.
\newblock PhD thesis, Stanford University, 2012.

\bibitem{BrLeSp99}
Andreas Brandst\"{a}dt, Van~Bang Le, and Jeremy~P. Spinrad.
\newblock {\em Graph classes: a survey}.
\newblock Society for Industrial and Applied Mathematics, Philadelphia, PA,
  USA, 1999.

\bibitem{BraMos03}
Andreas Brandstädt and Raffaele Mosca.
\newblock On the structure and stability number of ${P}_5$- and co-chair-free
  graphs.
\newblock {\em Discrete Applied Mathematics}, 132:47–65, 2003.

\bibitem{Broder}
Andrei~Z. Broder.
\newblock How hard is it to marry at random? ({O}n the approximation of the
  permanent).
\newblock In {\em 18th Annual {ACM} Symposium on Theory of Computing}, pages
  50--58, Berkeley, California, 28--30~May 1986.
\newblock Erratum in \emph{Proceedings of 20th Annual ACM Symposium on Theory
  of Computing}, p.551, 1988, Chicago, Illinois, 2--4~May 1988.

\bibitem{ChRoST06}
Maria Chudnovsky, Neil Robertson, Paul Seymour, and Robin Thomas.
\newblock The strong perfect graph theorem.
\newblock {\em Annals of Mathematics}, 164:51--229, 2006.

\bibitem{CoLeBu81}
D.~G. Corneil, H.~Lerchs, and L.~Stewart Burlingham.
\newblock Complement reducible graphs.
\newblock {\em Discrete Applied Mathematics}, 3:163--174, 1981.

\bibitem{DagLub92}
Paul Dagum and Michael Luby.
\newblock Approximating the permanent of graphs with large factors.
\newblock {\em Theoretical Computer Science}, 102(2):283 -- 305, 1992.

\bibitem{Ridder}
H.~N. de~Ridder.
\newblock Information system on graph classes and their inclusions.

\bibitem{DiGrHo01}
Persi Diaconis, Ronald Graham, and Susan~P. Holmes.
\newblock Statistical problems involving permutations with restricted
  positions.
\newblock In {\em State of the art in probability and statistics}, volume~36 of
  {\em Lecture Notes--Monograph Series}, pages 195--222. Institute of
  Mathematical Statistics, 2001.

\bibitem{DyJeMu17}
Martin Dyer, Mark Jerrum, and Haiko M\"uller.
\newblock On the switch {M}arkov chain for perfect matchings.
\newblock {\em Journal of the ACM}, 64:Article 2, 2017.

\bibitem{DyeMul15}
Martin Dyer and Haiko M\"uller.
\newblock Graph classes and the switch {M}arkov chain for matchings.
\newblock {\em Annales de la Faculté des sciences de Toulouse, S\'erie 6},
  24(4):885--933, 2015.

\bibitem{DyeMue18}
Martin Dyer and Haiko M\"uller.
\newblock Quasimonotone graphs.
\newblock \url{arxiv.org/abs/1801.06494},
January, 2018.

\bibitem{Farber83}
Martin Farber.
\newblock Characterizations of strongly chordal graphs.
\newblock {\em Discrete Mathematics}, 43:173--189, 1983.

\bibitem{GeHeSc03}
Michael~U. Gerber, Alain Hertz, and David Schindl.
\newblock ${P}_5$-free augmenting graphs and the maximum stable set problem.
\newblock {\em Discrete Applied Mathematics}, 132(1--3):109--119, 2003.

\bibitem{Golumb04}
Martin~Charles Golumbic.
\newblock {\em Algorithmic graph theory and perfect graphs}, volume~57 of {\em
  Annals of Discrete Mathematics}.
\newblock North-Holland, Amsterdam, 2004.

\bibitem{HabPau45}
Michel Habib and Christophe Paul.
\newblock A simple linear time algorithm for cograph recognition.
\newblock {\em Discrete Applied Mathematics}, 2005:183--197, 145.

\bibitem{HelHua04}
Pavol Hell and Jing Huang.
\newblock Interval bigraphs and circular arc graphs.
\newblock {\em Journal of Graph Theory}, 46:313--327, 2004.

\bibitem{JeSiMc92}
M.~Jerrum, A.~Sinclair, and B.~Mc{K}ay.
\newblock When is a graphical sequence stable?
\newblock In {\em Random Graphs, Vol. 2 (Pozn´an, 1989)}, page 101–115. Wiley,
  New York, 1992.

\bibitem{JerSin89}
Mark Jerrum and Alistair Sinclair.
\newblock Approximating the permanent.
\newblock {\em SIAM Journal on Computing}, 18(6):1149--1178, December 1989.

\bibitem{JeSiVi04}
Mark Jerrum, Alistair Sinclair, and Eric Vigoda.
\newblock A polynomial-time approximation algorithm for the permanent of a
  matrix with non-negative entries.
\newblock {\em Journal of the ACM}, 51:671--697, 2004.

\bibitem{JeVaVa86}
Mark Jerrum, Leslie Valiant, and Vijay Vazirani.
\newblock Random generation of combinatorial structures from a uniform
  distribution.
\newblock {\em Theoretical Computer Science}, 43:169--188, 1986.

\bibitem{Kastel63}
P.~W. Kasteleyn.
\newblock Dimer statistics and phase transitions.
\newblock {\em Journal of Mathematical Physics}, 4(2):287--293, 1963.

\bibitem{Kloks94}
Ton Kloks.
\newblock Treewidth -- computations and approximations.
\newblock In {\em Lecture Notes in Computer Science}, volume 842.
  Springer-Verlag, 1994.

\bibitem{Kohl99}
Ekkehard K\"ohler.
\newblock {\em Graphs without asteroidal triples}.
\newblock PhD thesis, Technische Universit\"at Berlin, 1999.

\bibitem{Lawler76}
E.~Lawler.
\newblock {\em Combinatorial optimization - networks and matroids}.
\newblock Holt, Rinehart and Winston, New York, 1976.

\bibitem{LePeWi06}
David~A. Levin, Yuval Peres, and Elizabeth~L. Wilmer.
\newblock {\em {M}arkov chains and mixing times}.
\newblock American Mathematical Society, 2006.

\bibitem{Matthe08}
James Matthews.
\newblock {\em Markov chains for sampling matchings}.
\newblock PhD thesis, Edinburgh University, 2008.

\bibitem{OkUeUn10}
Yoshio Okamoto, Ryuhei Uehara, and Takeaki Uno.
\newblock Counting the number of matchings in chordal and chordal bipartite
  graph classes.
\newblock In {\em Graph-Theoretic Concepts in Computer Science}, volume 5911 of
  {\em Lecture Notes in Computer Science}, pages 296--307. Springer, 2010.

\bibitem{Reingo08}
Omer Reingold.
\newblock Undirected connectivity in log-space.
\newblock {\em Journal of the ACM}, 55(4):17:1--17:24, 2008.

\bibitem{SpBrSt87}
Jeremy Spinrad, Andreas Brandst\"{a}dt, and Lorna Stewart.
\newblock Bipartite permutation graphs.
\newblock {\em Discrete Applied Mathematics}, 18(3):279 -- 292, 1987.

\bibitem{Valian79a}
Leslie~G. Valiant.
\newblock The complexity of computing the permanent.
\newblock {\em Theoretical Computer Science}, 8:189--201, 1979.

\bibitem{Vuskov10}
Kristina Vu\v{s}kovi\'{c}.
\newblock Even-hole-free graphs: a survey.
\newblock {\em Applicable Analysis and Discrete Mathematics}, 4:219--240, 2010.

\end{thebibliography}


\appendix
\baselineskip 12pt
\section*{Appendix: Containment of graph classes}
The hereditary classes considered above include the following. Some of these graph classes are defined (or can easily be described) by their minimal forbidden induced subgraphs. To that end,
we will define the following (non-hereditary) graph classes.
\begin{enumerate}[topsep=0pt,itemsep=-3pt,label={}]
  \item Holes:   $\IH_n  = \{C_i \mid i \in \IN, i \ge n\}$, where $C_i$ is a chordless $i$-cycle. \item Antiholes:  $\ol{\IH}_n = \{\ol{C}_i \mid i \in \IN, i \ge n\}$,  where  $\ol{C}_i$ is the complement of $C_i$.
  \item Even holes:  $\IE_n = \{C_{2i} \mid i \in \IN, 2i \ge n\}$.
  \item Odd holes:   $\IO_n = \{C_{2i+1} \mid i \in \IN, 2i+1 \ge n\}$.
  \item Suns:  $\IS_n= \{S_i \mid i \in \IN, i \ge n\}$, where $S_i$ is the $i$-sun. The graph
  $S_i$ contains a complete graph $K_i$, together a new vertex $w_e$ and edges $uw_e$, $vw_e$ for each edge $e=uv$ of a Hamilton cycle of $K_i$. The 3-sun is shown in Fig.~\ref{quasi:fig20}.
\end{enumerate}

\begin{description}[font=\sc,topsep=1pt,itemsep=0pt]
\item[\ohf] \emph{Odd hole-free graphs} are the $\IO_5$-free graphs.
\item[\ehf] \emph{Even hole-free graphs} are the $\IE_6$-free graphs. Note that
  some papers, \eg~\cite{Vuskov10}, define even hole-free graphs to
  be $\IE_4$-free.
\item[\switch] \emph{Switchable\ graphs} are defined in  Section~\ref{sec:ergodicity}.
\item[\wkch] \emph{Weakly chordal graphs}, also known as weakly
  triangulated graphs, are defined by the absence of graphs in
  $\IH_5 \cup \ol{\IH}_5$.
\item[\bip] \emph{Bipartite graphs} can be coloured by two
  colours. That is, their vertex set splits into two independent
  subsets, called colour classes or partite sets. Bipartite graphs are
  exactly the $\IO_3$-free graphs.
\item[\perf] \emph{Perfect graphs} are defined by the absence of graphs in
  $\IO_5 \cup \ol{\IO}_5$, from~\cite{ChRoST06}.
\item[\och] A graph $G$ is \emph{odd-chordal} if every even
  cycle of length at least six in $G$ has an odd chord.
\item[\chord] A graph $G$ is \emph{chordal} if every cycle of length at
  least four in $G$ has a chord. That is, chordal graphs are the
  $\IH_4$-free graphs.
\item[\cbg] A bipartite graph $G$ is \emph{chordal bipartite} if every cycle
  of length at least six in $G$ has a chord. Since every cycle in a
  bipartite graph is even, and every chord is odd, the class of
  chordal bipartite graphs is the intersection of the classes of
  odd chordal and bipartite graphs. This class is characterised by
  the forbidden set $\IO_3 \cup \IE_6$. That is, every chordless cycles in
  a chordal bipartite graph has length four.
\item[\tww] These are the classes of bounded \emph{treewidth}.
  That is, for every value of $k$ there is a class
  $\{G \mid \operatorname{tw}(G) \le k\}$. For $k=0$ this is all
  edgeless graphs, for $k=1$ all forests. For example, the permutation graph
  in Fig.~\ref{fig100} has treewidth 2.
\item[\strch] The class of \emph{strongly chordal graphs} is the
  intersection of the classes of odd chordal and chordal graphs, see
  \cite{Farber83}. This class is characterised by the forbidden set
  $\IH_4 \cup \IS_3$.
\item[\splt] The vertex set of a \emph{split graph} splits into a clique and
  an independent set. These are exactly the chordal graphs with chordal
  complement. The class is characterised by forbidden $2K_2$, $C_4$ and
  $C_5$.
\item[\conv] A bipartite graph is \emph{convex} if one of its partite sets
  can be linearly ordered such that, for each vertex in the other partite set,
  the neighbours appear consecutively.
\item[\tree] An acyclic graph is called \emph{forest}. Each connected
  component of a forest is a tree. Forests have treewidth at most one.
  Their minimal forbidden graphs are $\IH_3$.
\item[\scs] The class of \emph{strongly chordal split} graphs is the
  intersection of the classes of strongly chordal graphs and split graphs,
  characterised by the minimal forbidden subgraphs in $\IS_3 \cup \{2K_2,
  C_4, C_5\}$.
\item[\bic] A bipartite graph is \emph{biconvex} if both its partite sets
  can be linearly ordered such that all neighbourhoods appear consecutively.
\item[\perm] \emph{Permutation graphs} are the intersection
  graphs of straight line segments between two parallel lines. The
  ordering of the endpoints defines the characteristic permutation.
  The intersection model is also called matching diagram.
\item[\qmon] A graph is \emph{quasimonotone} if all its bipartitions are
  monotone.
\item[\intl] \emph{Interval graphs} are the intersection graphs of
  intervals on the real line.
\item[\mono] The class of \emph{monotone graphs} is the intersection of
  the classes of bipartite graphs and permutation graphs, see \cite{DyJeMu17}.
\item[\chp] The class of \emph{chordal permutation graphs} is the
  intersection of the classes of permutation graphs and interval graphs.
\item[\Efcb] \emph{E-free} (chordal bipartite) graphs have been
  characterised in \cite{DyeMul15}.
\item[\cogr] A graph is complement reducible, or \emph{cograph}, if it has
  at most one vertex, or is the disjoint union or the complete join of
  smaller cographs. The class is characterised by the forbidden $P_4$
  \cite{CoLeBu81}.
\item[\unii] \emph{Unit interval graphs} are the intersection graphs of
  unit-length intervals on the real line. The minimal forbidden graphs for
  this class are $\IH_4 \cup \{K_{1,3},S_3,\ol{S_3}\}$.
\item[\chain] A bipartite graph is a \emph{chain graph} if every pair of
  vertices in the same partite set has comparable neighbourhoods. This class
  is characterised by the minimal forbidden subgraphs $C_3$, $2K_2$ and
  $C_5$.
\item[\qchs] The quasi-class of disjoint unions of chain graphs.
\item[\thre] \emph{Threshold graphs}, characterised by forbidden $2K_2$,
  $C_4$ and $P_4$.
\item[Cochain] Complements of chain graphs, characterised by
  the absence of $3K_1$, $C_4$ and $C_5$.
\item[\cb] Complete bipartite graphs are characterised by the absence
  of $K_2+K_1$ (the complement of $P_3$) and $C_3$.
\item[\compl] Complete graphs are the $2K_1$-free graphs.
\item[\qcbs] Quasi-graphs of disjoint unions of complete bipartite graphs.
  The class is characterised by the absence of $P_4$, \emph{paw} (a triangle with pendant edge) and \emph{diamond} (two triangles sharing one edge).
  Every component of a graph in \qcbs\ is complete or complete bipartite~\cite{DyeMue18}.
\end{description}

The graph classes are partially ordered by inclusion.
Fig. \ref{fig:Hasse} shows a Hasse-diagram of this partial order,
restricted to the classes we consider in this paper and some others.

\begin{figure}[H]
  \centering
  \begin{tikzpicture}[xscale=4.0,yscale=1.49,font=\footnotesize\sc]
    \draw[double] (1,6)--(2,7)  (1,3)--(3,4)  (1,1)--(3,2)  (1,0)--(3,1);
    \node[class] at (1,10) (oh-f)  {\ohf};
    \node[class] at (2,10) (eh-f)  {\ehf};
    \node[class] at (1,9) (perf)  {\perf};
    \node[class] at (2,9) (noch)  {\switch\,\ergo}; \draw (eh-f)--(noch);
    \node[class] at (3,8) (wkch)  {\wkch};        \draw (eh-f)--(wkch)--(perf);
    \node[class] at (1,7) (bip)   {\bip\,\nerg};  \draw (oh-f)--(perf)--(bip);
    \node[class] at (2,7) (oddch) {\och};         \draw (noch)--(oddch);
    \node[class] at (3,7) (chord) {\chord};       \draw (wkch)--(chord);
    \node[class] at (1,6) (chbip) {\cbg\,\hard};  \draw (bip)--(chbip);
    \draw (wkch)  .. controls (1.5,7.3) .. (chbip);
    \node[class] at (2,6) (tw<w)  {\tww\,\nerg\,\easy};
    \node[class] at (3,6) (strch) {\strch};       \draw (oddch)--(strch)--(chord);
    \node[class] at (4,6) (split) {\splt\,\nerg};
    \draw (chord) to[bend left=40] (split);
    \node[class] at (1,5) (conv)  {\conv};        \draw (chbip)--(conv);
    \node[class] at (2,5) (tree)  {\tree};        \draw (tw<w)--(tree)--(strch);
    \draw (chbip) to[bend right] (tree);
    \node[class] at (3,5) (scs)   {\scs\,\hard\,\smix};
                                                  \draw (strch)--(scs)--(split);
    \node[class] at (1,4) (bic)   {\bic\,\smix};  \draw (conv)--(bic);
    \node[class] at (2,4) (perm)  {\perm};
    \node[class] at (3,4) (qmon)  {\qmon\,\rmix};
    \node[class] at (4,4) (int)   {\intl};
    \draw (strch) to[bend left=20] (int);
    \node[class] at (1,3) (mono)  {\mono};        \draw (bic)--(mono)--(perm);
    \node[class] at (3,3) (chp)   {\chp\,\smix};  \draw (perm)--(chp)--(int);
    \node[class] at (1,2) (Efcb)  {\Efcb\,\easy}; \draw (mono)--(Efcb);
    \node[class] at (2,2) (cogr)  {\cogr\,\nerg\,\easy};\draw (perm)--(cogr);
    \node[class] at (3,2) (qchs)  {\qchs};
    \draw (perm) to[bend right=25] (qchs);
    \draw (qmon) to[bend right=50] (qchs);
    \node[class] at (4,2) (unii)  {\unii};        \draw (int)--(unii);
    \draw (unii) to[bend right=20] (qmon);
    \node[class] at (1,1) (chain) {\chain\,\nstb};\draw (Efcb)--(chain);
    \node[class] at (2,1) (thre)  {\thre\,\nstb}; \draw (cogr)--(thre);
    \draw (int)   .. controls (3.3,1.6) .. (thre);
    \draw (scs)   .. controls (2.3,3.7) .. (2.3,3.0)
                  .. controls (2.3,2.6) .. (2.4,2.4)
                  .. controls (2.5,2.2) .. (2.5,1.8)
                  .. controls (2.5,1.5) .. (thre);
    \node[class] at (3,1) (qcbs) {\qcbs\,\stab};  \draw (qchs)--(qcbs)--(cogr);
    \node[class] at (4,1) (c-chn) {\cchn\,\easy}; \draw (unii)--(c-chn)--(chp);
    \node[class] at (1,0) (cb)    {\cb};          \draw (chain)--(cb)--(cogr);
    \node[class] at (3,0) (compl) {\compl};       \draw (qcbs)--(compl);
    \draw (thre)--(compl)--(c-chn);
    \draw (wkch)  .. controls (2.55,7.0) .. (2.55,6.0)
                  .. controls (2.55,5.7) .. (perm);
    \draw (oddch) .. controls (1.55,6.3) .. (1.55,6.0)
                  .. controls (1.55,4.5) .. (1.55,3.0)
                  .. controls (1.55,2.7) .. (cogr);
    \draw (oddch) .. controls (2.55,6.1) .. (2.4,5.5)
                  .. controls (2.25,4.9) .. (qmon);
    \draw[dotted]   (bip) .. controls (1.2,6.5) .. (2.5,6.5)
                          .. controls (3.8,6.5) .. (split);
    \draw[dotted] (chbip) .. controls (1.3,5.5) .. (2.0,5.5)
                          .. controls (2.8,5.5) .. (scs);
    \draw[dotted] (chain)--(thre);
  \end{tikzpicture}
  \begin{description}[itemsep=-2pt]
  \item[\nerg] This class contains graphs on which the switch
    chain is not ergodic.
  \item[\ergo] The switch chain is ergodic on all graphs in this class.
  \item[\hard] Counting perfect matchings remains \numP-complete
    when restricted to graphs in this class.
  \item[\easy] For all graphs in this class the number of
    perfect matchings can be computed exactly in polynomial time.
  \item[\smix] This class contains a sequence of graphs on which
    the switch chain mixes slowly.
  \item[\rmix] The switch chain mixes rapidly on all graphs in
    this class.
  \item[\nstb] This class contains graphs that are not P-stable.
  \item[\stab] All graphs in this class are P-stable.
  \end{description}\vspace{-1ex}
  \caption{\label{fig:Hasse}
    Containment of graph classes.
    Dotted lines indicate linked classes.
    Double lines indicate the inclusion of a class in the \qua class of its
    closure under disjoint union.
  }
\end{figure}\vspace{-4ex}

A class $\cB$ of bipartite graphs and a class $\cS$ of split graphs are
\emph{linked} if,
\begin{enumerate}[topsep=-3pt,itemsep=-3pt,label=(\alph*)]
\item for every $G$ in $\cB$, both graphs $H$ obtained from $G$ by completing
  one of its partite set belongs to $\cS$, and
\item for each graph $H$ in $\cS$, the graph $G$ obtained from $H$ by removing
  all edges between vertices in the clique belongs to $\cB$.
\end{enumerate}
If the bipartite graph $G$ has partite sets of the same size then the extra
edges in $H$ cannot be used by any perfect matching. That is, $G$ and $H$ have
exactly the same perfect matchings. If the partite sets of $G$ differ in size
then $G$ has no perfect matching. However, $H$ might have a perfect matching
if its clique contains more vertices than its independent set.
In Fig.~\ref{fig:Hasse} dotted lines indicate linked classes.
Double lines indicate the inclusion of a class $\cC$ in $\qua\cC^*$
where the graphs in $\cC^*$ are disjoint unions of graphs in $\cC$.
For further information on these classes and references to the original work
see \cite{BrLeSp99} or \cite{Golumb04}.

\end{document}